\newif\ifFullversion
\newif\ifShortversion

\Fullversiontrue % for eprint only. Not for conference version

\ifFullversion
\Shortversionfalse
\else
\Shortversiontrue
\fi
\PassOptionsToPackage{usenames,dvipsnames}{xcolor}

\ifFullversion
\documentclass[version=preprint]{iacrcc}
\else
\documentclass[conference]{IEEEtran}
\usepackage{hyperref}
\fi
\ifFullversion
\else
\usepackage{amsmath,amssymb,amsfonts,amsthm}
\newtheorem{definition}{Definition}
\newtheorem{theorem}{Theorem}
\newtheorem{remark}{Remark}

\usepackage[noabbrev,capitalise]{cleveref}
\fi

\usepackage{algorithm}
\usepackage{algpseudocode}
\usepackage{graphicx}
\usepackage{textcomp}
\usepackage{comment}
\usepackage{placeins}
\usepackage{mdframed}
\usepackage{subcaption}

\usepackage{tikz, tikz-qtree}
\usetikzlibrary{arrows.meta, positioning, calc, bending}

\usepackage[edges]{forest}

\usepackage{soul}

\usepackage[most]{tcolorbox}

\newtcbtheorem{assumption}{\bfseries Similarity Assumption}{enhanced,drop shadow={black!50!white},
  coltitle=black,
  top=0.3in,
  attach boxed title to top left=
  {xshift=1.5em,yshift=-\tcboxedtitleheight/2},
  boxed title style={size=small,colback=pink}
}{assumption}

\newtcbtheorem{application}{\bfseries Application}{enhanced,drop shadow={black!50!white},
  coltitle=black,
  top=0.3in,
  attach boxed title to top left=
  {xshift=1.5em,yshift=-\tcboxedtitleheight/2},
  boxed title style={size=small,colback=pink}
}{assumption}

\usepackage{comment}
\usepackage{enumitem}

\def\BibTeX{{\rm B\kern-.05em{\sc i\kern-.025em b}\kern-.08em
    T\kern-.1667em\lower.7ex\hbox{E}\kern-.125emX}}

\begin{document}

\title[plaintext={Towards Verifiable AI with Lightweight Cryptographic Proofs of Inference}]{Towards Verifiable AI with Lightweight Cryptographic Proofs of Inference$^\star$}
%\title{Towards Verifiable AI with Lightweight Proofs of Inference from Model Separation}

\ifFullversion
\addauthor[inst=1,
           %orcid={0000-1111-2222-3333},
           email={panchuri@offchainlabs.com}]{Pranay Anchuri}
\addauthor[inst={1,2},
           %orcid={1111-2222-3333-4444},
           email={binarywhalesinternaryseas@gmail.com}]{Matteo Campanelli}
\addauthor[inst=3,
           %orcid={1111-2222-3333-4444},
           email={pcesaretti@gradcenter.cuny.edu}]{Paul Cesaretti}
\addauthor[inst={1,3,4},
           %orcid={1111-2222-3333-4444},
           email={rosario@ccny.cuny.edu}]{Rosario Gennaro}
\addauthor[inst={3,4},
           %orcid={1111-2222-3333-4444},
           email={tjois@ccny.cuny.edu}]{Tushar M. Jois}
\addauthor[inst=4,
           %orcid={1111-2222-3333-4444},
           email={hkayman@ccny.cuny.edu}]{Hasan S. Kayman}
\addauthor[inst=3,
           %orcid={1111-2222-3333-4444},
           email={tozdemir@ccny.cuny.edu}]{Tugce Ozdemir}

\addaffiliation{Offchain Labs}
\addaffiliation{University of Tartu, Estonia}
\addaffiliation{CUNY Graduate Center}
\addaffiliation{City College of New York}

\authorrunning{Anchuri, Campanelli, Cesaretti, Gennaro, Jois, Kayman, Ozdemir}

\else
\author{\IEEEauthorblockN{Anonymous Authors}}

%\author{\IEEEauthorblockN{1\textsuperscript{st} Given Name Surname}
% \IEEEauthorblockA{\textit{dept. name of organization (of Aff.)} \\
% \textit{name of organization (of Aff.)}\\
% City, Country \\
% email address or ORCID}
% \and
% \IEEEauthorblockN{2\textsuperscript{nd} Given Name Surname}
% \IEEEauthorblockA{\textit{dept. name of organization (of Aff.)} \\
% \textit{name of organization (of Aff.)}\\
% City, Country \\
% email address or ORCID}
% \and
% \IEEEauthorblockN{3\textsuperscript{rd} Given Name Surname}
% \IEEEauthorblockA{\textit{dept. name of organization (of Aff.)} \\
% \textit{name of organization (of Aff.)}\\
% City, Country \\
% email address or ORCID}
% \and
% \IEEEauthorblockN{4\textsuperscript{th} Given Name Surname}
% \IEEEauthorblockA{\textit{dept. name of organization (of Aff.)} \\
% \textit{name of organization (of Aff.)}\\
% City, Country \\
% email address or ORCID}
% \and
% \IEEEauthorblockN{5\textsuperscript{th} Given Name Surname}
% \IEEEauthorblockA{\textit{dept. name of organization (of Aff.)} \\
% \textit{name of organization (of Aff.)}\\
% City, Country \\
% email address or ORCID}
% \and
% \IEEEauthorblockN{6\textsuperscript{th} Given Name Surname}
% \IEEEauthorblockA{\textit{dept. name of organization (of Aff.)} \\
% \textit{name of organization (of Aff.)}\\
% City, Country \\
% email address or ORCID}
% }

\fi

\maketitle

\renewcommand{\thefootnote}{$\star$}
\footnotetext{This work has been accepted for publication at the IEEE Conference on Secure and Trustworthy Machine Learning (SaTML). The final version will be available on IEEE Xplore.}
\renewcommand{\thefootnote}{\arabic{footnote}}

\newcommand{\needcite}{{\color{red}[{\bf Citation needed.}]\PackageWarning{colornotes}{Citation needed}}}
\newcommand{\colornote}[3]{{\color{#2}{[{\bf #1}: #3]}}\PackageWarning{colornotes}{Outstanding note from #1}}

% Paragraph headers
\newcommand{\parhead}[1]{\smallskip\noindent\textbf{#1}}
\newcommand{\subparhead}[1]{\smallskip\noindent\textit{#1}}

% Define your own commands!
\definecolor{lavender}{RGB}{125,85,199}
\newcommand{\tushar}[1]{{\colornote{Tushar}{lavender}{#1}}}

\definecolor{blue}{RGB}{0,0,225}
\newcommand{\paul}[1]{{\colornote{Paul}{orange}{#1}}}

\definecolor{yellow}{RGB}{150,150,0}
\newcommand{\kayman}[1]{{\colornote{Kayman}{yellow}{#1}}}
\newcommand{\kaymans}[1]{{{\sout{#1}}}}

\definecolor{green}{RGB}{34,139,34}
\newcommand{\tugce}[1]{{\colornote{Tugce}{green}{#1}}}

\definecolor{randomcolor}{RGB}{115,155,207}
\newcommand{\pranay}[1]{{\colornote{Pranay}{randomcolor}{#1}}}

\newcommand{\matteo}[1]{\colornote{Matteo}{Maroon}{#1}}

\newcommand{\act}{\ensuremath{\mathsf{Act}}}

\newcommand{\weight}{\ensuremath{\mathsf{Weight}}}

\newcommand{\bias}{\ensuremath{\mathsf{Bias}}}

\newcommand{\ar}{\ensuremath{\mathsf{Arch}}}

\newcommand{\stoch}{\ensuremath{\mathsf{SGD}}}

\newcommand{\test}{\ensuremath{\mathsf{PathTest}}}

\newcommand{\commit}{\ensuremath{\mathsf{Commit}}}

\newcommand{\open}{\ensuremath{\mathsf{Open}}}

\newcommand{\verify}{\ensuremath{\mathsf{Verify}}}

\newcommand{\layer}{\mathcal{L}}

\begin{abstract}
When large AI models are deployed as cloud-based services, clients have no guarantee that responses are correct or were produced by the intended model. Rerunning inference locally is infeasible for large models, and existing cryptographic proof systems---while providing strong correctness guarantees---introduce prohibitive prover overhead (e.g., hundreds of seconds per query for billion-parameter models). We present a verification framework and protocol that replaces full cryptographic proofs with a lightweight, sampling-based approach grounded in statistical properties of neural networks. We formalize the conditions under which trace separation between functionally dissimilar models can be leveraged to argue the security of verifiable inference protocols. The prover commits to the execution trace of inference via Merkle-tree-based vector commitments and opens only a small number of entries along randomly sampled paths from output to input. This yields a protocol that trades soundness for efficiency, a tradeoff well-suited to auditing, large-scale deployment settings where repeated queries amplify detection probability, and scenarios with rationally incentivized provers who face penalties upon detection. Our approach reduces proving times by several orders of magnitude compared to state-of-the-art cryptographic proof systems, going from the order of minutes to the order of milliseconds, with moderately larger proofs. Experiments on ResNet-18 classifiers and Llama-2-7B confirm that common architectures exhibit the statistical properties our protocol requires, and that natural adversarial strategies (gradient-descent reconstruction, inverse transforms, logit swapping) fail to produce traces that evade detection. We additionally present a protocol in the refereed delegation model, where two competing servers enable correct output identification in a logarithmic number of rounds.
\end{abstract}

\ifFullversion
\keywords{Verifiable AI, Trustworthy AI, Neural Networks, Proofs of Inference, Model Separation.}
\else
\begin{IEEEkeywords}
Verifiable AI, Trustworthy AI, Neural Networks, Proofs of Inference, Model Separation.
\end{IEEEkeywords}
\fi

\newcommand{\qry}{\mathsf{qry}}
\newcommand{\outfn}{\mathsf{out}}
\newcommand{\model}{\mathcal{M}}
\newcommand{\Smodel}{\mathcal{S}_\mathsf{model}} %Domain for the model
\newcommand{\Sqry}{\mathcal{S}_\mathsf{query}}
\newcommand{\Sout}{\mathcal{S}_\mathsf{out}} 
\newcommand{\Dmodel}{\mathcal{D}_\mathsf{model}} %Distribution for the model
\newcommand{\Dqry}{\mathcal{D}_\mathsf{query}}

\newcommand{\Eval}{\mathsf{EvalTrace}}
\newcommand{\EvStr}{\mathsf{trc}} % String that describes an evaluation
\newcommand{\trace}{\EvStr}
\newcommand{\htrace}{\mathsf{htrc}}
\newcommand{\advtrace}{\widetilde{\EvStr}}
\newcommand{\yields}[1]{\rightsquigarrow{#1}}

\newcommand{\advmodel}{\widetilde{\model}}
\newcommand{\advDmodel}{\widetilde{\mathcal{D}}_\mathsf{model}}

\newcommand{\adv}{\mathcal{A}}

\newcommand{\NN}{\mathbb{N}}
\newcommand{\RR}{\mathbb{R}}

\newcommand{\sndeps}{\varepsilon}
\newcommand{\teps}{\varepsilon_\mathsf{tst}}

\newcommand{\bit}{\{ 0,1 \}}
\newcommand{\bits}{\bit^*}

\newcommand{\multilinePr}[2]{\Pr\left[\begin{aligned}#1\end{aligned} \;:\; \begin{aligned}#2\end{aligned}\right]}
\newcommand{\multilinePrSingle}[2]{\Pr_{#2}\left[\begin{aligned}#1\end{aligned}\right]}

\newcommand{\genparams}{\mathsf{GenParams}}
\newcommand{\pp}{\mathsf{pp}}
\newcommand{\prove}{\mathsf{Prove}}
\newcommand{\testTrace}{\mathsf{TestTrace}}
\newcommand{\modelcommit}{\mathsf{CommitToModel}}

\newcommand{\gentest}{\mathsf{Test}}

\newcommand{\sample}{\leftarrow_{\$}}

% Delta separation function
\newcommand{\Dfn}{\Delta}

\newcommand{\dsepT}{\delta^{\mathsf{trace}}_\mathsf{sep}} % for traces
\newcommand{\dsepO}{\delta^{\mathsf{out}}_\mathsf{sep}} % for outputs
\newcommand{\esep}{\varepsilon_\mathsf{sep}}

\newcommand{\pivc}{\Pi_{\mathsf{VecCom}}}
\newcommand{\piIdeal}{\Pi_\mathsf{ideal}}

\newcommand{\negl}{\mathsf{negl}}

\newcommand{\cm}{\mathsf{cm}}
\newcommand{\cmM}{\cm_\model}

\newcommand{\hash}{\mathsf{H}}

\newcommand{\Spos}{\mathcal{S}^{\mathsf{(pos)}}}
\newcommand{\SposP}{\bar{\mathcal{S}}^{\mathsf{(pos)}}}

\newcommand{\chalSpace}{\mathcal{C}}

\newcommand{\far}{\not\sim}

\newcommand{\paranoindent}[1]{\smallskip \noindent\textit{#1}}

\newcommand{\alphab}{\Sigma}

\newcommand{\nM}{{n_\mathsf{model}}}
\newcommand{\nQ}{{n_\qry}}
\newcommand{\nTrc}{{n_\mathsf{trc}}}
\newcommand{\idxsOut}{{\mathsf{Idxs}_\textit{out}}}

\section{Introduction}
\label{sec:intro}

The landscape of modern computing is dominated by deep neural networks which have seen tremendous success in a wide range of problem domains from biology and protein folding \cite{jumper2021highly}, to image classification and object arrangement in complex scenes with computer vision \cite{krizhevsky2012imagenet, girshick2014rich} culminating in complex general-purpose tasks via large language models \cite{geminiteam2025geminifamilyhighlycapable, deepseekai2025, openai2024gpt4technicalreport}. Due to their ubiquitous use as an external service via cloud-based platforms or APIs, there is urgency in developing protocols and methods to help safeguard this interface. In practice a client of these models will only receive an answer from an untrusted, possibly corrupted, server who could provide incorrect responses. A naive solution to this problem would be to allow the user to have complete access to the underlying model by downloading it from a verified site and executing the query on the user's home device. This is clearly infeasible as the required computing resources for these networks are too large for the average user's systems. 
The question, therefore, is:
\begin{center}
\emph{Can we efficiently verify the computation of a large AI model performed by an untrusted server with less resources than what is needed to actually run them?} 
\end{center}

%The general problem of \emph{Verifiable Outsourced Computation} has been extensively studied in the Theoretical Computer Science and Cryptography research communities for the last forty years, starting from the introduction of the concept of Interactive Proofs~\cite{GMR,Babai85}. 

Most proposals to build verifiable AI systems use Cryptographic Proofs, in particular \emph{Succinct Non-interactive Arguments of Knowledge} (or SNARKs)~\cite{ITCS:BCCT12,GGPR,parno2016pinocchio,groth2016size} (see, e.g.,~\cite{wang2022ezdps, weng2021mystique, kang2022scaling, feng2021zen,lee2024vcnn, liu2021zkcnn, weng2021mystique,qu2025zkgpt, sun2024zkllm, singh2024enhancing}). SNARKs allow the client to verify the correctness of an arbitrary computation from a short proof, achieving remarkably efficient verification (e.g., constant-size proofs and verification time, regardless of the size of the computation being proven). 

The challenge with using SNARKs, however, is that they incur a large overhead for proof generation, typically $1000\times$ or larger~\cite{Tha22a}. The reason is that they introduce expensive cryptographic operations for \emph{every} elementary computation step (for example, a gate in a circuit representation). In practice this leads to unacceptable delays in computing proofs: consider zkLLM~\cite{CCS:SunLiZha24}, where producing a proof of correct inference for a large language model with up to $13$B parameters takes around $15$ minutes. This is particularly problematic since verifiable inferences must be done at almost zero overhead, as predictions are computed many times and they are usually very fast to produce, so the proof computation has to be comparably fast\footnote{
This may not necessarily hold for a scheme for verifiable \textit{training}: training is a much more computationally intensive process, performed infrequently, and can tolerate slower proof production times. In our work we exclusively focus on proofs of inference, from an already correctly trained model.}. So, while work is ongoing to reduce the burden of using cryptographic approaches, full proofs on machine learning models using them still appear to be infeasible.

\parhead{Further motivation.} We briefly point out that efficient proofs for inference can be used to solve many other AI related security questions beyond outsourced computation. One of particular interest is \emph{model transparency}: the ability to prove that an AI system---one that has been ``certified'' as having been trained correctly over certain ``legal'' data---is indeed the one being used in the field to compute predictions. Another is \emph{fairwashing} in which a provider uses a biased model for users but substitutes an unbiased one when under audit~\cite{pmlr-v97-aivodji19a}. See also \cref{sec:motivation-other-model-snd}.

\subsection{Our Contributions}

We present a verification framework and protocols for \emph{Verifiable Inference} of large AI models. {\bf We take a lightweight approach to cryptographic proofs of inference and as a result are several orders of magnitude faster than previous protocols (e.g., milliseconds rather than minutes)}. Our technical approach has not been previously used in this setting and is of independent interest.

\smallskip
\noindent
{\bf A formal framework for verifiable inference from trace separation:}
One of our main conceptual contributions is a theoretical framework that can be used by protocol designers to leverage statistical properties of computation traces to argue the security of verifiable inference protocols. Our starting observation, grounded in results in the AI literature on representational similarity, is that two models that produce sufficiently different outputs must also differ in their internal activation traces. We formalize this as a \emph{trace separation} property and show that, combined with an appropriate testing mechanism, it is sufficient to guarantee security (\cref{sec:framework}). Our framework is directly applicable to a relatively limited~---though still of practical interest~---security notion (which we call \emph{other-model soundness}), under which the client can detect whether the server used a model whose behavior differs substantially from the intended one.

\smallskip
\noindent
{\bf A concrete protocol and its experimental evaluation:}
We instantiate our framework with a concrete scheme that detects cheating via a small number of checks on random parts of the entire computation (in contrast to Cryptographic Proofs that check the correctness of every step yielding a proof of exact, rather than approximate, correctness). Key advantages of our construction are both simplicity and efficiency: the prover simply commits to the execution trace of the computation and then opens a small number of entries\footnote{One consequence of this ``cleartext opening'' of values is that our protocol is not \emph{zero-knowledge} and reveals information about the underlying model, still preserving our goal of enforcing correctness. If model confidentiality is an additional goal, in \cref{sec:concl} we discuss possible ways to add ZK to our protocol, that should still perform much better than current state of the art protocols.}. This enables our improvement in proving time, of several orders of magnitude, compared to prior solutions.

We extensively validate our protocol's assumptions and performance through a series of experiments. We show that common neural network architectures~-- classifiers and LLMs~-- have the statistical properties necessary to enable the testing of the correctness of the computation. We also investigate to which extent our techniques provide security in a stronger adversarial setting where cheating may follow \textit{arbitrary} strategies, rather than merely substituting a different model, and show that our test approach is resistant to natural attacks of this kind. Additionally, we provide a performance comparison on commodity hardware and highlight our significant efficiency gains over prior work. An intuitive explanation of our protocol is presented in \cref{sec:intuition}, with the full protocol given in \cref{sec:constr}.

\smallskip
\noindent
{\bf A protocol in the refereed model:} As an additional contribution, we present a protocol in the \emph{refereed model}~\cite{canetti2013refereed}, which assumes that two servers compute the inference and we are guaranteed that at least one does so honestly. If the servers make competing different claims, we present an efficient protocol to determine which one is correct which requires only a logarithmic (in the size of the model) number of steps. This can be found in \cref{sec:ref}.

\subsection{Related Work}
\label{sec: background}

Having discussed cryptographic proof approaches in the Introduction, we focus here on alternative methods.
The recent work in~\cite{svip} introduced SVIP, a verifiable inference protocol that works by training a secret-dependent proxy task. A trusted intermediary generates and distributes secrets to users. However, this approach for verifiable inference has notable overheads: i) it requires model-specific training that can be resource-intensive. ii) The protocol's security relies on active refresh of secrets to defend against secret recovery attacks, in which the provider attempts to steal the secrets by posing as a user and exploiting the proxy task. In contrast, our protocol provides a more general verification method. It operates directly between a prover and a verifier and does not depend on an intermediary to manage secrets. Furthermore, our protocol directly verifies the integrity of the inference process without requiring proxy tasks. This allows our protocol to naturally apply to a wide class of general deep neural networks beyond just LLMs.

AI Watermarking (see~\cite{AIwaterm} for a survey) embeds certain signals into the output of an AI model to identify that content as AI generated, through later detection. These techniques solve a different problem than the one we address: they verify that \emph{some} AI model was used, while our goal is to verify that a specific model was computed correctly.

Our main protocol can be seen as an example of \emph{proof of proximity} introduced in~\cite{RVW13}. Most of the literature for proofs of proximity is focused on increasing the efficiency of the Verifier (e.g., by allowing proof verification without even reading the entire input, but accessing only a sublinear portion of it). We, instead, focus on reducing the prover overhead.

Our protocol in the refereed model follows the basic bisection blueprint first proposed in~\cite{canetti2013refereed}. This paradigm has been successfully adopted by so-called optimistic rollup protocols to scale executions on the Ethereum blockchain (e.g., Arbitrum \cite{kalodner2018arbitrum}, Optimism \cite{armstrong2021ethereum}).
The application of the refereed model to verify the outputs of large-scale AI systems was initially proposed in~\cite{opt-ai}. While that work correctly identifies the inherent suitability of the refereed computation model for this domain---given the sublinear verification complexity and the often-extensive computational traces of large AI models---it does not provide empirical results or an experimental validation of this hypothesis, which we do.

\subsection{Outline}
The rest of the paper is organized as follows.
\cref{sec:intuition} provides an intuitive technical overview of our approach, including the key ideas behind our construction and the security notions we consider.
\cref{sec:prelim} introduces preliminaries and notation.
In \cref{sec:framework} we present a formal framework for verifiable inference, including our security definitions, a theoretical treatment linking machine learning properties to soundness, and a generic compiler from idealized to cryptographic constructions using vector commitments.
\cref{sec:constr} describes our specific construction and discusses its security properties.
\cref{sec:om-exps} and \cref{sec:fs-exps} present our experimental evaluation.
\cref{sec:perf} reports on our performance evaluation and a comparison with SNARK-based approaches to verifiable inference.
We conclude in \cref{sec:concl}.

%\section{An Informal Description of Our Approach}
\section{An Intuitive Technical Overview}
\label{sec:intuition}

A neural network is a directed acyclic graph (DAG), where nodes represent neurons and edges are the weighted connections between them. The network is configured by a training process that yields a vector of weights  $\left( w_{ij} \right)_{ij}$ assigned to the directed edge connecting neuron $i$ to neuron $j$.
The state of each neuron is represented by a scalar value known as its activation. During evaluation, the input data is encoded as the activation values of the network's source nodes. The activation of a subsequent neuron $j$ is determined by a function of the weights and the activation values of its parent neurons. The standard formulation for this process is:

$$ a_j = \phi\Big(\sum_{i\in G_j} w_{ij}a_i \Big)$$

where $a_j$ is the activation of neuron $j$, $G_j$ is the set of indices of its parent neurons, $w_{ij}$ is the weight on the edge from parent neuron $i$ to neuron $j$, and $\phi$ is a non-linear activation function.\footnote{Formulations often include a bias term $b_i$ added to $w_{ij}a_i$, but for ease of vectorization we consider $b_i$ an additional weight with an activation of 1. Also, neural networks can use different $\phi$, but we fix it here for simplicity.} The final output of the neural network is represented by the activation values of the terminal neurons in the DAG.
We refer to the list $\left( a_1, a_2, \dots \right)$ of activation values as the \textit{execution trace} of the model (or simply as trace).

\paragraph{Our methodology---simpler tests from similarity.} As stated earlier, our goal is to design an \textit{efficient} verification scheme for neural network inference, even for large AI models. In order to avoid the massive prover overhead of cryptographic proofs, we need to completely depart from their usual approach, which encodes the correctness of each step of the computation in an algebraic ``circuit-checking'' engine~\cite{nitulescu2020zk}.

Our starting point was the body of work in the AI research literature that has developed metrics to compare neural network representations (see~\cite{Klabunde_2025} for a comprehensive survey). The idea of comparing internal representations originated in neuroscience, where pairwise dissimilarity responses of brain activity across various stimuli are used to build structured comparisons \cite{kriegeskorte2008rsa}, and this technique was later translated into AI to compare internal representations of deep neural network structures~\cite{Kornblith2019SimilarityRevisited}. One of the motivations of this research is \emph{Explainable AI}, i.e., the desire to understand what happens during the execution of a neural network that allows it to ``understand'' or ``generate knowledge''.

Of particular interest to us were two notions of neural network similarity: \emph{functional similarity} and \emph{representational similarity}. Functional similarity measures the difference in the probability distributions of the outputs, while representational similarity focuses on quantifying similarity in the distributions of the internal activations. Importantly, as Klabunde et al. note~\cite{Klabunde_2025}, \emph{``When functional similarity measures indicate dissimilarity, representations must be dissimilar at some layer.''} In other words, two models that produce sufficiently different output distributions must be different also in the distributions of their internal activations.

This suggests an intuitive way to test the correctness of a claimed output. Assume a client asks an untrusted server to compute a model $\model_1$ on input $\qry$ and the server responds with $y$. If the server is lying by using a substantially different AI model $\model_2$ the client can hopefully detect this by testing the two internal representations on a small number of samples. While exploring the literature on similarity measures, however, we could not find a similarity notion which would be suitable for this task (i.e., which guarantees that if the two internal distributions are sufficiently far apart then we can detect it from a few samples).  We therefore developed our own approach.

\begin{figure}[t]
	\centering
	\begin{tikzpicture}[font=\small, >=Latex, scale=1.15, transform shape]
		
		% frame 
		\draw[blue!15, line width=0.8pt, rounded corners=3pt] (-0.3,1.10) rectangle (4.3,-1.0);
		
		% coordinates
		\node[circle, draw, fill=blue!8, inner sep=2pt] (N00) at (0,0.8) {};
		\node[circle, draw, fill=blue!8, inner sep=2pt] (N01) at (0,0.4) {};
		\node[circle, draw, fill=blue!8, inner sep=2pt] (N02) at (0,0) {};
		\node[circle, draw, fill=blue!8, inner sep=2pt] (N03) at (0,-0.4) {};
		\node[circle, draw, fill=blue!8, inner sep=2pt] (N04) at (0,-0.8) {};
		
		% L1
		\node[circle, draw, fill=blue!8, inner sep=2pt] (N10) at (1,0.8) {};
		\node[circle, draw, fill=blue!8, inner sep=2pt] (N11) at (1,0.4) {};
		\node[circle, draw, fill=blue!8, inner sep=2pt] (N12) at (1,0) {};
		\node[circle, draw, fill=blue!8, inner sep=2pt] (N13) at (1,-0.4) {};
		\node[circle, draw, fill=blue!8, inner sep=2pt] (N14) at (1,-0.8) {};
		
		% L2
		\node[circle, draw, fill=blue!8, inner sep=2pt] (N20) at (2,0.8) {};
		\node[circle, draw, fill=blue!8, inner sep=2pt] (N21) at (2,0.4) {};
		\node[circle, draw, fill=blue!8, inner sep=2pt] (N22) at (2,0) {};
		\node[circle, draw, fill=blue!8, inner sep=2pt] (N23) at (2,-0.4) {};
		\node[circle, draw, fill=blue!8, inner sep=2pt] (N24) at (2,-0.8) {};
		
		% L3
		\node[circle, draw, fill=blue!8, inner sep=2pt] (N30) at (3,0.8) {};
		\node[circle, draw, fill=blue!8, inner sep=2pt] (N31) at (3,0.4) {};
		\node[circle, draw, fill=blue!8, inner sep=2pt] (N32) at (3,0) {};
		\node[circle, draw, fill=blue!8, inner sep=2pt] (N33) at (3,-0.4) {};
		\node[circle, draw, fill=blue!8, inner sep=2pt] (N34) at (3,-0.8) {};
		
		% L4
		\node[circle, draw, fill=blue!8, inner sep=2pt] (N40) at (4,0.8) {};
		\node[circle, draw, fill=blue!8, inner sep=2pt] (N41) at (4,0.4) {};
		\node[circle, draw, fill=blue!8, inner sep=2pt] (N42) at (4,0) {};
		\node[circle, draw, fill=blue!8, inner sep=2pt] (N43) at (4,-0.4) {};
		\node[circle, draw, fill=blue!8, inner sep=2pt] (N44) at (4,-0.8) {};
		
		% gray connections 
		\foreach \i in {0,...,4} {
			\foreach \j in {0,...,4} {
				\draw[gray!30, line width=0.3pt] (N0\i) -- (N1\j);
				\draw[gray!30, line width=0.3pt] (N1\i) -- (N2\j);
				\draw[gray!30, line width=0.3pt] (N2\i) -- (N3\j);
				\draw[gray!30, line width=0.3pt] (N3\i) -- (N4\j);
			}
		}
		
		% Blue connections to N12
		\draw[blue!60, line width=1pt] (N00) -- (N12);
		\draw[blue!60, line width=1pt] (N01) -- (N12);
		\draw[blue!60, line width=1pt] (N02) -- (N12);
		\draw[blue!60, line width=1pt] (N03) -- (N12);
		\draw[blue!60, line width=1pt] (N04) -- (N12);
		
		% Blue connections to N20
		\draw[blue!60, line width=1pt] (N10) -- (N20);
		\draw[blue!60, line width=1pt] (N11) -- (N20);
		\draw[blue!60, line width=1pt] (N12) -- (N20);
		\draw[blue!60, line width=1pt] (N13) -- (N20);
		\draw[blue!60, line width=1pt] (N14) -- (N20);
		
		% Blue connections to N33
		\draw[blue!60, line width=1pt] (N20) -- (N33);
		\draw[blue!60, line width=1pt] (N21) -- (N33);
		\draw[blue!60, line width=1pt] (N22) -- (N33);
		\draw[blue!60, line width=1pt] (N23) -- (N33);
		\draw[blue!60, line width=1pt] (N24) -- (N33);
		
		% Blue connections to N41
		\draw[blue!60, line width=1pt] (N30) -- (N41);
		\draw[blue!60, line width=1pt] (N31) -- (N41);
		\draw[blue!60, line width=1pt] (N32) -- (N41);
		\draw[blue!60, line width=1pt] (N33) -- (N41);
		\draw[blue!60, line width=1pt] (N34) -- (N41);
		
		% Red path
		\draw[red, line width=1.5pt] (N01) -- (N12);
		\draw[red, line width=1.5pt] (N12) -- (N20);
		\draw[red, line width=1.5pt, dashed] (N20) -- (N33);
		\draw[red, line width=1.5pt] (N33) -- (N41);
		
		% Highlighted nodes 
		\node[circle, draw=red, line width=1.1pt, minimum size=15pt, inner sep=0.5pt, fill=white, font=\fontsize{7}{8}\selectfont] at (N01) {$\qry$};
		\node[circle, draw=red, line width=1.1pt, minimum size=15pt, inner sep=0.5pt, fill=white, font=\fontsize{7}{8}\selectfont] at (N12) {$a_1$};
		\node[circle, draw=red, line width=1.1pt, minimum size=15pt, inner sep=0.5pt, fill=white, font=\fontsize{7}{8}\selectfont] at (N20) {$a_2$};
		\node[circle, draw=red, line width=1.2pt, minimum size=15pt, inner sep=.01pt, fill=white] at (N33) {$\scriptstyle a_{L\!-\!1}$};
		\node[circle, draw=red, line width=1.1pt, minimum size=15pt, inner sep=0.5pt, fill=white, font=\fontsize{7}{8}\selectfont] at (N41) {$y$};
		
		% Legend
		\node[anchor=north west, text width=2.6cm, font=\scriptsize, align=left, draw=blue!30, line width=0.6pt, rounded corners=2pt, fill=white, inner sep=3pt] at (4.55,0.8) {
			\textbf{Legend:}\\
			\textcolor{red}{---} Random verification\\
			\phantom{---} path\\
			\textcolor{blue!60}{---} All incoming\\
			\phantom{---} connections\\
			\textcolor{gray!40}{---} Other connections % a tad darker for visibility
		};
	\end{tikzpicture}
    \caption{\fussy Path-based verification protocol. The prover commits to the full output vector for  query $\qry$. Then, the verifier samples one path, starting from a single output neuron, picks one random neuron per layer (red path) and checks each selected neuron's activation using all incoming connections (blue) from parent neurons in the previous layer. At each step, the new sampled neuron is picked among the parents.  Counterfeit outputs  lead to detectable inconsistencies with high probability.}
	% \caption{Path-based verification protocol. The prover commits to output $y_A$ for challenge query $\qry$ \matteo{XXX: The output $y_A$ should be a vector (if it's the whole output) but is visualized as a single value}. Then, the verifier starts from the last layer and picks one random neuron per layer (red path) and checks each selected neuron's activation using all incoming connections (blue) from parent neurons in the previous layer. At each step, the new sampled neuron is picked among the parents. Any counterfeit output will lead to detectable inconsistencies.}
	\label{fig:path-test}
\end{figure}
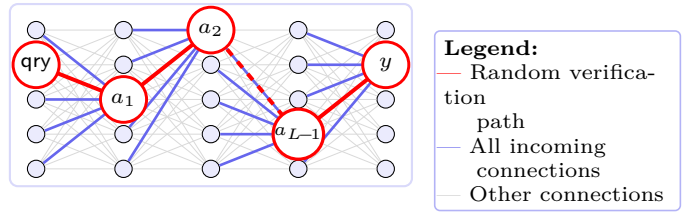

\paragraph{Towards a construction.} Our approach begins with a simple intuition: if two models $\model_1$ and $\model_2$ produce outputs that diverge substantially under the same input distribution, then their internal computations must also differ in many places. In particular, for any specific input, many activations in the trace of $\model_1$ cannot be reproduced correctly using the weights of \emph{some other} model $\model_2$ whose output behavior diverges significantly from that of $\model_1$ (and $\model_2$ need not be known to the client). This observation leads to a simple test for detecting whether the internal activations are inconsistent—and, in turn, whether the server’s claimed output is correct\footnote{
Assume here for simplicity that the two models have the same architecture; we will see later that this assumption is not needed---see footnote~\ref{foot:notsame}. \label{foot:same-arch}}.

Before describing the test, we recall that in a verifiable inference protocol, the client validates a claimed output using two inputs to the protocol: \textit{(i)} a ``ground truth’’ containing the weights and parameters of $\model_1$, and \textit{(ii)} the messages sent by the server\footnote{
This is actually true in {\em any} Verifiable Computation protocol, where Prover and Verifier must agree on the ``ground truth'' of what function is being computed. Sometimes, like in our case, the description of the function is too large for the Verifier to handle and we provide the ground truth via some cryptographic tool, as we will see soon.}. 
The second input may, of course, be adversarial (for clarity, we sometimes denote potentially malicious values with a tilde).
For a simpler treatment, we first describe our solution in an \textit{idealized} setting:
%where we will ignore certain issues\ In this model, 
we assume that at the beginning of the interaction the server gives the client oracle access to a long message, i.e., the client has random access to this message and can query only the parts it needs to remain efficient. The client has analogous random-access capability for its ground truth\footnote{This idealized presentation is standard in the cryptographic literature~\cite{kilian,marlin,znarks} and has the advantage of providing a modular treatment separating cryptographic concerns from information-theoretic ones.}. The concrete construction that keeps the Verifier efficient without this idealized model is described in \cref{sec:final-constr}.

\paragraph{A strawman and why it fails.} Before describing our approach, it is instructive to first consider a simpler test. Suppose the client samples a random node $a_j$ in the trace, reads the corresponding weights from the ground truth and the activation values of the nodes that ``feed'' into $a_j$, and checks that they are consistent:
\smallskip
\noindent
\underline{$\sf RandTestStrawman$}:
\begin{itemize}[topsep=0pt,itemsep=0pt]
    \item Query the activation value ${a}_{j}$ of a randomly chosen neuron.
	\item query the correct weights $w_{ij}$  and activation values of the nodes $a_{i}$ for $i \in G_{j}$
	\item \fussy Accept iff
	$a_{j} \!=\!\ \phi\Big(\sum_{i\in G_{j}} w_{ij} a_i \Big)$.
\end{itemize}

Our experiments, however, showed that discrepancies in the activations typically manifest only in ``late'' layers. This makes $\sf RandTestStrawman$ susceptible to false positives, as a random check in an early layer will likely pass even if the final output is different. Furthermore, while internal activations can be manipulated by an adversary, the input layer remains an immutable anchor.

The issues with $\sf RandTestStrawman$ suggest a different approach, which verifies the correctness of an entire \emph{path} from input to output. By tracing a path from an output node to the input, we ensure the output is backed by a chain of consistent computation.

\paragraph{The construction.}
%\matteo{Let us make sure it is clear that we are not actually sending the activation values}
The protocol starts with the server sending oracles to the alleged activation values of the execution $\left( \tilde{a}_1, \tilde{a}_2, \dots \right)$.
%Let $\layer_1, \layer_2, \dots, \layer_L$ be the layers in the network.
The client first samples a random node index $j$ from the output layer (the rightmost layer in \cref{fig:path-test}). Then, it reads  the following activation values from the trace sent by the server: $\tilde{a}_j$ together with all the activation values of $j$'s parents (the nodes that feed into $j$); that is, the client also reads $\left( \tilde{a}_i\right)_{i \in G_j}$.
From the ground truth string, the client will then read the weights $\left( {w}_{i j}\right)_{i \in G_j}$, those relating node  $j$ to its parents  in model $\model_1$. The client performs the following ``local'' check and rejects if this does \textit{not} hold:
\begin{equation} 
	\tilde{a}_j = \phi\Big(\sum_{i\in G_j} w_{ij} \tilde{a}_i \Big)
	\tag{$\star$}
	\label{eqn:loc-check}
	\end{equation}
The client will then sample a random node index $j'$ from $G_j$ and proceeds as above (reading $\tilde{a}_{j'}$ and the nodes in $G_{j'}$, and performing the check in $\left( \star \right)$), until it reaches the input layer. 
At that point, if all the checks along the path passed, the client accepts the claimed value $y$ as the valid output of $\model_1$ on $\qry$.

If the server is providing a trace from some different model $\model_2$, we hypothesized that a randomly sampled path will fail this test with high probability. We validated our hypothesis by running extensive experiments which confirmed our intuition that this is indeed a good test:  when comparing models with very different behaviors, we never experienced our test failing, therefore we believe that the probability of failure is negligible (see Sections \ref{sec:om-exps} and \ref{sec:llm-sep} for details). This provides us with a lightweight but robust protocol template where it is sufficient to query and validate only a fraction of the network's nodes.
%handful of nodes to check with  high confidence that the result is correct.
To cheat without being detected by this approach,
the adversary would have to construct a trace with the correct input, the desired incorrect output, and internal activations such that many paths evaluate correctly according to the weights of the real model. We conjectured this to be hard. Indeed, our experimental analysis in Section~\ref{sec:fs-exps} shows that two natural strategies (gradient descent and inverse transformations) fail to forge a trace that fools our test, even when the adversary is free to construct the trace arbitrarily (strong other-model soundness, \cref{sec:strong-om-snd}). A separate swap attack (\cref{swap_attack}), which perturbs the true model's output without involving any substitute model, tests a threat closer to full soundness and also fails.

\paragraph{Stronger security notions.} The discussion so far focused on a relatively weaker security notion (which we dub \emph{other-model soundness}) where we only consider cheating servers who run a different model $\model_2$.
%and commit to the real execution trace of $\model_2$.
This seemingly restrictive notion is of practical and theoretical interest on its own. From an application standpoint, for example, whenever inference is offered as a cloud-based service, providers may have incentives in providing outputs from cheaper models, rather than the one advertised~\cite{substitution2025} (the ground truth). We refer the reader to \cref{sec:motivation-other-model-snd} for more on this and other example scenarios. More broadly, other-model soundness captures settings where ad-hoc cheating strategies are practically infeasible or too expensive, making the use of an existing alternative model the most natural attack vector.
% Other-model soundness has another appealing feature: as we show in \cref{sec:omsound}, it lends itself very nicely to a \textit{formal} treatment (in addition to the \textit{experimental} treatment in \cref{sec:om-exps}). In particular, in that section we provide mathematical foundations  for other-model soundness; we do this by showing how general properties, related to similarity between models, can act as (approximate) \textit{sufficient}  conditions for a protocol to satisfy other-model soundness\footnote{The stronger notions of soundness we consider in this paper (and which we are about to discuss in the text) was not as easy to capture  immediately through a foundational lens. at least directly. We don't exclude, however, that other-model soundness might in the future provide \textit{the} starting point for  a theoretical treatment of stronger security notions. This may be the case, for instance, if one could prove that any successful adversary with high success probability acting arbitrarily, must essentially have produced a response that is related to the \textit{response of some other model}. This, at least in principle, could be the case for our technical approach, which requires the adversary to commit to an alleged execution trace of a model. We leave such theoretical analysis---exploring how and when the ``weaker'' soundness notion implies the stronger one---as an  interesting problem for future work.}.

Other-model soundness has another appealing feature: as we show in \cref{sec:omsound}, it lends itself very nicely to a \textit{formal} treatment (in addition to the experimental treatment in \cref{sec:om-exps}). As we show in \cref{thm:trace-sep-yields-other-model-snd}, two 
conditions are sufficient to guarantee security: (1) \emph{trace separation}, meaning that two models with sufficiently dissimilar behavior have very different activation traces when run on the same input; (2) a ``good'' test that separates the two traces by sampling only a small number of locations of the trace and the ground truth. Whether these two conditions apply can, in principle, be assessed both theoretically and experimentally. 

%Through these mathematical foundations, we show in \cref{thm:trace-sep-yields-other-model-snd} some sufficient conditions that guarantee security. Our results show that we can obtain other-model soundness under two conditions, respectively on the ``testing mechanism'' we are using and on the setting: the testing mechanism should be, at its essence,  a ``reasonable sampling mechanism for trace deviations'' (first condition) and  the deviations in outputs we are interested in ruling out are produced by models whose traces are ``sufficiently dissimilar'' from the ground truth (second condition). Whether these  two conditions  apply can, in principle, be assessed both theoretically and experimentally (especially the first, which is mostly combinatorial in nature).

% The text in this commented footnote is now in the appendix
%\footnote{This may be the case in settings where the threat model is along the following lines. Consider a model provider with a model whose training process has been audited (for example, it has been guaranteed that it has been trained only on data it had the right to use). However, the model provider may try to serve responses to users through a \textit{different} model (one, for example, trained on additional data, including some that were not permitted for copyright or other legal reasons). Our notion of other-model soundness squarely captures this scenario.}

In addition to other-model soundness, we also consider a natural and stronger security notion of broader applicability, which we refer to as \textit{full soundness}.
Under this definition a server may act arbitrarily (may be ``fully malicious'') and thus try more sophisticated strategies than running a different model.
It could, in particular, provide an arbitrary output (not computed using any model) and commit to an arbitrary trace in $C_{\trace}$.
At least intuitively, full soundness is harder to achieve than other-model soundness. In fact, it escapes our formal treatment because it is not immediately clear, analytically, whether trace separation (which deals with traces from two different models) can have any implications for an arbitrary attack (\cref{sec:motivation-other-model-snd} discusses this point as well).

We point out, however, that the analysis for other-model soundness trivially extends to full soundness for the case of ``classification'' models, i.e., models with a binary output (e.g., images of cats vs images of dogs) since the adversary can only ``flip'' the output.
Between other-model soundness and full soundness lies \emph{strong other-model soundness} (\cref{sec:strong-om-snd}), where the adversary runs a substitute model but may forge the trace arbitrarily. Our gradient descent and inverse transform attacks (Section~\ref{sec:fs-exps}) empirically test this intermediate notion, while the swap attack (\cref{swap_attack}) that perturbs the true model's output without any substitute model is a threat closer to full soundness.
Things get more complicated in the case of Large Language Models where the adversary can tinker with the output in more sophisticated ways, which make the detection probability of our ``random path'' test very low. We leave achieving full soundness for LLMs using our approach as an open problem.
Formal definitions for the above security notions in the idealized model can be found in \cref{sec:secarg} and \cref{sec:strong-om-snd}; their cryptographic counterparts are in \cref{sec:formal-treatment-verif-inf-protocol}.

\paragraph{From idealized to concrete constructions.} The above assumes for simplicity that the client has been given a direct random access to both the trace claimed by the server and  the weights of the model. In our concrete scheme this is not the case; we implement this mechanism through cryptographic commitments. At the end of a training phase, a commitment $C_{\model}$ to the weights of the correct model ($\model_1$ above) is published in a public repository. This serves as the ``ground truth'' of what the correct model is.
At the end of the protocol execution, the server publishes a commitment $C_{\trace}$ to the execution trace. At verification time the server provides the client with openings of the relevant values from the two commitments $C_{\model}$ and $C_{\trace}$. Here we will use \textit{vector} commitments~\cite{MT,VC}, which guarantee that the server is bound to unique values and the opening is ``succinct'' (i.e., much smaller than the size of the network), yielding the desired efficiency for both client and prover. A general description of this compilation approach from idealized to cryptographic constructions, together with its security proof, can be found in \cref{sec:compiler}.

\paragraph{Practical settings for approximate soundness.}

In many scenarios, even a relatively low soundness probability per query is acceptable.
One reason is that inferences are produced at massive scale: if an adversary is routinely substituting models, the adversary will on average be caught after a moderate number of queries. Similarly, our approach is well-suited to auditing scenarios, where a verifier can issue many test queries against a model to which the provider has published a commitment.

Another reason is that providers may face severe penalties when caught cheating; a rational agent will not accept the risk even if per-query detection probability is low.
This is especially natural in usage-based pricing settings (e.g., per-token billing for LLM APIs), where economic incentives to cut costs by serving a cheaper model are directly at play.
More broadly, one could deploy our approach in settings where the user pays only if the result passes a verification check.
This connects to the framework of \emph{rational proofs}~\cite{STOC:AzaMic12,CamGen15,CamGen17, CiC:CamGanGen24}, in which a prover is incentivized to behave honestly through a reward mechanism tied to verification rather than through cryptographic soundness alone.

\section{Preliminaries}
\label{sec:prelim}

We now provide some preliminaries and notation that will be useful in describing our construction.
Let $(U, Y)$ be the input set and output set for a given inference, respectively. 
Consider a trained model $\model$ that runs on input queries $\qry \in U$ and output $y=\model(\qry) \in Y$. A client (which we will also call a \emph{Verifier}) will ask a potentially untrusted server (a \emph{Prover}) to run the model $\model$ on a particular query $\qry$. Our goal is to prevent a malicious Prover (an \emph{Adversary}) $\cal A$ to claim that $\hat{y} \neq y=\model(\qry)$ is the correct output of $\model(\qry)$. We assume that an adversary has access to the correct model $\model$ (its architecture and weights). 

We will assume that models and queries are sampled from a distribution (see also \cref{appx:why-dist}). 
When executed on a query $\qry$, a model $\model$ can produce an evaluation trace $\trace$ (e.g., the list of all the activation values). We denote this mechanism through the syntax $\trace := \Eval\left( \model, \qry \right)$.
We assume that every $\trace$ is associated with a unique output, intuitively the output $\model(\qry)$ of the model and query generating the trace. 
For this purpose we posit the existence of a function $\outfn$ with the property that
$\outfn\!\left( \Eval\left( \model, \qry \right)\right)\!=\!\model(\qry)$\footnote{In natural cases $\outfn$ simply selects the appropriate neurons from the trace. See also \cref{sec:basics-models}.}. We require that both $\Eval$ and $\outfn$ be efficiently computable. 
We stress that the output function always returns \textit{some value} even if the trace if not fully well-formed or internally coherent: the output function is only trying to model the fact that every trace will implicitly be associated with some \textit{claim} on the output of that trace. The output of the trace could for example be the output of the last layer of neurons, which is well-defined by the mere fact that this is an acceptable ``trace''; clearly, this  output is well-defined even if the prior layers are not in agreement with this output (in this case, somewhere in the trace there of course must be a contradiction). See also the formal treatment in the appendix (\cref{sec:ver-inf-def} and \cref{rem:on-outputs-of-traces}).

If $F$ is an algorithm and $s$ is a string we use the standard convention $F^s$ to denote the fact that $F$ has oracle access (that is, random access) to the string $s$. Since $F$ is not constrained to reading all of $s$, it can run in time \textit{sublinear} in $|s|$.

We will occasionally talk about a ``distance function'' $\Dfn$ and apply it in the same statement to both outputs and traces. In these situations we are implicitly assuming ``a pair of distance functions'', respectively applied to outputs and traces. Talking about a single distance function allows us to use fewer subscripts without losing any essential points.

\section{A Formal Framework for Verifiable Inference}
\label{sec:framework}

In this section we present a formal framework for verifiable inference. Our approach borrows techniques from the \emph{property testing} literature \cite{goldreich2017introduction}, by leveraging the notion of similarity we discussed in \cref{sec:intuition}. We assume that two models that produce very different outputs (they are \emph{functionally dissimilar}) have ``very different'' traces (they are \emph{representionally dissimilar}). We then show that this difference can be detected by a test that requires access only to a very small number of locations in the trace.

We first introduce our security definitions in an idealized model (\cref{sec:secarg}), then provide a theoretical framework linking trace separation to other-model soundness (\cref{sec:omsound}). We also present the cryptographic counterparts of these definitions (\cref{sec:formal-treatment-verif-inf-protocol}) and a generic compiler from idealized schemes to cryptographic ones using vector commitments (\cref{sec:compiler}).

\subsection{Idealized Verifiable Inference Schemes}
\label{sec:secarg}

%\matteo{This is not what we are doing here. Also, we should introduce the testtrace algorithm first.} We now prove that our protocol satisfies an idealized definition of secure inference\footnote{We defer the cryptographic definition of secure inference to \cref{sec:formal-treatment-verif-inf-protocol}.} based on the properties of the $\testTrace$ algorithm.

We now present our framework for verifiable inference, which we will use to argue the security of our scheme.

\begin{definition}[Inference Correctness Testing]
	\label{def:idealized}
Given a model $\model$ a \emph{correctness tester} for $\model$ is an efficient algorithm $\testTrace$, with oracle access to some of its inputs, satisfying correctness and one of the two soundness definitions below (the former implies the latter).
\end{definition}

\paragraph{Correctness.}
The scheme satisfies correctness if for any security parameter $\lambda \in \NN$, model $\model$, query  $\qry$, we have:
\[ \testTrace^{\model, \trace}\left( 1^\lambda, \qry \right) = 1 \]
%where  $y := \model(\qry) \text{ and }$
where $\trace := \Eval\left( \model, \qry \right)$. Correctness requires that the testing algorithm always accept the honestly generated trace.

\paragraph{Full soundness.} Let $\sndeps: \NN \to \RR$ be a function and let $\left( \Dmodel, \Dqry \right)$ be a pair of  model and query distributions and $\Delta$ a notion of distance for probability distributions. We say that the scheme satisfies full $(\delta,\sndeps)$-soundness w.r.t to  $\left( \Dmodel, \Dqry, \Delta \right)$ if for any  adversary $\adv$ and for any security parameter $\lambda \in \NN$, we have:
\begin{align*}
	& \multilinePr{
& \Delta[\model(\qry),\outfn(\advtrace)]>\delta \; \wedge \\
	& \testTrace^{\model, \advtrace}\left( 1^\lambda, \qry \right) = 1
}{
	& \model \sample \Dmodel\\
	& \qry \sample \Dqry\\
	&  \advtrace  \gets \adv\left( 1^\lambda, \model, \qry \right)
} \leq \sndeps(\lambda)
\end{align*}

Informally, (full) soundness states that, except with $\sndeps$ probability, no adversary can fool the test into accepting when providing a trace with an output far from the correct one.

\paragraph{Other-model soundness.} Let $\sndeps: \NN \to \RR$ be a function and let $\left( \Dmodel, \Dqry,\Delta \right)$ be as above and $\advDmodel$ be a model distribution (potentially distinct from $\Dmodel$). We say that the scheme satisfies $(\delta,\sndeps)$-other-model-soundness w.r.t to  $\left( \Dmodel, \Dqry, \Delta  \right)$ and $\advDmodel$ if for any security parameter $\lambda \in \NN$, we have:
\begin{align*}
& \multilinePr{
	& \Delta[\model(\qry),\advmodel(\qry)]>\delta \;  \wedge \!\!\!\\
	& \testTrace^{\model, \advtrace}\left( 1^\lambda, \qry \right)\!\!\!\\
	&  = 1
}{
	& \!\!\model \sample \Dmodel\\
	& \!\!	\advmodel \sample \advDmodel\\
	& \!\!\qry \sample \Dqry\\
	& \!\!\advtrace := \Eval\!\left( \advmodel, \qry \right)
} \leq \sndeps(\lambda)
\end{align*}

Informally, other-model soundness states that, except with $\sndeps$ probability, the tester will be able to distinguish a false output whenever an adversary tries to use a different model (from a specific class) to evaluate $\qry$.

In the next paragraphs, we will provide a theoretical framework to argue other-model soundness from first principles (based only on properties of the models involved and on the combinatorial properties of the testing mechanism).
%\matteo{[Check/edit next para. Can we really establish full soundness? What are we referring to?]}
We will also later discuss how our experiments (and others from future work) are consistent with strong-other model soundness holding for the path-testing approach we propose in this work (as well as potentially other testing mechanisms with similar features).

%\paragraph{Query Complexity} The query complexity of $\testTrace$ is the number of locations it accesses in $\trace$ and $\cal M$.
\begin{comment}

\begin{theorem}
	\label{thm:security{def:verifiable-proper}}
Let $\Dmodel, \advDmodel$ be model distributions and $\Dqry$ be a query distribution. If $\testTrace$ is a correctness tester for $\model \gets \Dmodel$ and the Prover used a secure Vector Commitment scheme, then the Protocol in Figure~\ref{prot:our} is a secure argument system for verifiable inference (Def.~\ref{def:verifiable-proper}), which inherits the same type of soundness as $\testTrace$.
\end{theorem}
\begin{proof}[Proof (sketch)]
	The proof essentially follows from the properties of $\testTrace$ and the secure binding of the vector commitment scheme.
\end{proof}

If the query complexity (number of locations accessed) of $\testTrace$ is $k$, then the proof size of our protocol is $k o(n)$ to account for the opening proofs of the vector commitments (which are sublinear in $n$ the size of the model).
\end{comment}

\subsection{Other-Model Soundness From Testing Trace Separation}
\label{sec:omsound}

We now formalize a general version of the assumptions related to similarity  used in the AI literature. We will then formally show how these assumptions can allow us build an efficient $\testTrace$ algorithm that detects if a different model is used to compute the wrong output.

To argue that our protocol has other-model soundness, we first formalize the \textit{separability assumption} that underlies our approach:
when sampling a model $\model$ and query $\qry$ from a certain distribution, any trace that significantly disagrees with the output $\model(\qry)$ will be ``far enough'' from the actual activation trace of $\model$ with high probability.

\begin{definition}[Trace separation]
	\label{def:sep-prop}
	Let $\Dmodel, \advDmodel$ be model distributions and $\Dqry$ be a query distribution. We say that they satisfy the
	$\left( \dsepO, \dsepT, \esep  \right)$-separation property with respect to a distance function $\Dfn$:
\begin{align*}
& 	\multilinePr{
	&	\Dfn\!\left( \! \model(\qry), \advmodel(\qry) \! \right) \!\geq\!\dsepO \!\!\\
	& \,  \implies \Dfn\left(\trace, \advtrace \right)	\geq \dsepT
}{
	& \!\model \sample \Dmodel\\
	& 	\!\advmodel \sample \advDmodel\\
	& \!\qry \sample \Dqry\\
	& \!\trace := \Eval\left( \model, \qry \right)\!\\
	& \!\advtrace := \Eval\left( \advmodel, \qry \right)\!
} \geq 1 - \esep
\end{align*}
\end{definition}

It is not enough that the traces follow distinct distributions; we must also have a good way to test it.

\begin{definition}[Good separation test]
	\label{def:good-test}
	Let $\teps : \NN \to \RR$ be a function. We say that a testing mechanism $\gentest$ is $\teps$-good with respect to a distance function $\Dfn$ and a threshold $\delta$ if   for any parameter $\lambda \in \NN$, model $\model$, query $\qry$,   $\advtrace$ such that:
\begin{equation}
	\tag{$\dagger$} \gentest^{\model, \trace}\left( 1^\lambda, \qry \right) = 1
	\label{eq:adm}
	\end{equation}
	%where  $y := \model(\qry) \text{ and }$
\begin{equation}
	\tag{$\ddagger$}  \mathsf{Pr}[
\Dfn\left(\trace, \advtrace \right)	\geq \delta   \wedge  \gentest^{\model,\advtrace}\left(1^\lambda, \qry \right) = 1
	] \leq \teps(\lambda)
	\end{equation}
where $\trace := \Eval\left( \model, \qry \right)$.

\end{definition}

A good test will enable inference correctness testing as in  \cref{def:idealized}: condition $(\dagger)$ corresponds to correctness, while $(\ddagger)$ captures  other-model soundness (with testing error $\teps(\cdot)$).

\begin{theorem}
	\label{thm:trace-sep-yields-other-model-snd}
	Let $\Dmodel, \advDmodel$ be model distributions and $\Dqry$ be a query distribution.
	Then the following two conditions are sufficient (both should hold)
	for the existence of an inference correctness tester (\cref{def:idealized}) with other-model-soundness:
	\begin{itemize}[topsep=4pt]
		\item they satisfy the $\left( \dsepO, \dsepT, \esep  \right)$-separation property (\cref{def:sep-prop}) with respect to a distance function $\Dfn$;
		\item there exists a good testing mechanism with respect to $\Dfn$ and threshold $\dsepT$ (\cref{def:good-test}).
	\end{itemize}
	The resulting verifiable inference scheme has other-model soundness $\sndeps(\lambda) \leq \teps(\lambda) + \esep$.
\end{theorem}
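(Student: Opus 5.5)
The plan is to take the good testing mechanism $\gentest$ from the second hypothesis and use it directly as the correctness tester, i.e.\ set $\testTrace := \gentest$. There are then two things to check: correctness and other-model soundness (\cref{def:idealized}).

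\emph{Correctness} is immediate. Condition $(\dagger)$ of \cref{def:good-test} says that $\gentest^{\model,\trace}(1^\lambda,\qry)=1$ for every $\lambda$, $\model$, $\qry$, where $\trace := \Eval(\model,\qry)$. This is exactly the correctness requirement.

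For \emph{other-model soundness}, fix the threshold $\delta := \dsepO$ on outputs. (If the definition uses strict inequality $>\delta$ while separation uses $\geq \dsepO$, the strict event is contained in the non-strict one, so this is harmless.) Sample $\model \sample \Dmodel$, $\advmodel \sample \advDmodel$, $\qry \sample \Dqry$, and set $\trace := \Eval(\model,\qry)$ and $\advtrace := \Eval(\advmodel,\qry)$. Let $B$ be the bad event
\[ \Dfn(\model(\qry),\advmodel(\qry)) > \dsepO \;\wedge\; \gentest^{\model,\advtrace}(1^\lambda,\qry)=1 . \]
Let $S$ be the event that the separation implication of \cref{def:sep-prop} holds. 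By the first hypothesis, $\Pr[\neg S]\le \esep$. Split
\[ \Pr[B] \le \Pr[\neg S] + \Pr[B\wedge S]. \]
On $B\wedge S$, the output distance exceeds $\dsepO$, so $S$ forces $\Dfn(\trace,\advtrace)\ge\dsepT$. Hence
\[ \Pr[B\wedge S]\le \Pr\bigl[\Dfn(\trace,\advtrace)\ge \dsepT \wedge \gentest^{\model,\advtrace}(1^\lambda,\qry)=1\bigr]. \]

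The remaining step, and the only delicate one, is to bound this last probability by $\teps(\lambda)$ using $(\ddagger)$. Condition $(\ddagger)$ is quantified pointwise: for every fixed $\model,\qry,\advtrace$, the probability is taken only over the tester's internal coins. I will condition on the sampled triple $(\model,\advmodel,\qry)$. After conditioning, $\trace$ and $\advtrace$ are deterministic functions of the triple, and the tester's coins are independent of the sampling. So $(\ddagger)$ applies to each fixed value and gives a conditional bound of at most $\teps(\lambda)$. Averaging over the sampling distribution preserves this bound.

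Two points need care here. First, I must make sure $\advtrace$ is a legitimate fixed input in $(\ddagger)$; this holds because the quantifier is over all $\advtrace$. Second, I must make sure the tester's randomness is independent of the sampled triple; this holds because $\advtrace$ is computed honestly via $\Eval$ before the test runs.

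Combining the pieces gives $\Pr[B]\le \teps(\lambda)+\esep$, which is the claimed soundness $\sndeps(\lambda)\le\teps(\lambda)+\esep$ with $\delta=\dsepO$.
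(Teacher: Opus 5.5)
Your proposal is correct and matches the paper's proof sketch: instantiate $\testTrace$ with $\gentest$, get correctness from $(\dagger)$, and combine the separation failure probability $\esep$ with the test error $\teps$ via $(\ddagger)$. Your union bound with pointwise conditioning on the sampled triple, and the explicit choice $\delta := \dsepO$, is a more careful version of the paper's informal $(1-\esep)(1-\teps)$ rejection bound.
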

\begin{proof}[Proof (sketch)]
	We can essentially employ the testing mechanism $\gentest$ as a verifiable inference scheme. Correctness follows immediately from condition $(\dagger)$.
	Now, in the other-model soundness experiment in \cref{def:idealized}, consider an adversary providing a model whose output is far from the honest one, that is such that $\Dfn \left(  \model(\qry), \advmodel(\qry)  \right) $. By invoking trace separation we know that the honest trace $\trace$ and $\advtrace$ will differ by at least $\dsepT$ (according to distance function $\Dfn$) with probability $(1-\esep)$. Invoking condition $(\ddagger)$, we can then conclude that the testing algorithm will reject with overall  probability at least $(1-\esep)\cdot(1-\teps(\lambda))$. This provides the claimed soundness bound.
\end{proof}

In \cref{appx:methods} we describe how to empirically estimate $\dsepO, \dsepT, \esep, \text{ and }\teps$ that can be used in \cref{thm:trace-sep-yields-other-model-snd}.

\subsection{Secure Verifiable Inference Protocol}
\label{sec:formal-treatment-verif-inf-protocol}
We now define what  a secure verifiable inference protocol in the cryptographic sense is. This notion can be seen as the ``real-world'' equivalent of our idealized notion in \cref{def:idealized} (we show how an idealized scheme can be generically compiled into cryptographic ones in \cref{sec:compiler}). Note, firstly, that our definition relaxes the notion of soundness to accept outputs that may only be ``approximately'' correct, i.e., from an output distribution not too far from the one produced by the correct model. Also, for simplicity we present a definition for 3-round protocols, where the Prover sends the first and last messages, and the second round message is a random challenge (the structure of our instantiated protocol). It is not hard to extend this to an arbitrary interactive protocol between Prover and Verifier running on a commitment to the model $\model$.

\begin{definition}
	\label{def:verifiable-proper}
	A protocol for verifiable inference on a committed model is a tuple of algorithms
	$\left( \genparams, \modelcommit, \prove, \verify \right)$ with the following syntax:
	\begin{itemize}
		\item $\genparams(1^\lambda) \to \pp$: outputs security parameters;
		\item $\modelcommit(\pp, \model) \to \cmM$: outputs a commitment, or digest, to a model (deterministically).
		\item $\prove_1(\pp, \model, \qry) \to \left( \pi_1, \mathsf{state} \right)$: outputs a first proof message $\pi_1$ and a state for the next round, on input parameters, model and query.
		\item $\prove_2(\mathsf{state}, \rho) \to \pi_2$: outputs the last proof message $\pi_2$ on input the state and a random challenge $\rho$ (notice that without loss of generality the state can contain $\pp, \model$ and $\qry$).
		\item $\verify(\pp, \cmM, \qry, y, \left(  \pi_1, \pi_2 \right), \rho )$: checks that the output of the model committed in $\cmM$ on query $\qry$ is $y$, on additional inputs the security parameters, the challenge $\rho$ and the proof messages $\pi_1,\pi_2$.
	\end{itemize}
	These algorithms should satisfy binding, correctness and one of the soundness definitions described below. In these definitions we denote by $\chalSpace$ the space from which the random challenge is sampled.
\end{definition}

\paranoindent{Binding.}
For our notion to be meaningful, we require that a digest to a model is binding. Formally, for any efficient adversary $\adv$, for every $\lambda$:
\ifShortversion
\begin{align*}
&\multilinePr{
	&\model \neq \model' \ \wedge \\
	& \cm = \cm'
}{&\!\pp \gets \genparams(1^\lambda)\!\\
& \!\left( \model, \model' \right) \gets \adv \left( \pp \right)\\
&  \cm\! \gets\! \modelcommit\!\left( \pp, \model \right)\\
&  \cm' \! \gets\!\modelcommit\!\left( \pp, \model' \right)
} \leq \negl(\lambda)
\end{align*}
\else
\begin{align*}
\multilinePr{
	\model \neq \model' \ \wedge
	 \cm = \cm'
}{ \pp \gets \genparams(1^\lambda)\\
\left( \model, \model' \right) \gets \adv \left( \pp \right)\\
  \cm\! \gets\! \modelcommit\!\left( \pp, \model \right)\\
  \cm' \gets\modelcommit\left( \pp, \model' \right)\\
} \leq \negl(\lambda)
\end{align*}
\fi

\paranoindent{Correctness.}
The scheme satisfies correctness if for any security parameter $\lambda \in \NN$, model $\model$, query  $\qry$, we have:
\begin{align*}
	&\multilinePr{
		&\verify(\pp, \cm, \qry, y,\!\\
		 & \qquad \left(  \pi_1, \pi_2 \right), \rho ) \!=\! 1\!
	}{&\!\pp \gets \genparams(1^\lambda)\!\\
		&  \!\cm\! \gets\! \modelcommit\!\left( \pp, \model \right)\\
		&  \!\!\left(\! \pi_1, \mathsf{state}\! \right) \gets \prove_1(\pp, \model, \qry)\!\\
		& \!\rho \sample \chalSpace\\
		& \!\pi_2 \gets \prove_2(\mathsf{state}, \rho)\\
		& \!y := \model\left( \qry \right)
	} = 1
\end{align*}

\paranoindent{Full soundness.} Let $\sndeps: \NN \to \RR$ be a function and let $\left( \Dmodel, \Dqry \right)$ be a pair of model and query distributions. Let $\Delta$ be a distance measure for distributions. We say that the scheme satisfies full $(\delta,\sndeps)$-soundness w.r.t to  $\left( \Dmodel, \Dqry, \Delta \right)$ if for any efficient adversary $\adv = \left( \adv_1, \adv_2 \right)$ and for any security parameter $\lambda \in \NN$, we have:
\begin{align*}
	& \multilinePr{
		%	& \model(\qry) \neq \outfn(\advtrace)\  \wedge \\
		& \Delta[\model(\qry),\tilde{y}]>\delta \  \wedge \\
	&\ \verify(\pp, \cm, \qry, \tilde{y},\!\\
	& \ \qquad \left(  \pi_1, \pi_2 \right), \rho ) \!=\! 1\!
	}{
		& \!\model \sample \Dmodel\\
		&\!\pp \gets \genparams(1^\lambda)\!\\
		&  \!\cm\! \gets\! \modelcommit\!\left( \pp, \model \right)\!\\
		& \!\qry \sample \Dqry\\
		&  \!\!\left( \tilde{y}, \pi_1, \mathsf{state} \right)  \gets \adv\left( \pp, \model, \qry \right)\\
		& \!\rho \sample \chalSpace\\
		& \!\pi_2 \gets \adv_2(\mathsf{state}, \rho)
	} \leq \sndeps(\lambda)
\end{align*}

\paranoindent{Other-model soundness.} Let $\sndeps: \NN \to \RR$ be a function and let $\left( \Dmodel, \Dqry, \Delta \right)$ be as above and $\advDmodel$ be a model distribution (potentially distinct from $\Dmodel$). We say that the scheme satisfies $\sndeps$-other-model-soundness w.r.t to  $\left( \Dmodel, \Dqry, \Delta\right)$ and $\advDmodel$ if for any security parameter $\lambda \in \NN$, we have:
\begin{align*}
	& \multilinePr{
		%	& \model(\qry) \neq \outfn(\advtrace)\  \wedge \\
		& \Delta[\model(\qry),\tilde{y}]>\delta \  \wedge\! \\
		&\ \verify(\pp, \cm, \qry, \tilde{y},\!\!\\
		& \ \qquad \left(  \pi_1, \pi_2 \right), \rho ) \!=\! 1\!
	}{
		& \!\model \sample \Dmodel\\
		&\!\pp \gets \genparams(1^\lambda)\!\\
		&  \!\cm\! \gets\! \modelcommit\!\left( \pp, \model \right)\!\\
		& 	\!\advmodel \sample \advDmodel\\
		& \!\qry \sample \Dqry\\
		& \!\tilde{y} := \advmodel\left( \qry \right)\\
		&  \!\!\left(\! \pi_1, \mathsf{state}\! \right)\! \gets\!\prove_1(\pp, \advmodel, \qry)\!\\
	& \!\rho \sample \chalSpace\\
	& \!\pi_2 \gets \prove_2(\mathsf{state}, \rho)
	} \leq \sndeps(\lambda)
\end{align*}

\begin{remark}
\label{rem:asymmetry-defs}
In contrast to \cref{def:idealized}, \cref{def:verifiable-proper} above has an explicit variable $y$ referring to the claimed output by the prover. This a mere cosmetic departure from \cref{def:idealized}, however---in both notions the prover/adversary is claiming a well defined output. In the case of our idealized notion, this output corresponds to the value of $\outfn(\advtrace)$. Adopting $\outfn$ kept the idealized setting slightly simpler and without redundant variables (the  trace already contains output values; see also \cref{rem:on-outputs-of-traces}). In \cref{def:verifiable-proper}, we need an explicit $y$ because we don't have fewer assumptions on the output of the adversary (we don't assume it is a trace anymore in the sense of \cref{def:model}).
\end{remark}

\subsection{Compiling Idealized Schemes into Cryptographic Ones}
\label{sec:compiler}

Now we move away from the idealized model and show how to implement our protocol in practice, by constructing a secure way to give the Verifier random access to the parameters of the correct model, and to the trace of the computation of the adversary.
In particular, we present a compiler that transforms an idealized scheme (in the sense of \cref{def:idealized})  into its cryptographic counterpart (\cref{def:verifiable-proper}) through general vector commitments~\cite{PKC:CatFio13}.
 This is  approach is standard in the cryptographic literature.

A \emph{vector commitment (VC)} is a cryptographic primitive that allows a prover to commit to a tuple
\( v = (v_1, v_2, \dots, v_n) \) in such a way that the resulting commitment is
\emph{binding} (the prover cannot later open it to a different sequence) and
\emph{positionally binding} (each committed value is bound to its index in the vector).
Later, the prover can efficiently produce a short proof---called an \emph{opening}---that reveals the value at a specific
position \( i \) and convinces a verifier that this value is indeed the one that was committed at that position,
without revealing the other entries in the vector.\footnote{We provide a more formal definition in \cref{apdx:vec}.}
A popular and foundational construction for vector commitments is the Merkle tree \cite{merkle1989certified}, which we adopt in our scheme (in an optimized variant).

\fussy
Let 	 $\pivc$ be a vector commitment scheme and $\piIdeal$ be an idealized verifiable inference scheme. Let $( \nM, \nQ, \nTrc,$ $\Eval, \idxsOut  )$ be a family of models (as defined in \cref{def:model}). The final scheme would work as follows:
\begin{itemize}
	\item \textit{Parameter generation:} runs the parameter generation algorithm of $\pivc$ (if the algorithm requires a length parameter, use the maximum expected length of any trace).
	\item \textit{Model commitments: } Run the commitment algorithm of $\pivc$ on a string representation of $\model$.
	\item \textit{Prover: }
	\begin{enumerate}
		\item Generate $\trace := \Eval \left( \model, \qry \right)$ and use $\pivc$ to commit to it. Let $\cm_\trace$ be the resulting commitment.
		\item Send $\pi_1 := \cm_\trace$; receive random challenge $\rho$. %Generate random coins $\rho \gets \hash(\pp || \cmM || \qry || y || \cm_\trace)$ using a cryptographic hash function $\hash$.
		\item Use string $\rho$ as randomness for running the idealized  testing algorithm $$\piIdeal.\testTrace^{\model, \trace}\left( 1^\lambda, \qry \right);$$
		\item Let $\Spos_\model$ and $\Spos_\trace$ be the sets of indices that the execution $\testTrace$ queries for the string $\model$ and $\trace$ respectively. Let  $\SposP_\trace := \Spos_\trace \cup \idxsOut$. Use the opening algorithm of $\pivc$ to open each of the values
		in $\left(  \model[j] \right)_{j \in \Spos_\model}$ (resp. in $\left(  \trace[j] \right)_{j \in \SposP_\trace}$) against commitment $\cmM$ (resp. $\cm_\trace$). Let $\mathsf{openings}$ denote the concatenation of all these opening proofs as well as the related values from $\model$ and $\trace$.
		\item Return $\pi_2 := \mathsf{openings}$
	\end{enumerate}
	\item \textit{Verifier}:
	\begin{enumerate}
		%			\item Generate random coins $\rho \gets \hash(\pp || \cmM || \qry || y || \cm_\trace)$ through the hash function $\hash$.
		\item Use string $\rho$ as randomness for running the idealized  testing algorithm $$\piIdeal.\testTrace^{\model, \trace}\left( 1^\lambda, \qry \right)$$ in  the following way: whenever there is an access to $\model$ or $\trace$, use the corresponding value in $\mathsf{openings}$ for it; if that value has not been provided by the prover reject.  If $\testTrace$ does not reject, go to the next step; otherwise, reject.
		\item Check all the opening proofs in $\mathsf{openings}$ and accept if and only if they are all valid and $y$ is consistent with the trace output found in $\mathsf{openings}$.
	\end{enumerate}
\end{itemize}

\bigskip
The following theorem  links the security of idealized verifiable inference schemes (\cref{def:idealized}) and of vector commitments (\cref{apdx:vec}) to that of cryptographic verifiable inference schemes (\cref{def:verifiable-proper}).

\begin{theorem}
	\label{thm:comp}
	Let $\Dmodel, \advDmodel$ be model distributions and $\Dqry$ be a query distribution.
	Let $\pivc$ be a vector commitment scheme and $\piIdeal$ be an idealized verifiable inference scheme, then:
	\begin{itemize}[topsep=4pt]
		\item if $\piIdeal$ is fully $\sndeps$-sound with respect to$( \Dmodel,$ $\Dqry, \Delta )$ then the construction above  is a verifiable inference scheme that is fully $\left(\sndeps(\cdot)+\negl(\cdot) \right)$-sound with respect to  $( \Dmodel,$ $\Dqry, \Delta )$.
		\item if $\piIdeal$ satisfies $\sndeps$-other-model-soundness with respect to $\left( \Dmodel, \Dqry, \Delta\right)$ and $\advDmodel$, then the construction above is a  verifiable inference scheme that is $\left(\sndeps(\cdot)+\negl(\cdot) \right)$-other-model-sound with respect to  $\left( \Dmodel, \Dqry, \Delta\right)$ and $\advDmodel$.
		\item if $\piIdeal$ satisfies $\sndeps$-strong-other-model-soundness with respect to $\left( \Dmodel, \Dqry, \Delta\right)$ and $\advDmodel$, then the construction above is a verifiable inference scheme that is $\left(\sndeps(\cdot)+\negl(\cdot) \right)$-strong-other-model-sound with respect to $\left( \Dmodel, \Dqry, \Delta\right)$ and $\advDmodel$.
	\end{itemize}
\end{theorem}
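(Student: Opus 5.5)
The argument is a standard reduction: any adversary against the compiled scheme either breaks position binding of $\pivc$ or yields an equally successful attack on $\piIdeal$. Binding of $\modelcommit$ is immediate, since it is the binding of $\pivc$ applied to string representations of models. Correctness follows from the correctness of the idealized tester together with the opening correctness of $\pivc$. The honest prover commits to $\trace=\Eval(\model,\qry)$ and opens exactly the positions $\testTrace$ queries, plus $\idxsOut$. The verifier's simulated run of $\testTrace$ is therefore identical to the run of $\piIdeal.\testTrace^{\model,\trace}$ on randomness $\rho$, which always accepts. All openings verify, and $y=\model(\qry)=\outfn(\trace)$ is consistent with the opened output entries.

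For full soundness, fix an adversary $\adv=(\adv_1,\adv_2)$ against the compiled scheme. I would build an idealized adversary $\mathcal{B}$ that must output a single trace before the randomness is chosen, and argue that a successful compiled adversary essentially commits to one.

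The main obstacle is that $\adv$ only commits to $\cm_\trace$, which need not correspond to any well-defined trace. To handle this, $\mathcal{B}$ runs $\adv_1$ to obtain $(\tilde y,\cm_\trace)$. It then rewinds $\adv_2$ on many independent challenges $\rho$ and records every accepted opening of each position $j$. It defines $\advtrace[j]$ as the opened value, or as an arbitrary filler if position $j$ is never opened. The output positions in $\idxsOut$ are always opened, so the verifier's consistency check forces $\outfn(\advtrace)=\tilde y$.

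If two accepted transcripts ever open the same position of $\cm_\trace$, or of $\cmM$, to different values, this breaks position binding of $\pivc$, which happens with probability at most $\negl(\lambda)$. The model openings must also agree with the true $\model$: $\cmM$ was computed honestly from $\model$, so opening it to a different value is again a position-binding break. Outside these binding-failure events, each accepting compiled execution on challenge $\rho$ is exactly an accepting execution of $\testTrace^{\model,\advtrace}$ with coins $\rho$, together with $\Delta[\model(\qry),\outfn(\advtrace)]>\delta$.

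The delicate step is making the extraction run in polynomial time while losing only a negligible amount of probability. For the idealized statement one can avoid rewinding altogether. Define $\advtrace$ as the unique value, if one exists, that each position can be validly opened to. Position binding says that an efficient adversary finds two distinct valid openings only with negligible probability, and that property is all the probability calculation uses. Then $\Pr[\text{compiled success}]\le\Pr[\text{ideal success for }\advtrace]+\negl(\lambda)\le\sndeps(\lambda)+\negl(\lambda)$.

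The other-model case is simpler: the prover is honest with respect to $\advmodel$, so $\cm_\trace$ commits to $\advtrace=\Eval(\advmodel,\qry)$ with $\tilde y=\advmodel(\qry)=\outfn(\advtrace)$. The verifier's view is then exactly the idealized test on $\advtrace$, up to the negligible event that an opening of $\cmM$ is inconsistent with $\model$. The strong-other-model case mixes the two arguments: the adversary samples $\advmodel$ but forges an arbitrary commitment, so the extraction argument from the full-soundness case applies verbatim, with $\advmodel$ passed along to $\mathcal{B}$.
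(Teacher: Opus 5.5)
Your decomposition is the standard Kilian-style compiler argument that the paper only cites. Binding and correctness are inherited from the VC and the tester. Soundness comes from either a binding break or an idealized attack on a trace extracted from $\cm_\trace$, and the other-model case needs no extraction because the committed trace is honest. The gap is in the one step that carries the soundness proof.

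Defining $\advtrace[j]$ as ``the unique value, if one exists, that position $j$ can be validly opened to'' does not work for a compressing, computationally binding VC such as a Merkle tree. Valid openings of a position to many different values exist information-theoretically: collisions exist, they are only hard to find. So a unique validly-openable value typically does not exist, most positions receive filler, and the adversary's actual openings disagree with $\advtrace$ without any efficient binding adversary being exhibited.

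Reading it instead as ``the unique value $\adv_2$ opens on some accepting challenge'' fails too. Non-uniqueness caused by a negligible-measure set of challenges turns the position into filler for every challenge, yet it is not an event an efficient reduction can exhibit. The binding properties of $\pivc$ only bound events that an efficient algorithm can witness, and ``this particular run deviates from $\advtrace$'' is not such an event. So your claim that, outside binding failures, every accepting run is a run of $\testTrace^{\model,\advtrace}$ on coins $\rho$ is exactly what remains unproved. Your polynomial-rewinding variant has the same hole. Nothing bounds the chance that the real challenge reads a position never seen during rewinding, and the naive bound $\sum_j p_j(1-p_j)^T$ is only inverse-polynomial.

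What is missing is a trace defined relative to the adversary, plus a two-challenge collision bound. The idealized definition quantifies over all, possibly unbounded, adversaries, so $\mathcal{B}$ need not be efficient. Fix $\adv_1$'s output, with $\adv_2$'s coins placed in its state. For each trace position $j$:
\begin{itemize}
\item let $p_j$ be the probability over $\rho$ that the verifier accepts and reads $j$;
\item let $\advtrace[j]$ be the most likely value opened at $j$, conditioned on that event;
\item let $m_j$ be the conditional probability of that value.
\end{itemize}
An accepting run reads some $j$ opened to a value other than $\advtrace[j]$ with probability at most $\sum_j p_j(1-m_j)$.

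Now run $\adv_2$ on two independent challenges from the same state. This yields two conflicting valid openings of the adversarially generated $\cm_\trace$, breaking the paper's binding property, with probability at least $\max_j p_j^2(1-m_j)\ge\max_j\bigl(p_j(1-m_j)\bigr)^2$. Averaging over the first message and applying Jensen bounds the deviation probability by $\nTrc\sqrt{\negl(\lambda)}$. This is still negligible because $\nTrc$ is polynomial.

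Outside this event, and outside wrong-but-valid openings of the honest $\cmM$ (position binding), an induction over the tester's possibly adaptive queries shows the accepting run coincides with $\testTrace^{\model,\advtrace}$ on coins $\rho$. The always-opened positions $\idxsOut$ then give $\outfn(\advtrace)=\tilde{y}$. The same repair applies verbatim to the strong-other-model case. Your correctness, binding and other-model arguments are fine as stated.
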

\begin{proof}[Proof (sketch)]
	\fussy
	This compilation approach is standard and we do not propose a full proof here. See~\cite{kilian} or~\cite{cryptoeprint:2016/116}.
	The completeness and soundess of the protocol above follow immediately. Binding follows directly from that of the vector commitment.
\end{proof}

\section{Our Main Construction}
\label{sec:constr}
\label{sec:sec-discussion}
\label{sec:final-constr}
\label{sec:full-snd}

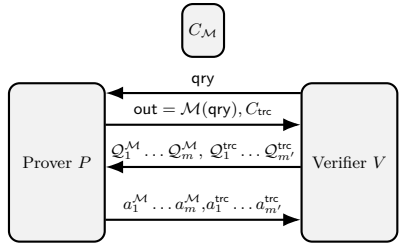
\begin{figure}[t]
\centering
% Top row: (a) on left, (b) on right - both starting at same top level
\hspace{0.65cm}
%\vspace{0pt} % Remove any top spacing
\begin{tikzpicture}[font=\small, >=Latex, node distance=3.5cm, scale=0.7, transform shape, baseline=(current bounding box.north)]
% Prover on left, Verifier on right - positioned to align with a^(0) of figure c
\node[draw, rounded corners, thick, fill=gray!10, minimum width=1.8cm, minimum height=3cm] (P) at (0,0) {Prover $P$ };

\node[draw, rounded corners, thick, fill=gray!10, minimum width=0.5cm, minimum height=1cm] (S) at (2.75,2.5) {$C_{\model}$};

\node[draw, rounded corners, thick, fill=gray!10, minimum width=1.8cm, minimum height=3cm] (V) at (5.5,0) {Verifier $V$};

% Arrows from top to bottom (1-4) with proper vertical separation
% \draw[->, thick] ($(P.east)+(0,0.2)$) -- node[below, font=\small]{1. } ($(V.west)+(0,0.2)$);

% \advmodel(\qry)
\draw[->, thick] ($(V.north west)+(0,-0.2)$) -- node[above, font=\small]{$\qry$} ($(P.north east)+(0,-0.2)$);

\draw[->, thick] ($(P.north east)+(0,-0.8)$) -- node[above, font=\small]{$\outfn = \model ( \qry ), C_{\trace}$ } ($(V.north west)+(0,-0.8)$);

%\draw[->, thick] ($(P.north east)+(0,-0.8)$) -- node[below, font=\small]{$C_{\trace}$ } ($(V.north west)+(0,-0.8)$);

\draw[->, thick] ($(V.north west)+(0,-1.6)$) -- node[above, font=\small]{$\mathcal{Q}^\model_{1} \dots \mathcal{Q}^\model_{m}$, $\mathcal{Q}^\trace_{1} \dots \mathcal{Q}^\trace_{m^\prime}$} ($(P.north east)+(0,-1.6)$);

%\draw[->, thick] ($(V.north west)+(0,-1.6)$) -- node[below, font=\small]{$\mathcal{Q}^\trace_{1} \dots \mathcal{Q}^\trace_{m^\prime}$} ($(P.north east)+(0,-1.6)$);

\draw[->, thick] ($(P.north east)+(0,-2.6)$) -- node[above, font=\small]{$a_1^\model \dots a_m^\model$,$a_1^\trace \dots a_{m^\prime}^\trace$} ($(V.north west)+(0,-2.6)$);

%\draw[->, thick] ($(P.north east)+(0,-2.6)$) -- node[below, font=\small]{$a_1^\trace \dots a_{m^\prime}^\trace$} ($(V.north west)+(0,-2.6)$);

% \draw[->, thick] ($(P.east)+(0,-0.15)$) -- node[below, font=\small]{3. Reveal path $S$} ($(V.west)+(0,-0.15)$);

% \draw[->, thick] ($(V.south west)+(0,0.1)$) -- node[below, font=\small]{4. Send activations $\Pi$} ($(P.south east)+(0,0.1)$);
\end{tikzpicture}
%\phantomcaption
\caption{Interaction flow in our protocol. The Prover computes $\trace=\Eval(\model,\qry)$ and commits to it via a vector commitment $C_{\trace}={\sf VC}(\trace)$. Verifier runs {$\testTrace$} where the queries of the test algorithm to $\model,\trace$ are answered by the Prover through the vector commitments $C_{\model},C_{\trace}$. See also the alternative protocol description in \cref{sec:constr}.}
\label{prot:our}
\end{figure}

In this section we present  our  construction and discuss its security.
%, instantiating the formal framework of \cref{sec:framework}.
%We propose $\sf RandPathTest$ as our testing mechanism (\cref{sec:sec-discussion}), present our concrete cryptographic protocol compiled from the idealized construction using vector commitments (\cref{sec:final-constr}), and discuss soundness bounds (\cref{sec:full-snd}).

%In addition to our main protocol, we create a version of our protocol in the refereed model of computation~\cite{canetti2013refereed}, which can be found in \cref{sec:ref}.

%\subsection{Security}

%We propose $\sf RandPathTest$ as the testing mechanism to instantiate our framework.
\subsection{The Construction}

Before describing our construction, we briefly recall the relevant structure of neural networks (introduced in \cref{sec:intuition}) and some other items useful to understand how the construction works.
\begin{itemize}
	\item A neural network is organized into $d$ layers, numbered $1$ (input) through $d$ (output). Each neuron $j$ in layer $\ell$ receives inputs from a set $G_j$ of parent neurons in layer $\ell-1$, and its activation is computed as $a_j = \phi\!\big(\sum_{i \in G_j} w_{ij} a_i\big)$, where $w_{ij}$ are the edge weights and $\phi$ is a non-linear activation function. 
\item We consider the  \emph{execution trace} $\trace$ of an inference as the  vector of all activation values $(a_1, a_2, \ldots)$, one per neuron, organized so that each entry is associated with a specific layer and has a well-defined set of parents.

\item The \emph{architecture} of the network---i.e., the graph structure specifying which neurons connect to which, the number of layers, and the layer each neuron belongs to---is public and known to both the Prover and the Verifier. The committed values are the \emph{weights} $(w_{ij})_{ij}$ and the \emph{activations} $(a_1, a_2, \ldots)$; the wiring itself is not secret. This means that when either party refers to a parent set $G_j$ or samples a random neuron in a given layer, both sides agree on what that means.
\end{itemize}

In Figure~\ref{fig:randpathtest} we describe $\sf RandPathTest$, the idealized version of our protocol. As sketched in \cref{sec:intuition}, the test samples a random path from an output node back to the input layer and checks, at each node along the path, that the claimed activation is locally consistent with the correct model's weights and the claimed activations of the node's parents.

In \cref{fig:final-protocol} we describe our final cryptographic protocol
(the reader can also find a high-level perspective of the protocol flow in Figure~\ref{prot:our}).
As shown in \cref{fig:final-protocol}, access to the correct weights is provided via a VC $C_\model$ which we assume has been computed correctly. However, note, the commitment to the trace $C_\trace$ is generated by the adversary, so we have no guarantee that it is indeed the actual trace of the computation of the expected model.
This construction can be seen as an application of a specific instance of the generic compiler using vector commitments described in \cref{sec:compiler}\footnote{\label{foot:notsame} This test also enforces that whatever the Prover committed to in $C_{\trace}$  has to be a trace on a model with the same architecture as $\model$, therefore we do not need to assume that is the case---see also Footnote~\ref{foot:same-arch}.}.

\begin{figure}[ht]
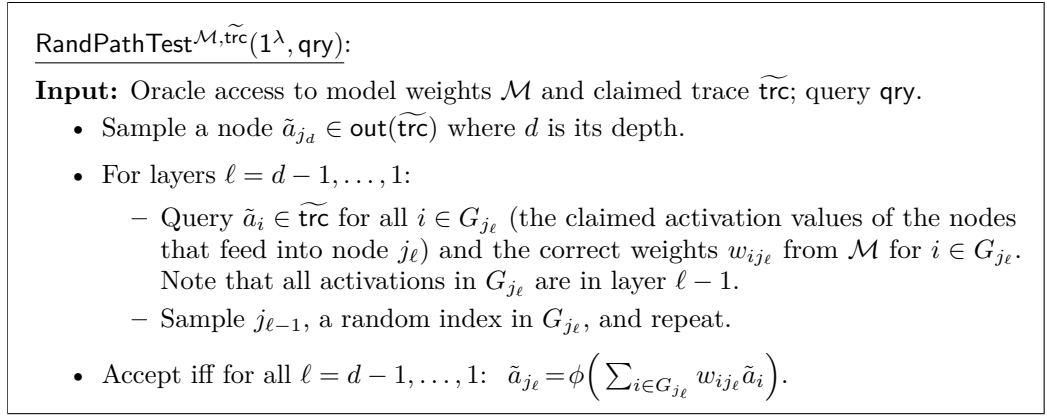

\begin{mdframed}
\smallskip
\noindent
\underline{$\sf RandPathTest^{\model, \advtrace}(1^\lambda, \qry)$}:
\smallskip

\noindent\textbf{Input:} Oracle access to model weights $\model$ and claimed trace $\advtrace$; query $\qry$.
\begin{itemize}[topsep=2pt,itemsep=2pt]
	\item Sample a node $\tilde{a}_{j_d} \in \outfn(\advtrace)$ where $d$ is its depth.
	\item For layers $\ell=d-1,\ldots,1$:
	\begin{itemize}[topsep=0pt,itemsep=0pt]
		\item Query $\tilde{a}_i \in \advtrace$ for all $i \in G_{j_{\ell}}$ (the claimed activation values of the nodes that feed into node $j_{\ell}$) and the correct weights $w_{ij_{\ell}}$ from $\model$ for $i \in G_{j_{\ell}}$. Note that all activations in $G_{j_\ell}$ are in layer $\ell-1$.
		\item Sample $j_{\ell-1}$, a random index in $G_{j_\ell}$, and repeat.
	\end{itemize}
	\item \fussy Accept iff for all $\ell=d-1,\ldots,1$: \
	$\tilde{a}_{j_{\ell}}\!=\!\phi\Big(\sum_{i\in G_{j_\ell}} w_{ij_{\ell}} \tilde{a}_i \Big)$.
\end{itemize}
\end{mdframed}
\caption{The $\sf RandPathTest$ algorithm (our protocol in the idealized model). The test samples a random path from an output node back to the input layer and checks local consistency at each node along the path using the correct model's weights.}
\label{fig:randpathtest}
\end{figure}

%\subsection{From the Idealized to the Concrete Cryptographic Scheme Using Vector Commitments}

%In fact, the {\sf RandTestStrawman} procedure described earlier in the idealized model will fail in this case, since the adversary can simply commit to the correct trace, except for the incorrect output values (which means that all the internal nodes will trivially satisfy the test).

%A main contribution of our paper is to propose {\sf RandPathTest} as a ``good test'' for trace separation, which also works in the stronger adversarial model where the claimed computation trace is built arbitrarily.

\begin{figure}[ht]
\begin{mdframed}
\smallskip
\noindent
\underline{The Final Protocol}:
\smallskip
\begin{itemize}[topsep=2pt,itemsep=4pt]
		\item \textbf{Setup (commitment to the model).} A digest of the model $\model$ is provided to the client via a vector commitment $C_{\model}$ to the weights $(w_{ij})_{ij}$ of the neural network (where $w_{ij}$ is the weight on the edge from neuron $i$ to neuron $j$). We assume $C_{\model}$ to be correctly computed at the end of the training phase. The Verifier holds $C_{\model}$; the Prover holds the decommitting information for $C_{\model}$.
		\item \textbf{Proving stage.} On input query $\qry$:
		\begin{enumerate}[topsep=0pt,itemsep=2pt]
			\item The Prover computes $\trace := \Eval(\model, \qry)$, the execution trace consisting of all activation values $(a_1, a_2, \ldots)$ produced during inference. It sends the Verifier the claimed output $\outfn(\trace)$ together with a vector commitment $C_{\trace}$ to the entire trace.
			\item The Verifier sends a random challenge $\rho$.
			\item Using $\rho$ as randomness, the Prover determines a random path from an output node back to the input layer as done in \cref{fig:randpathtest}.
 This identifies a set of positions in $\model$ (weights along the path) and in $\trace$ (activations along the path and their parents). During this stage, the Prover opens the corresponding entries from $C_\model$ and $C_\trace$ and sends the values and opening proofs to the Verifier.
		\end{enumerate}
		\item \textbf{Verification stage.} The Verifier:
		\begin{enumerate}[topsep=0pt,itemsep=2pt]
			\item Checks that all vector commitment openings are valid with respect to $C_\model$ and $C_\trace$; rejects if any opening fails.
			\item Using randomness $\rho$, reconstructs the same path as the Prover. For each node $j_\ell$ on the path, verifies the \emph{local consistency check}: that the claimed activation $\tilde{a}_{j_\ell}$ equals $\phi\!\Big(\sum_{i \in G_{j_\ell}} w_{ij_\ell} \tilde{a}_i\Big)$, where the weights $w_{ij_\ell}$ come from $C_\model$ and the parent activations $\tilde{a}_i$ come from $C_\trace$.
			\item Accepts if and only if all local consistency checks pass.
		\end{enumerate}
\end{itemize}
\end{mdframed}
\caption{Our concrete cryptographic protocol for verifiable inference. The protocol instantiates the generic compiler of \cref{sec:compiler} with $\sf RandPathTest$ (Figure~\ref{fig:randpathtest}) as the testing mechanism. We instantiate  vector commitments with a  Merkle-tree-based approach.}
\label{fig:final-protocol}
\end{figure}

\subsection{Security}
We now discuss security properties of our construction. Since by \cref{thm:comp}, the compiled cryptographic construction inherits the guarantees of the idealized protocol, in the following discussion we simply focus on the idealized $\sf RandPathTest$.

\paragraph{Other-model soundness.}
Recall that under other-model soundness, the adversary is restricted to running a different model $\widetilde{\model}$ and committing to its honest execution trace. Intuitively, if $\model$ and $\widetilde{\model}$ produce sufficiently different outputs, their activation traces must also diverge at many internal nodes, making the discrepancy detectable by sampling a random path.

To establish this formally, one may apply our theoretical framework from \cref{sec:framework}. By \cref{thm:trace-sep-yields-other-model-snd}, two conditions suffice: (1)~trace separation (\cref{def:sep-prop}), meaning that models with dissimilar outputs produce traces that are far apart; and (2)~the existence of a good separation test (\cref{def:good-test}), meaning that this distance is detectable from a small number of sampled positions. If both conditions hold, the resulting scheme achieves other-model soundness with error $\varepsilon(\lambda) \leq \varepsilon_{\mathrm{tst}}(\lambda) + \varepsilon_{\mathrm{sep}}$. 

We validate both conditions empirically. Our model separation experiments (\cref{sec:om-exps}) confirm that common architectures---ResNet-18 classifiers and Llama-2-7B---exhibit trace separation: models with different training or fine-tuning produce measurably different activation patterns, with separation values bounded well away from zero across all tested paths. The same experiments confirm that $\sf RandPathTest$ successfully distinguishes these traces, acting as a good separation test. We therefore conclude that our protocol satisfies other-model soundness for the model families and input distributions tested.

\paragraph{Strong other-model soundness.}
Our adversarial experiments in \cref{sec:fs-exps} provide evidence that $\sf RandPathTest$ resists natural attack strategies that go beyond the other-model soundness threat model. Under strong other-model soundness (defined formally in \cref{sec:strong-om-snd}), the adversary still uses a substitute model $\widetilde{\model}$ to produce the output but is free to construct the committed trace arbitrarily, rather than submitting the honest trace of $\widetilde{\model}$. This captures a natural threat scenario: an adversary who serves a cheaper or dealigned model and then attempts to forge a trace that passes verification against the committed weights of $\model$.

We evaluated three distinct attack strategies---gradient-descent reconstruction, inverse transforms, and logit swapping---each designed to produce a trace that is consistent with the correct model's weights yet yields an incorrect output. Across millions of evaluated paths, none succeeded: all produced traces with separation values bounded well away from zero at every layer. The fundamental obstacle is that fooling $\sf RandPathTest$ requires constructing a trace that is simultaneously consistent along many paths from output to input, a constraint that tightens combinatorially with network depth and width. We stress that these results demonstrate resistance to specific attack strategies  in the context of specific tasks, not a guarantee against all possible adversaries in any setting; stronger attacks may exist.

The swap attack (\cref{swap_attack}) tests a threat that is closer to full soundness: the adversary takes the true model's output, perturbs it by swapping the highest- and lowest-probability tokens, and then optimizes intermediate activations to match. Even in this setting, 100\% of attack attempts produced measurable activation deviations (minimum separation value of $0.031$), indicating that $\sf RandPathTest$ detects tampering even when no explicit substitute model is involved.

\paragraph{Full soundness.}
Achieving full soundness---where the adversary may produce an arbitrary output and construct the trace in any way it chooses---remains in part an open problem. A fundamental limitation of any single-path sampling strategy is that detection probability cannot exceed $1/N$, where $N$ is the maximum layer width of the network. To see why, observe that an adversary can modify a single node in a layer of size $N$, and a single random path will select that node with probability exactly $1/N$. While this bound does not appear to be a practical limitation under (strong) other-model soundness---where our experiments show detection across all tested paths---it may become relevant when the adversary is free to choose arbitrary outputs and concentrate modifications in as few nodes as possible.

A natural mitigation is to sample multiple independent paths, which increases the detection probability at the cost of larger proofs and more communication.  One can also consider adaptive sampling strategies that prioritize layers or nodes where activations are statistically more sensitive to tampering (we have some partial results from this approach and we are currently developing them further). Understanding the precise tradeoffs between the number of sampled paths, proof size, and the resulting soundness guarantees is an important direction for future work.

\section{Model Separation Experiments}
\label{sec:om-exps}

To show that our protocol satisfies other-model soundness (\cref{sec:omsound}), we must show that the underlying assumptions of separability hold. In this section, we detail our efforts to empirically evaluate this assumption through a series of experiments on common machine learning models.

\subsection{Experiment Design}
\label{sec:om-exps:design}

To validate our assumptions, our experiments must answer two questions:

\begin{enumerate}[topsep=0pt,itemsep=0pt]
    \item Can models with identical architectures but different training data, or models trained on the same data where one is fine-tuned, be distinguished based on their evaluation trace (i.e., the list of all activation values)?
    \item Is our path testing protocol $\mathsf{RandPathTest}$ (\cref{sec:final-constr}) able to distinguish between two models with different evaluation traces?
\end{enumerate}

The first question relates to the notion of trace separation, whereas the second question helps us understand if our protocol is a good separation test. 

In our experiments, to calculate the differences between model traces, we computed the activation differences at each layer. For a verified model $\model$ and an unverified model $\advmodel$, and for all  $1 \leq i \leq L$, this difference, which we call the \textbf{separation value}, is computed as:%\footnote{Implementation includes bias terms, omitted here for clarity.}
\begin{equation}\label{eqn:mformula}
    \lvert\phi((a_\model^{i-1})^{\intercal} W_{\model}^{i})) - \phi((a_{\advmodel}^{i-1})^{\intercal} W_{\model}^{i})\lvert 
\end{equation}

To answer the first question, we select a model $\model$ and a similar model $\advmodel$, and compute \cref{eqn:mformula} for all layers in the network. If we can see a marked difference between the two models then we can claim trace separation. 

%\kayman{ we used same-size datasets for training, we can remove this part.Additionally, we consider the case where $M_A$ is trained on a larger subset set of data from the domain set in order to test whether modified or augmented datasets could affect the internal representations of the final models to see if it shows up as a detectable and measurable divergence in their internal activation patterns. The hypothesis being that since additional data yields higher prediction accuracy this should show up in the weights and hence in the trace separation in the two models. }
To answer the second question, instead of looking at the entire model, we focus  on a single, specific path through the network. So, we replace the activations of $\model$ along a random path $P$ with those from $\advmodel$, and then recompute \cref{eqn:mformula}. 

%\tushar{What is this referring to?} \kayman{someone did this but this experiment is under appendix actually} \tushar{Please integrate with swap attack text in \cref{sec:fs-exps:ad:grad}} We take a random path $P$ from the original model and swap its internal values (activations) with the corresponding ones from the new model. We then recalculate the model's output. This method helps us amplify subtle differences between the two models. Even if the models are identical in every other part, a small change in just one path would be enough to show that they are different. 

In the remainder of this section, we discuss our model separation experiments on two specific architectures: a pre-trained ResNet-18 classifier model and the Llama-2-7B LLM.

\subsection{Model Separation in Classifiers}
\label{sec:resnet-sep}
\subsubsection{Setup}

\begin{figure}[t]
    \centering
    \includegraphics[width=0.8\linewidth]{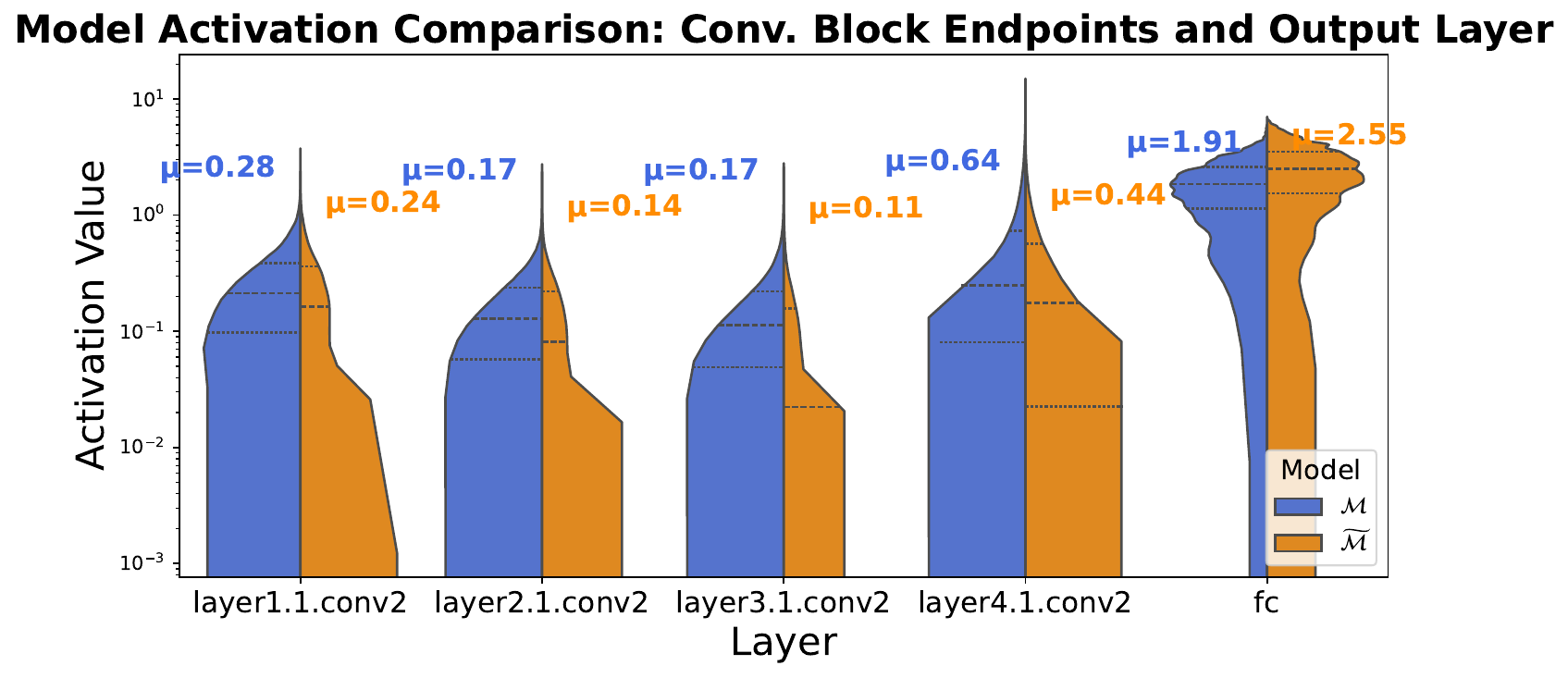}
    \caption{Activation distributions for path-selected neurons at bottleneck block endpoints and FC layer across models. Split violin plots show the probability density for neurons traversed during path selection. $\model$ (blue) and $\advmodel$ (orange) distributions are separated with quartile markers and annotated mean values.}
    \label{fig:violin}
\end{figure}

\begin{figure}[t]
	\centering
	\includegraphics[width=0.70\linewidth]{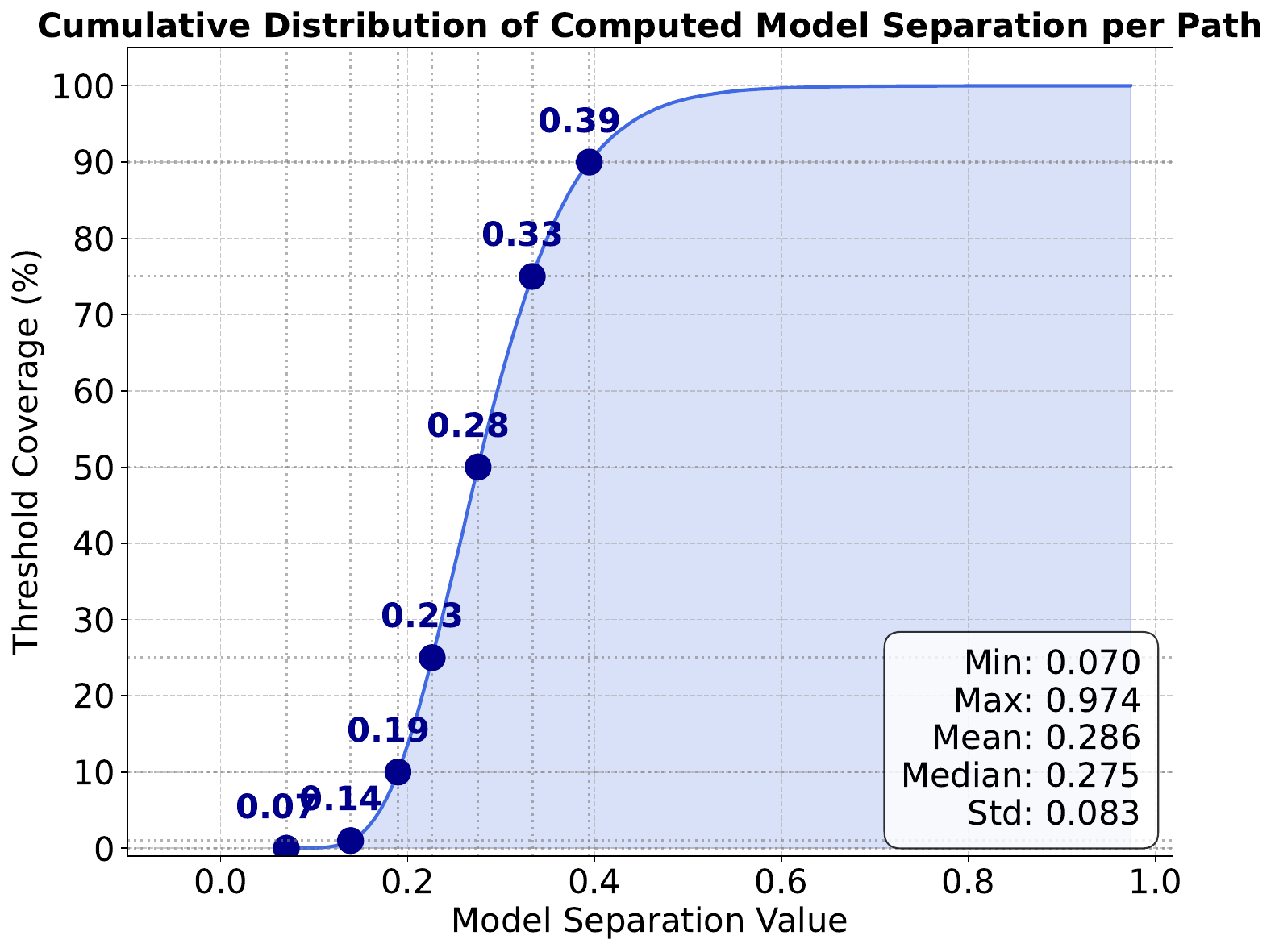}
	\caption{Cumulative distribution of model separation values (\cref{eqn:mformula}) for path-selected neurons, aggregated by mean per image-round path. The x-axis shows separation magnitudes; the y-axis shows the percentage of paths below each threshold. Blue dots indicate percentile thresholds.}
	\label{fig:meanneuron}
\end{figure}

\begin{figure}[t]
	\centering
	\includegraphics[width=0.65\linewidth]{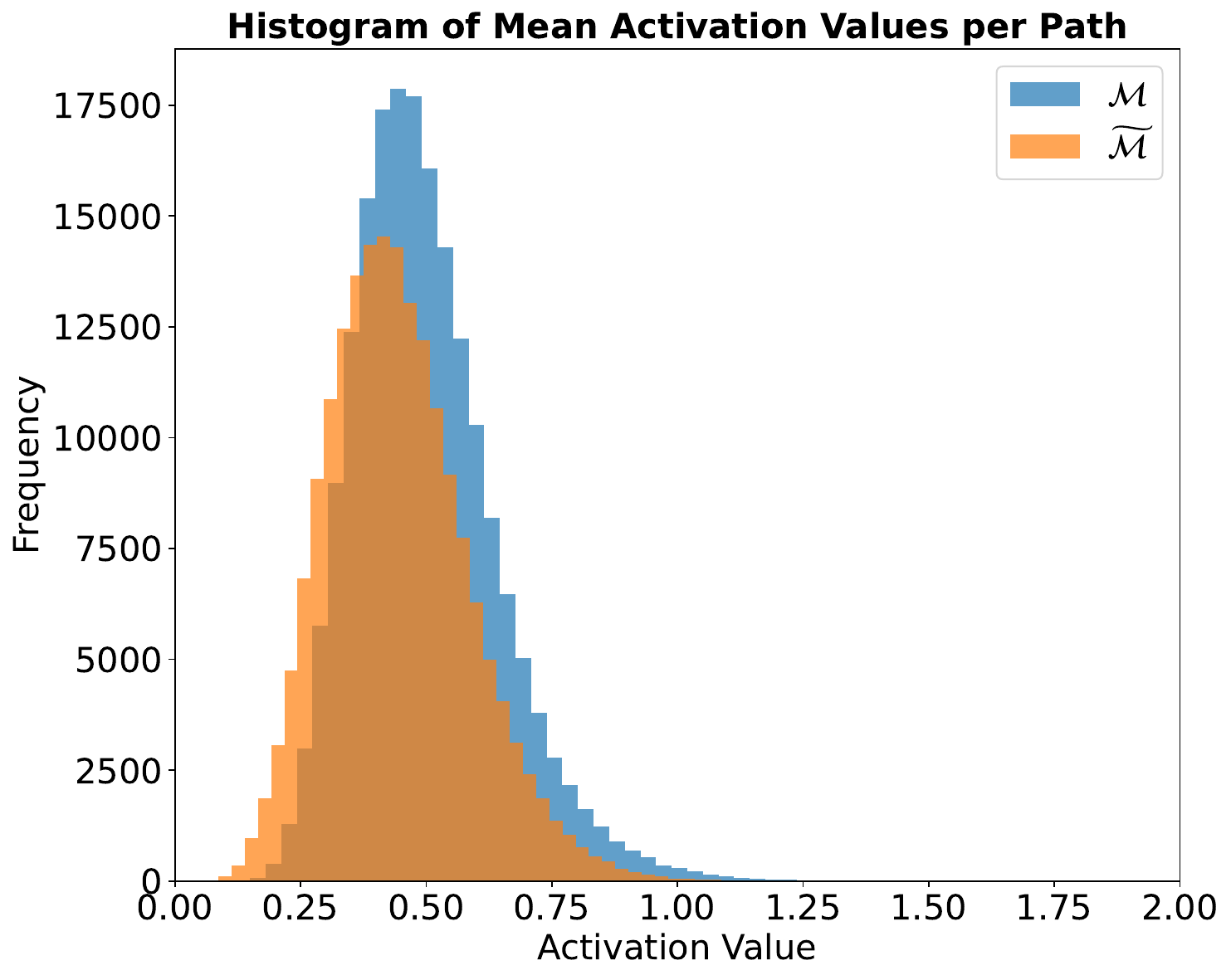}
	\caption{Mean activation values per path. Each point is the average activation across layers for one image-round pair. }
	\label{fig:histogram}
\end{figure}

We selected ResNet-18, a widely-used architecture for classification problems \cite{he2016deep} . When designing the experiments we wanted to use a lightweight yet popular model to ensure that our results are interpretable, transparent and general. We created two instances of this model: $\model$, which classifies images of dogs vs cats, and $\advmodel$ for dogs vs squirrels.\footnote{Training data: Dog, cat, and squirrel classes extracted from Animals-10 dataset from \label{fn:std}\url{https://www.kaggle.com/datasets/alessiocorrado99/animals10}} 

Since we seek to measure the trace separation between the two models, the shared category dog provides a way of linking the performance of the two models. Both $\model$ and $\advmodel$ achieved similar performance on the common task; they achieved 95\% and 98\% accuracy, respectively, on this task, with a 95.3 and 98.4$\%$ F1-score. The presence of the dog class  allowed us to assess how internal representations vary between models with the same architecture but contrasting training classes. 
We hypothesized that this training difference would show up as a detectable and measurable divergence in their internal activation patterns, even when both were processing images from the common dog class.

To answer the questions in \cref{sec:om-exps:design}, we evaluate, using \cref{eqn:mformula}, activation differences (1) across paths through the models and (2) across both models overall. Given the relatively small size of the model (ResNet-18 contains approximately 11M parameters), analyzing all activation values in addition to individual paths is feasible. In our analysis, we focus on two classes of layers: those for \emph{feature extraction} (i.e., Conv2d layers at various depths) and those for \emph{classification} (the final layers including the linear layer that produces the binary classification output). 
 
To ensure robust statistical analysis, we used activations from 3,522 test images for dogs. The images used were those that were positive instances in both models.

\subsubsection{Results}

We first analyzed the distribution of activation differences that occurred during inference on dog images\footnote{From \label{fn:sed}\url{https://www.kaggle.com/datasets/chetankv/dogs-cats-images}.}  when the activations along the random path from the unverified model ($\advmodel$) were substituted into the verified model ($\model$). 

\paragraph{{Activation differences for paths.}} As demonstrated by the violin plot in Figure~\ref{fig:violin}, the initial feature-extraction layers show almost no difference in activations because both models learned comparable low-level features for dog recognition. However, the distributions greatly expand as data moves to deeper, more abstract layers. Despite reaching the same result, the models still diverge. This is particularly noticeable in the last few layers of the classifier, lending support to the trace separation property.

We examined the cumulative distribution of mean model separation values across 176,100 test instances (3,522 $\times$ 50 random path tests for each), which is presented in Figure~\ref{fig:meanneuron}. The minimum separation value, computed using \cref{eqn:mformula}, is 0.070. Although the two models had an similar classification score on the same task, this statistical analysis revealed measurable internal differences in their input processing. 

This pattern is also confirmed by the histogram in Figure~\ref{fig:histogram} of all activation values across 176,100 paths. The pattern is subtle at the initial layers, but increasingly shows much different distributions between the activation in the deeper layers. This reveals that simply looking at the output can mask many characteristic differences in their traces.

\paragraph{{Activation differences across layers.}} We processed the activation values of path-selected neurons layer-by-layer using Jensen-Shannon  divergence (JS) \footnote{Jensen-Shannon divergence (JSD) measures the similarity between two probability distributions. It is a symmetric and smooth version of Kullback-Leibler divergence ($D_{\text{KL}}$), defined as $\text{JSD}(P \parallel Q) = \frac{1}{2} D_{\text{KL}}(P \parallel M) + \frac{1}{2} D_{\text{KL}}(Q \parallel M)$, where $M = \frac{1}{2}(P+Q)$.} . The JS between the activation distributions of both models for path-selected neurons was low in the initial convolutional layers (conv1: 0.0, layer1.1.conv2: 0.198), indicating moderate representational overlap. The divergence values increased through the middle layers, reaching maximum at layer3.1conv2 (0.285), then decreased in the deeper layers (layer4.1.conv2: 0.093, fc: 0.245). The peak divergence in the middle layers corresponds to differences in intermediate feature representations, while the final classifier layer (fc: 0.245) shows how each model created decision boundaries to separate classes. These results show that the two models do exhibit measurable differences at certain layers, which means a path through these layers will also detect.

\paragraph{{Additional experiments}} We discuss additional separation results on classifiers in \cref{sec:addl-sep}.

\subsection{Model Separation in LLMs}
\label{sec:llm-sep}

\subsubsection{Setup}
To answer the questions in \cref{sec:om-exps:design} in the setting of LLMs, we study Llama-2-7b-chat-hf ($\model$) and Llama-2-7b-hf ($\advmodel$), two widely-used open-source models for general text generation \cite{touvron2023llama2openfoundation}. Because the research community has used them extensively, our results on them are more readily interpretable. Both contain $7$ billion weights across $32$ transformer layers with hidden dimension of $4096$, processed using FP16 precision on the GPU. %Our experimental framework, implemented in PyTorch (version: 2.8.0+cu128) with Hugging Face Transformers (version: 4.56.1), performs a comprehensive comparison at both the weight and activation levels. 

As in the ResNet-18 experiments above, we wish to establish that $\model$ and $\advmodel$ are similar.
Our ResNet-18 experiments are oriented around demonstrating model separability clearly by comparing binary classifiers with a common class. But, with our LLM experiments, we wish to show a scenario closer to a real-world deployment for verifiable AI: separability between a fine-tuned model and its base version.

We extracted each LLM's weights and measured the weight differences using the Frobenius norm. The models were closely aligned: cross-layer parameter analysis revealed high consistency across the model's $291$ weight matrices, with a mean relative Frobenius norm of $0.052$.\footnote{Normalization layers showed the highest consistency ($0.010 \pm 0.003$). MLP and attention layers had similar magnitudes (0.063--0.066) with attention layers exhibiting greater variance ($\pm0.014$ vs $\pm0.003$), and the output layer showing expected higher deviation ($0.124$) due to its specialized function.}

%In order to conduct a weight analysis to ascertain that both models are closely aligned with one another, we extracted  including embedding layers, self-attention components (Q, K, V, O projections), Multi-layer Perceptron (MLP) (gate, up, down projections), layer normalization weights, and the output head. 

In this experiment, we analyze model separation between across selected paths.
To validate the formula (\cref{eqn:mformula}) the experimental setup utilized forward hooks to capture intermediate activations during inference, allowing us to analyze internal representations. 
Unlike our ResNet-18 experiments, we do not also analyze the activation differences across both LLMs as these models are prohibitively large to study. We argue that showing divergence in the path activations is sufficient to establish trace separation; indeed, if we can show that one path through the activation traces is different, then necessarily the traces are different. 

In each trial, we select neurons at each layer to form path through the network and measure the activation differences between the two models.  We also verified that selecting the neuron with minimum separation value (most similar) at each layer yields consistent results, confirming that the observed separation is robust across different path selection strategies.

\subsubsection{Result}
\begin{figure}[t]
    \centering
    \includegraphics[width=0.6\linewidth]{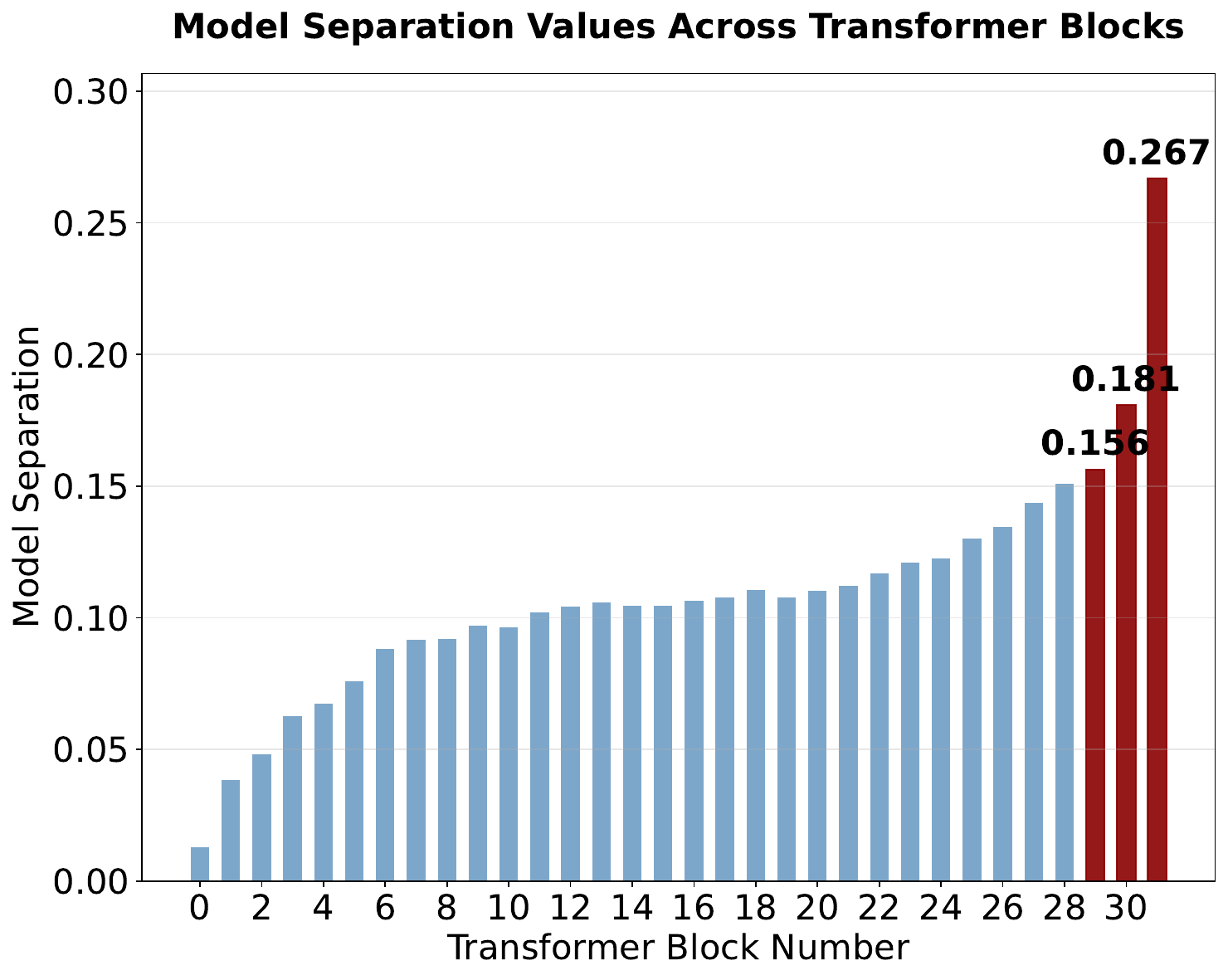}
    \caption{Model separation values across transformer blocks between Llama-2-7b-hf and Llama-2-7b-hf-chat models. The chart shows the mean separation value (Eq. \ref{eqn:mformula}) for each transformer block. Top 3 blocks  are in red. Dataset: 25 sequences.}
    %\caption{Activation difference analysis across transformer blocks showing differences between Llama-2-7b-hf and Llama-2-7b-hf-chat models for selecting the neuron exhibiting the minimum activation difference per layer, with activation patterns. Each bar shows the average absolute difference in activation values between two models for each transformer block. Analysis includes test cases across (15 sequences sample with 13 tokens on average over one round).}
    \label{fig:translayer}
\end{figure}
\begin{figure}[t]
    \centering
    \includegraphics[width=0.7\linewidth]{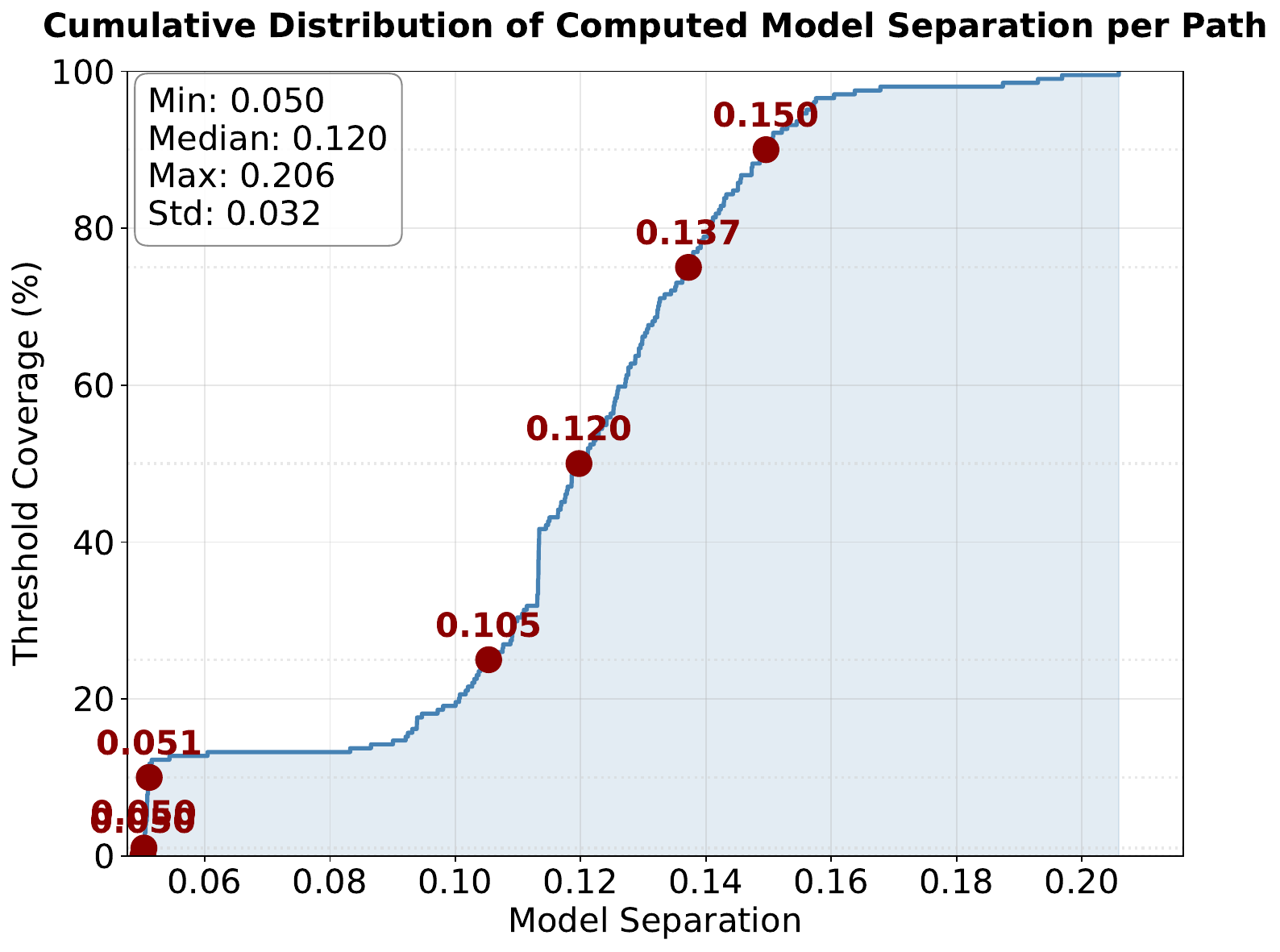}
    \caption{Cumulative distribution of model separation (\cref{eqn:mformula}) between Llama-2-7b-hf and Llama-2-7b-hf-chat, focusing on the last token (output) for each sequence (x: separation value, y: ratio of tests below that threshold, red dots: percentile thresholds). Neurons are randomly selected per layer, with mean values aggregated per path.}
    \label{fig:distLama}
\end{figure}
We recorded intermediate representations at each layer during inference on $25$ token sequences. This led to $290$ layer-wise comparisons per test case. The separation values showed consistent patterns that diverged markedly from the weight similarity, owing to nonlinear activation functions.

The model separation values revealed a clear progressive pattern across the model's 32 transformer blocks (\cref{fig:translayer}). Early blocks (0--8) showed minimal separation with mean values around 0.013--0.092, suggesting that basic feature extraction remains largely unchanged. Middle layers (9--24) exhibited a gradual increase in separation values, reaching around 0.096--0.123. Late layers (25--31) demonstrated sharp spikes in separation, particularly in blocks $25$ (0.130), $30$ (0.181), and $31$ (0.267). The final block ($31$) exhibited the highest mean separation value of approximately $0.267$, indicating that the final representations between the models are the most different. These results show that as earlier blocks capture low-level features, it is harder to distinguish between models, while later blocks process more complex features and show higher separation values, making it easier to distinguish between models that differ in training. This supports the notion of trace separation.
\cref{eqn:mformula} effectively differentiates between closely related models (\cref{fig:distLama}). The model separation analysis shows that the smallest separation is approximately 0.013 (Block 0) and the median separation is approximately 0.105. All of the separation values are greater than 0.01, which means that the differences are clearly distinguishable, hence the experiment design is sound. This shows that our method can reliably find variations in activation patterns across transformer layers, even in models that are very similar in structure. These results demonstrate that our path testing protocol was able to distinguish between two models with different evaluation traces, confirming that the observed separation is robust across different path selection strategies.

\section{Adversarial Attacks on Model Separation}
\label{sec:fs-exps}

We now experimentally investigate easy and efficient attacks which could be mounted against the protocol. To do this, an adversary using $\advmodel$ would need to emulate the valid model in such a way that the commitments to its activations which give the result $y_A$ would seem to be those produced from $\model$.

\subsection{Attack Design}
We designed attacks to determine how difficult it is for an adversary to forge a trace that passes $\mathsf{RandPathTest}$ against the committed model $\model$. To evaluate the efficacy of each attack, we computed the model separation values at each layer using \cref{eqn:mformula}. For the evaluation criteria during path testing, the mean of the set of separation values $\{s_i : 1 \leq i \leq L\}$ should be zero, where $s_i$ denotes the separation value at layer $i$ and $L$ is the total number of layers. High separation values indicate that the adversary failed to reconstruct the verified model's activations.

While the model separation experiments of \cref{sec:om-exps} show that submitting the honest trace of a substitute model $\advmodel$ is detected (other-model soundness), a more capable adversary might try to \emph{forge} a trace: run $\advmodel$ to obtain the output, then construct an arbitrary trace that passes $\mathsf{RandPathTest}$ against $\model$'s weights. This corresponds to the strong other-model soundness notion defined in \cref{sec:strong-om-snd}.

This threat model is practically relevant: an adversary who substitutes a cheaper, quantized, or dealigned model knows $\model$'s committed weights and can attempt to reverse-engineer activations consistent with them. We explore two natural reconstruction strategies:

\subsubsection{Gradient Descent}
\label{sec:fs-exps:ad:grad}
Gradient descent is used during training to adjust the weights of a model through an iterative process. As such, we wish to determine if an adversary can reconstruct the activations in this way, using gradient descent to optimize over \emph{activations} rather than weights. To do so, we first compute: $\outfn \leftarrow \model(\qry)$, then calculate the loss as 
$L= (\advmodel(\qry') - \outfn)^2$ and
$\qry'_i = \qry'_{i-1} - \alpha \frac{\partial L}{\partial \qry'_{i-1}}$, with $\alpha$ being a learning rate. We repeat this process until the gradient descent converges. 

Our test in~\cref{sec:A1} looks at the possibility that the adversary can perform a reconstruction attack by gradient descent on two different models: a simple ANN and the Llama2-7B LLM. 
We hypothesized that simpler models with fewer weights might exhibit a higher susceptibility to adversarial reconstruction, as their reduced complexity could help an attacker's ability to reverse-engineer inputs from observed activations, given knowledge of the data distribution. 

A special case of this attack involves an adversary who attempts to change the semantics of LLM output by swapping output tokens and then trying to reconstruct activations to falsify a proof that the output came from the original model. Due to space constraints, we discuss this attack in \cref{swap_attack}.

\subsubsection{Inverse Transform} We also explored the feasibility of an inverse transform attack using a simple ANN architecture. Due to space constraints, we discuss this attack in \cref{ann_inverse_transform_attack}.

\subsection{Gradient Descent Attack on Basic Models}
\label{sec:A1}

In this experiment, we use a simple, custom model to investigate whether it is possible to reconstruct internal activation structure of a neural network using only model’s weights and observed output via gradient descent.

\subsubsection{Setup}
\label{sec:A1:setup}
% \begin{figure}
%     \centering
%     \includegraphics[width=0.7\linewidth]{figures/TPR-SQ1.pdf}
%     \caption{Layer-wise threshold pass rate analysis for (a) minimum, (b) mean, and (c) maximum activation differences.}
%     %\caption{Layer-wise threshold pass rate analysis for activation differences. Heatmaps show percentage of samples below threshold for (A) minimum, (B) mean, and (C) maximum neuron differences. Each layer ($L_1$, $L_2$, $L_3$) evaluated independently. Color scale: 0\% (red) to 100\% (green) pass rates across threshold values $10^{-4}$-0.1.}
%     \label{fig:soundness_rate_heatmap}
% \end{figure}

The custom model we used in this experiment as our target was intentionally designed to make the adversarial attack as simple as possible: a 3-layer fully connected ANN using min-max scaling (so all values are in the range $[0,1]$), trained on a dataset containing 150 samples and 4 features for 3 species of irises\footnote{\url{https://scikit-learn.org/stable/datasets/toy_dataset.html\#iris-dataset}}. The input layer takes in four features, the two hidden layers have $64$ and $32$ neurons each and use ReLU activation functions, and the output layer has 3 neurons for classification.

 %To test whether adversary will able to recover the network using inverse transformation \tushar{not inverse but grad descent}, we intentionally selected a simple experimental setup that used the PyTorch (version: 2.8.0+cu128) framework and CUDA to support GPU acceleration. The test used an ANN that was trained on the iris dataset \tushar{Need a footnote}. This dataset has $150$ samples and $4$ features that show sepal and petal measurements for three different types of iris \tushar{The flower?}. We used Minmax scaling to make sure that the values were all in the range of $[0,1]$ to make adversarial attack easier. We built a fully connected ANN with only three layers on purpose \tushar{Rephrase: We built a fully connected ANN that was designed to make the adversarial attack as easy as possible: 3 layers, minmax scaling, etc.}.   
 
 The adversarial attack was configured to run $10,000$ epochs with a learning rate of $0.005$, using mean squared error (MSE) loss to minimize the difference between target and predicted activations. We conducted experiments with $125$ different input samples, each tested across $50$ rounds to ensure statistical validity, as the attack is non-deterministic. 

\subsubsection{Results}

Our results can be found in \cref{tab:soundness_thresholds}, where we report the minimum, mean, and maximum separation values across each of the three layers in our custom model, calculated using \cref{eqn:mformula}. The table contains pass rates, i.e., the percentage of adversarially constructed activations whose separation values fall under each threshold. Since the adversary's goal is to achieve zero separation, high pass rates at low thresholds indicate weak soundness (adversary succeeds), while low pass rates indicate strong soundness (verification system successfully detects adversarial inputs). These thresholds thus represent a tradeoff between completeness and soundness: the lower the threshold, the stricter the verification criteria and the better the soundness, though overly strict thresholds may reject legitimate inputs and impact completeness due to floating point error.

Based on the results, there is a steep drop in pass rates as the reconstruction target moves from hidden layers toward the output layer. When examining minimum separation values~-- the best-case scenario for the adversary~-- the earlier layers show that adversarial reconstruction succeeds: $L_1$ and $L_2$ both achieve $100\%$ pass rates at the $6.7 \times 10^{-5}$ threshold, meaning adversaries can easily generate inputs that fool verification for these layers. However, the output layer ($L_3$) exhibits significantly stronger resistance, with only a $5.1\%$ pass rate at the same threshold, showing how reconstructing output layer activations from adversarial inputs is more challenging.

Note, however, our attacker must pass the separation threshold at \emph{every} layer simultaneously, not just one of them. We can see the impact of this fact in \cref{fig:soundness_rate_gd}: The ``All Layers'' curve represents the percentage of attack attempts where all three layers simultaneously pass the threshold. Notably, this curve exactly matches the $L_1$ curve, revealing that $L_1$ acts as the bottleneck layer. This occurs because the adversarial optimization targeting the output layer creates a cascading effect where reconstruction difficulty increases for earlier layers, with $L_1$ consistently exhibiting the highest separation values. At the $10^{-3}$ threshold, while $L_3$ achieves a 78.6\% pass rate, the ``All Layers'' rate matches $L_1$ at 0.0\%. 

% This pattern is further demonstrated by the heatmap in Figure~\ref{fig:soundness_rate_heatmap}, which shows that the output layer ($L_3$) is actually more resistant to reconstruction than hidden layers.
%Based on the formula \tushar{Which formula?}, it requires all layers to simultaneously pass the threshold \tushar{Not clear from the formula above, if that's the one}, none of the samples succeed at the $10^{-4}$ threshold, showing the challenge of achieving consistent reconstruction across the whole network even if the architecture is simple. These findings show that significant differences in internal information processing can be masked by adversarial attacks that focus solely on output alignment, thus highlights the challenges of using output-based verification for model integrity and suggests that the uniqueness of internal activation patterns must be considered by robust interpretability methods.

\begin{table}[t] % placed here for space saving
	\centering
	\caption{\% of Samples Passing Threshold by Metric \& Layer}
	\label{tab:soundness_thresholds}
	\begin{tabular}{l|l|cccc}
		\hline
		&  & \multicolumn{4}{c}{Threshold} \\
		\cline{3-6}
		Metric & Layers & $10^{-6}$ & $6.7 \times 10^{-5}$ & $4.6 \times 10^{-3}$ & 0.308 \\
		\hline
		Minimum & $L_1$ & 100.0\% & 100.0\% & 100.0\% & 100.0\% \\
		& $L_2$ & 100.0\% & 100.0\% & 100.0\% & 100.0\% \\
		& $L_3$ & 0.1\% & 5.1\% & 99.9\% & 100.0\% \\
		\hline
		Mean & $L_1$ & 0.0\% & 0.0\% & 2.5\% & 100.0\% \\
		& $L_2$ & 0.0\% & 0.0\% & 27.5\% & 100.0\% \\
		& $L_3$ & 0.0\% & 0.0\% & 98.7\% & 100.0\% \\
		\hline
		Maximum & $L_1$ & 0.0\% & 0.0\% & 0.0\% & 90.0\% \\
		& $L_2$ & 0.0\% & 0.0\% & 0.4\% & 100.0\% \\
		& $L_3$ & 0.0\% & 0.0\% & 96.7\% & 100.0\% \\
		\hline
	\end{tabular}
	
\end{table}
\begin{figure}[t]
    \centering
    \includegraphics[width=0.65\linewidth]{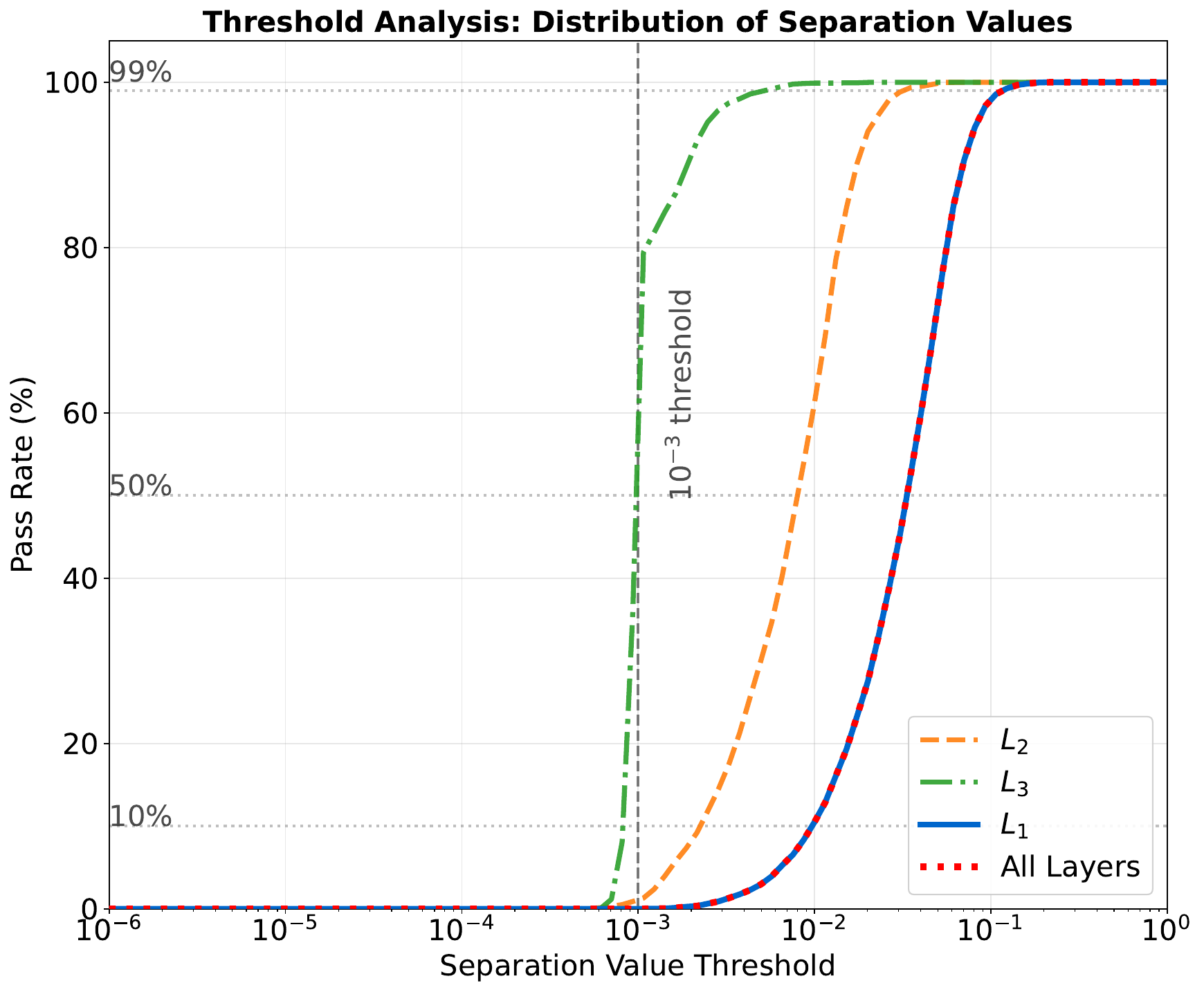}
    \caption{Pass rate analysis of mean separation values (log-scale). Individual layer curves $(L_1,L_2,L_3)$ and the ``All Layers'' requirement show $L_1$ to be the bottleneck for attack success.}
    \label{fig:soundness_rate_gd}
\end{figure}

\subsection{Gradient Descent Attack on LLMs}
\label{sec:A2}
Our second gradient-descent attack targets Llama-2-7B-chat-hf, treating the embedding layer as the input layer.

\subsubsection{Setup}
%Llama Gradient

\begin{figure}[t]
    \centering
    \includegraphics[width=0.65\linewidth]{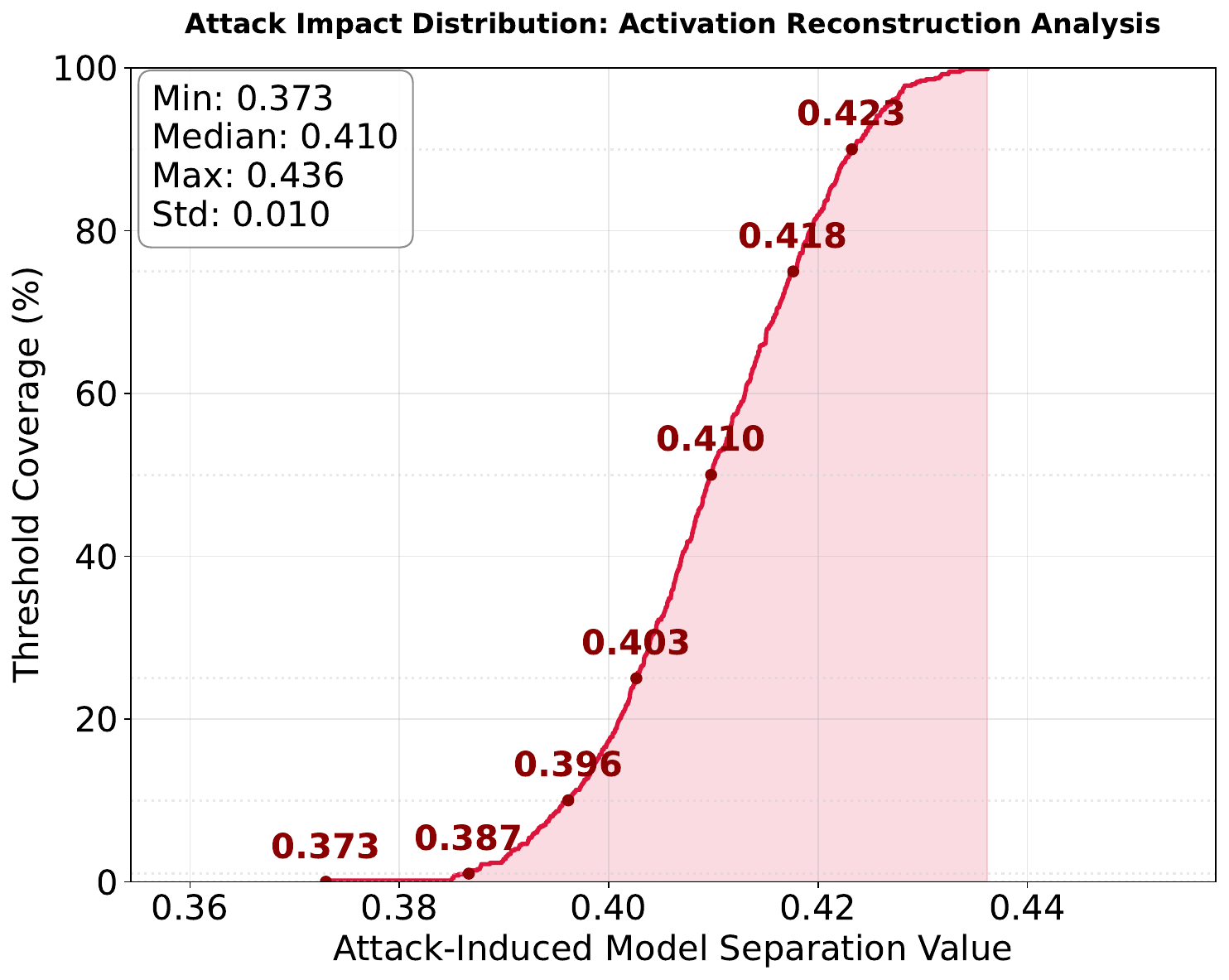}
    \caption{Model separation (Eq. \ref{eqn:mformula}) for Llama-2-7b-hf-chat at last token position, with neurons randomly selected per layer, and mean values aggregated per path. (x: separation value, y: ratio of tests below that threshold, red dots: percentile thresholds).}
    %\caption{Cumulative distribution of cross-model activation differences with minimum aggregation across layers. Shows soundness coverage versus activation difference thresholds, where each test case's activation paths are aggregated by taking the minimum difference across all layers. Key percentiles (0th, 1st, 10th, 25th, 50th, 75th, 90th) are marked with red dots and annotated values. Log-scale x-axis displays the range of activation differences, with statistics box showing distribution summary (Min, Median, Max, Std).}
    \label{fig:soundness_rate}
\end{figure}

%using PyTorch (version: 2.8.0+cu128) with CUDA acceleration on GPU hardware. %The model was loaded in float16 precision with automatic device mapping to optimize computational efficiency while maintaining sufficient numerical precision for gradient calculations. 
We implemented forward hooks across 290 layers spanning all 32 transformer blocks in Llama-2-7B-chat to capture intermediate values during both original inference and reconstruction attempts. The experimental samples consisted of 22 distinct text prompts covering various domains, each tokenized to a maximum sequence length of 10 tokens using the model's native tokenizer with padding. 

For each input, we performed $30$ independent reconstruction attempts using Adam optimizer with learning rate $0.01$ and L2 regularization ($\lambda = 0.001$), with  up to $10,000$ iterations per attempt or until convergence ($\textit{loss} \leq 1e-4$). 

For each reconstruction, we evaluated 5,000 randomly selected neuron paths to measure model separation, plus one additional path where every neuron was tested and the one yielding the lowest path separation value was selected. This resulted in approximately 3.5M path evaluations across all experiments (25 inputs $\times$ 30 reconstructions $\times$ 5,001 paths per reconstruction). This ensures statistical validity and demonstrates robust security against adversarial reconstruction even under relaxed time-complexity constraints.

\subsubsection{Results}

As shown in \cref{fig:soundness_rate}, we measured the model separation between reconstructed activations and the true activations of the valid network. 
We find that across all 3.5M path evaluations, no reconstruction achieved separation values below 0.37, with a minimum observed separation value of 0.373. The cumulative distribution shows that the mean separation value is 0.410, the standard deviation is 0.010, and a median of 0.410, demonstrating that adversarial reconstruction efforts consistently failed to approximate the original activations. 

Key percentiles further illustrate the robustness: the 1st percentile is 0.387, the 10th percentile is 0.396, the 25th percentile is 0.403, the 75th percentile is 0.418, and the 90th percentile reached 0.423. These results show that even when the adversary has the best possible conditions---including relaxed computational time and full model weights access---the model still has sufficient activation complexity to prevent successful activation reconstruction. 

These findings demonstrate that despite relaxing computational constraints and providing complete model information, the inherent non-convexity of the optimization landscape and high-dimensional nature of transformer activations provide robust protection, with 100\% of attempts failing to achieve meaningful reconstruction quality. Even the best-case reconstruction attempt maintained activation differences of $0.373$, indicating substantial deviation from the original activations

\section{Performance Evaluation}
\label{sec:perf}

This section evaluates the performance of our verification protocol and compares it against zkLLM~\cite{CCS:SunLiZha24}, a state-of-the-art scheme based on cryptographic proofs tailored for LLMs.

\subsection{Evaluation Design and Setup}
All experiments were conducted on a machine equipped with an AMD Ryzen 9 9900X CPU, 32GB of RAM, and an NVIDIA GeForce RTX 3090 GPU. Our evaluation focuses on the Llama-2-7B model. For zkLLM, we benchmarked the performance of their system using the publicly available implementation\footnote{Available at \url{https://github.com/jvhs0706/zkllm-ccs2024}.} For our protocol, we measure the time to commit to the intermediate activation tensors of the network and the proof size.
Specifically, our protocol generates a commitment to the trace by committing to the $192$ intermediate tensors in the execution trace, each with dimensions of $64 \times 4096$. To compute this efficiently, we use a hybrid commitment scheme that adapts well to the underlying architecture. This scheme uses a Merkle tree built on row-wise hashes (computed using SHA-256) but varies what it commits to:
\begin{itemize}[itemsep=0pt,topsep=0pt]
\item For matrix multiplication layers, where a non-learned input activation matrix is multiplied by a weight matrix, we commit only to the input activation matrix.
\item For element-wise transformation layers, we commit directly to the output tensor.
\end{itemize}
This hybrid commitment scheme is key to our protocol's high performance, as it does not serialize the entire computation. The proof contains the relevant row and the corresponding Merkle path for each layer in the network.

The zkLLM proof sizes were taken from their paper for the Llama-2-7B model. Although the hardware for the original zkLLM experiments differs, using our own timing benchmarks and their reported proof sizes provides a valuable baseline for understanding the relative performance of the two approaches.

\subsection{Results}
Our analysis reveals a significant performance advantage for our proposed protocol in terms of speed, with a trade-off in proof size. The performance of zkLLM on the Llama-2-7B model is as follows. The proving time for the entire inference process is 388.3 seconds. The proof size is 183 kB. The verification time is 2.36 seconds.

By switching from cryptographic assumptions to the separability assumptions demonstrated in \cref{sec:om-exps}, our protocol demonstrates a substantial improvement in efficiency. For proving time, the commitment and proof generation are remarkably fast. The time to commit to the entire trace is only 5.8 milliseconds. With the tensors in device memory, the proof generation for a single Merkle opening is on average 0.007 milliseconds. For the proof size, the base proof (the activation values of the neurons on the path) is 3.14 MB. This reduces to 1.6 MB using Brotli, a general purpose lossless compression algorithm. The Merkle openings (the cryptographic proof for the opened activations) add $\approx$100 KB to the total proof size, and this portion is not compressible. The total proof size is therefore $\approx$ 1.7 MB. Crucially, in our construction, the client does not possess the full model parameters locally but instead it holds a commitment to it. Therefore, the proof must also include the model weights and their Merkle openings. This effectively doubles the payload size. Consequently, the total proof size is approximately 3.4 MB. Our verification time is 12.44 milliseconds (two orders of magnitude lower).

\section{Conclusion and Future Work}
\label{sec:concl}
This work presents a novel and scalable verification framework for AI model inference that avoids the computational cost of producing a cryptographic proof for the full execution of a large neural network. A key conceptual contribution is a formal framework that allows protocol designers to leverage statistical properties of neural networks---specifically, trace separation between functionally dissimilar models---to argue the security of verifiable inference protocols. We instantiate this framework with a concrete protocol based on a ``random spot checking paradigm'': the prover commits to the execution trace and opens only a small number of entries along randomly sampled paths from output to input. To validate our design (and the assumptions underpinning it), we designed and ran extensive experiments showing that models with different input/output behavior can be distinguished based on a few queries on their traces (list of all activation values).

Our experiments also provide evidence that our protocol resists natural attack strategies beyond the other-model soundness setting our framework formally captures. We designed and ran experiments showing that for the specific strategies we tested---gradient-descent reconstruction, inverse transforms, and logit swapping---it is still difficult to produce a ``fake'' trace that validates an incorrect output. Achieving full soundness, where the adversary may produce arbitrary outputs and forge traces without restriction, remains an open problem. In particular, any single-path sampling strategy faces an inherent detection bound of $1/N$ (where $N$ is the maximum layer width), which may be limiting when the adversary can concentrate modifications in few nodes. Understanding how multi-path sampling and adaptive strategies can tighten this bound is an important direction for future work.

Additionally, we evaluated the proof size and time complexity of our verification protocol and showed that it greatly outperforms a state-of-the-art scheme for LLM verification in proving time, with moderately larger proofs.

Several promising avenues exist for future research. First, we welcome additional analysis of the protocol against more sophisticated and adaptive adversaries. For instance, a compelling adversarial strategy would be to train Generative Adversarial Networks to produce activation paths that are both internally consistent and statistically similar to honest traces. Additionally, exploring the security implications in applying the protocol to quantized models is important, as an adversary can exploit the rounding errors inherent in these lower precision architectures.

The concept of trace separation is widely agreed upon within the machine learning community and our experiments provide evidence of this can be useful for verifiable inference. That being said, more formal theoretical guarantees of trace separation are needed to prove security. While work in formalizing the inner workings of neural networks is ongoing~\cite{marchetti2025algebra}, the fact that no such theorem exists may be more of an indicator that such a result is non-trivial and will require future investigations.

Another key direction is expanding the applicability of the protocol to architectures with fundamentally different data flows, such as Graph Neural Networks (GNNs)~\cite{gnnsurvey}. Developing a secure non-interactive version of our scheme is an open problem: applying the common Fiat-Shamir heuristic to make the scheme non-interactive requires care  to avoid known issues in ``holographic'' protocols  like ours (based on accessing only a small part of the computation/inputs) and with non-neglible soundness error \cite{matteofs}. 

Finally, we plan to enhance the privacy of our protocol by using techniques from the cryptographic proofs literature (in a targeted manner) to avoid revealing raw activation values or potentially proprietary model parameters in the proofs. 
A na\"ive approach may be to simply apply a zero-knowledge proof on top of the verification scheme of our protocol, i.e., upon a path query by the verifier, the prover would answer with a zero-knowledge proof that the openings of the committed values satisfy the correct equations, without revealing the values. We expect this to still be faster than running the zero-knowledge proof directly on the entire network (as in zkLLM~\cite{sun2024zkllm}) since we are running it on a much smaller computation (i.e., the path). There are additional details to be considered, however, leaving an interesting space for future work.
\ifShortversion

\section{LLM usage considerations}

%\tushar{Section does not count towards the page limit.}

Since the design of our protocol for verifying the output of neural networks does not rely on any specific architecture, several of our experiments required the use of LLMs to provide reliable and interpretable results on our claim. Additionally, we do a comparative study of the size of the proof and the time complexity of opening and checking the proofs against a proposed protocol, zkLLM, a state-of-the-art proof scheme tailored for LLM, which are currently the most prevalent architectures of neural networks being deployed in the most general settings. In our experiments, we choose to use Llama 2 with 7 billion parameters, the smallest size released. This model has other important features which made it an important choice: It is open source and which makes it popular and well-understood by the research community which we felt strengthened the interpretability and transparency of our results; its hardware requirements were well within the scope of most retail model computers, making them an affordable option; and it has a lower environmental footprint in the ongoing inference stage. We note that our experiments were not run for more than 40 hours in one session and we did not do any retraining of the open-source model. Furthermore, no LLM commercial or otherwise was used for editorial purposes in this manuscript.  
\fi

%Part of the experiments in this work were conducted on LLMs. 

%need for llms -- popular use case for outsourced computation common deployed model type, useful to show applicability prior work on studying llms (zkllm), scientific contributions.

%In our experiments, we choose to use Llama 2 with 7 billion parameters, the smallest size released. We chose this model for several reasons:

%\begin{itemize}
%    \item open source
%    \item popular with research community
%    \item tractable on commodity hardware. 
%    \item lower environmental footprint of runtime. 
%\end{itemize}

%local hardware, community infrastructure -- using collaborative resources. 

%fewer queries, live analysis. did not re-train -- some fine-tuning (num hours?). pre-trained model, re-use prior work

%LLMs were \emph{not} used for editorial purposes in this manuscript. 

\section{Acknowledgments}

The authors would like to thank Miranda Christ, Nelly Fazio, Ed Felten, Sanjam Garg, Nicola Greco, Mariana Raykova, Adi Shamir and Raúl Vicente for useful discussions. In particular, it was a conversation with Nicola Greco that planted the seed for the approach described in this paper. 

This work was supported by Google under the Cyber NYC program. 
The views and conclusions contained herein are those of the authors and should not be interpreted as those of the sponsor.

\ifFullversion
\else
\bibliographystyle{ieeetr}
\fi
\bibliography{abbrev3,crypto,extra}
\appendix
\ifFullversion
\else
\appendices
\fi
\label{app}

\section{Motivating Scenarios for Other-Model Soundness}
\label{sec:motivation-other-model-snd}
As discussed in \cref{sec:intuition}, other-model soundness is a weaker security notion than full soundness: it only guarantees detection when the adversary runs a different model $\widetilde{\mathcal{M}}$ and commits to its real execution trace, rather than following an arbitrary cheating strategy. Yet, this restricted notion is practically useful. We argue that it captures several natural and pressing threat models, which we discuss through the following scenarios.

\paragraph{Economic incentives for serving outputs from different models.}
Hosting state-of-the-art large language models is expensive. This creates a natural economic incentive for providers to substitute the advertised model with a cheaper alternative (a smaller variant, a heavily quantized version, or simply a less capable model during peak demand~\cite{substitution2025}). Even when the intent is not malicious (e.g., load balancing during traffic spikes), undisclosed substitution breaks service agreements. It  can  also invalidate benchmarks that users rely upon when selecting a provider. Our protocol directly addresses this: if a provider commits to the weights of the advertised model $\mathcal{M}$ but serves inference from a different model $\widetilde{\mathcal{M}}$, verification will detect this with high probability.\footnote{This is true even if $\widetilde{\mathcal{M}}$ is a quantized or distilled version of $\mathcal{M}$, provided the two models produce sufficiently different output distributions on typical queries.}

\paragraph{Safety alignment.}
Other-model soundness can also address threats to safety alignment. Modern LLMs undergo extensive alignment procedures to ensure they refuse harmful requests~\cite{ouyang2022training}. However, recent work has shown that these safety measures are surprisingly brittle. Fine-tuning on as few as 100 examples can strip alignment from open-weight models~\cite{qi2023finetuning}, and the cost of doing so is minimal (under \$200 for a 70B-parameter model). Meanwhile, ``uncensored'' variants (models explicitly modified to remove safety constraints) are widely available on public repositories~\cite{uncensored2025}.

A provider might substitute a safety-aligned model $\mathcal{M}$ with a dealigned variant $\widetilde{\mathcal{M}}$ to reduce user friction from refusals, or to handle requests the aligned model would decline. Since dealigned models produce outputs from fundamentally different distributions, our verification approach can detect such substitutions. This provides users and regulators with assurance that they are interacting with the certified, safety-aligned model rather than an unsafe substitute.

\paragraph{Training data compliance.}
In some settings model provenance matters legally or contractually. Consider a provider whose model $\mathcal{M}$ has been audited to certify that it was trained only on properly licensed data. The provider may be tempted to serve a different model $\widetilde{\mathcal{M}}$ if this alternative performs better. The latter may be trained on additional data, including material the provider had no right to use. This is not a hypothetical concern: the question of training data provenance is increasingly relevant in light of ongoing litigation over copyrighted content in training corpora~\cite{zhao2024watermarking}. Our notion of other-model soundness can capture this scenario.

\section{Formal Background on Vector Commitments}
\label{apdx:vec}

Informally, a vector commitment scheme maps a vector of length $n$ of elements in some alphabet\footnote{By \textit{alphabet} we refer to an arbitrary finite set. For simplicity, the reader can think of an alphabet as the set of binary strings of fixed length. However, an alphabet can be any convenient way to describe an object (in cryptography is occasionally common to use finite fields as alphabets, for example).}  $\alphab$.
Formally, a vector commitment scheme  over alphabet $\alphab$ consists of the following algorithms:
\begin{itemize}
	\item \(\mathsf{GenParams}(1^\lambda, n)\): On input a security parameter \(\lambda\) and vector length \(n\),
	outputs public parameters \(\mathsf{pp}\).
	\item \(\mathsf{CommitVec}(\mathsf{pp}, v_1,\dots,v_n)\): Outputs a commitment \(\cm\) to the vector \(v  \subseteq \alphab^n \).
	\item \(\mathsf{Open}(\mathsf{pp}, v, i)\): Generates a proof \(\pi_i\) for the value \(v_i\) at position \(i\).
	\item \(\mathsf{Verify}(\mathsf{pp}, \cm, i, v_i, \pi_i)\): Returns 1 if \(\pi_i\) is a valid proof that
	\(v_i \in \alphab \) corresponds to the value in the committed vector at position \(i\), and 0 otherwise.
\end{itemize}

A secure VC must satisfy \emph{binding} (no efficient adversary can produce two different openings for the same index that both
verify against the same commitment, except with negligible probability), \emph{position binding} (no adversary can produce an accepting opening proof that claims a false value $\tilde{y} \neq v_i$ for position $i$ against commitment $\cm = \mathsf{CommitVec}(\pp, v)$ ).
Efficient vector commitments are expected to have short commitments (often constant size), short proofs
(polylogarithmic or constant size), and fast verification.

\section{Cryptographic Verifiable Inference Schemes}
\label{sec:ver-inf-def}
\label{sec:basics-models}

\subsection{Basic Formal Definitions  for Models}
In order to be as precise as possible in our construction, we will define a very general abstract syntax and semantics for \textit{family of models}. For the focus of this specific paper, the reader can think of families of models as \textit{neural networks}; however, our notion is intentionally general in order to be potentially applicable in other settings.

\begin{definition}[Family of models]
\label{def:model}
A family of models over an alphabet $\alphab$ consists of a tuple $\left( \nM, \nQ, \nTrc, \Eval, \idxsOut\ifFullversion, f_{out}\fi\right)$ where:
\begin{itemize}
	\item  $\nM, \nQ, \nTrc \in \NN$ are positive integers denoting the size of the model, query and trace respectively (see also next items). Whenever we refer to an object as model, query or (evaluation) trace, we are referring to an element in the sets $\alphab^\nM, \alphab^\nQ, \alphab^\nTrc$ respectively.
	\item  $\Eval : \alphab^\nM \times \alphab^\nQ \to \alphab^\nTrc$ is an efficiently  computable deterministic function that associates to a model and a query a trace string. Given a model $\model \in \alphab^\nM$ and a query $\qry \in \alphab^\nQ$, we refer to $\trace := \Eval \left( \model, \qry \right)$ as the trace corresponding to the evaluation of $\model$ on $\qry$ (or evaluation trace for short). See also  \cref{rem:examples-models-trace} for some intuitions.
	\item $\idxsOut \subseteq \{ 1, 2, \dots, \nTrc \}$ is a set of of positions containing the outputs in an evaluation trace (see also definition of $\outfn$ below).
	\ifFullversion
	\item $f_{out} : \Sigma^{|\idxsOut|}\to \bit^*$ is a function that processes output-related parts of the trace---with indices $\idxsOut$---and returns the output (see also definition of out $\outfn$ below and \cref{rem:examples-models-trace}). 
	\fi
\end{itemize}
Given a family of models, we define two additional notations that will be convenient to talk about outputs:
\begin{itemize}
\item \textbf{Output values in a trace string: } we define $\outfn\left( \cdot \right)$ as the function that maps a  trace $\trace$ (an element of $\alphab^\nTrc$)  to its outputs. 
It is defined as some fixed function of some subset of the trace (e.g., some specific layer of the network).
Explicitly, \ifShortversion$$\outfn(\trace) := \left( \trace_j \right)_{j\in \idxsOut}.$$\else$$\outfn(\mathsf{trc}) := f_\mathsf{out}\left( \mathsf{trc}_j\right)_{j \in \mathsf{Idxs}_\mathsf{Out}}.$$\fi
Notice that the verifier can compute this on their own by querying the appropriate elements of the trace, i.e., the positions in $\idxsOut$.
%Notice that $\trace$ is not required to correspond to the evaluation of any model (function $\outfn$ simply returns the specific values in a string that contain output-related values).
\item \textbf{Output of a model run on a query: } given a model $\model \in \alphab^\nM$ and a query $\qry \in \alphab^\nQ$, we denote by $\model(\qry)$ the actual output of the evaluation of $\model$ on query $\qry$. That is, $\model(\qry) := \outfn \left( \Eval \left( \model, \qry \right) \right)$.
\end{itemize}
\end{definition}

\begin{remark}[Traces are binding to outputs]
\label{rem:on-outputs-of-traces}
We will use the following straightforward observations to simplify the definition of \cref{def:idealized} (see also \cref{rem:asymmetry-defs}). Consider a model $\model \in \alphab^{\nM}$, a query $\qry \in \alphab^\nQ$ and an arbitrary trace string $\advtrace \in \alphab^\nTrc$  (potentially provided by an adversary). Let $\trace := \Eval \left( \model, \qry \right)$ be the genuine evaluation trace of $\model$ on $\qry$. Then, if the adversary provides $\advtrace \in \alphab^\nTrc$ (the alleged evaluation trace of $\model$ on $\qry$), then it is also implicitly providing (and committing to) the alleged outputs of $\model$ on $\qry$; these are simply $\outfn(\advtrace)$ and, by definition of $\outfn$, can thus be simply retrieved by accessing\footnote{This type of random access is exactly the one that the client has in \cref{def:idealized} and that we enforce cryptographically (through vector commitments) in our construction/compiler in \cref{sec:compiler}.}  the positions in $\advtrace$ with indices in $\idxsOut$.
If $\advtrace$ is such that $\outfn(\advtrace) \neq \outfn(\trace)$ then, obviously,  by definition, $\outfn(\advtrace) \neq \model(\qry)$.
\end{remark}

\begin{remark}[Some examples and intuitions]
\label{rem:examples-models-trace}
For intuition, consider the case in which the family of models consists of feed-forward neural networks.
Here, $\nQ$ corresponds to (an encoding of) the input layer: a query $\qry \in \alphab^{\nQ}$ simply encodes the input features given to the network.
Similarly, $\nM$ encodes all network parameters (e.g., weights and biases flattened into a single string in $\alphab^{\nM}$).
The value $\nTrc$ naturally corresponds to the total number of neuron activations across all layers for a single forward pass.

Under this interpretation, $\Eval(\model,\qry)$ computes the usual forward pass of the neural network on input $\qry$ and returns a trace string containing all intermediate activations in a fixed order.
Finally, $\idxsOut \subseteq \{1,\dots,\nTrc\}$ is taken to be the set of indices corresponding to the output layer of the network, so that $\outfn(\trace)$ recovers exactly the final-layer activations.
Consequently, $\model(\qry) = \outfn(\Eval(\model,\qry))$ is the standard notion of the network’s output on input $\qry$.

\ifFullversion
For the case of an LLM, we could have $\idxsOut$ correspond to the layer $\vec{z}$ of logit values and then define $f_{out}$ as:
$$f_{out}(\vec{z}) := \mathsf{softmax}(\vec{z}).$$
\fi
\end{remark}

\subsection{More On Our Design Choices}
\label{appx:why-dist}
\paragraph{Model and Query Distributions.}
Our formalism describes models and queries as being sampled from distributions over them.
We remark that this is encompasses a broad range of situations. For example, in our definition of other-model soundness (\cref{def:idealized}) it includes the case where there is a single honest model (the support of distribution $\Dmodel$ is a singleton, e.g., comprising the output of a model trained on a legal dataset) and we are concerned that the adversary might use for inference a model that has been trained on an illegal dataset (in this case the support of $\advDmodel$ will also be a singleton, including only the output of the training on the forbidden dataset\footnote{We stress that the forbidden dataset may be adversarially chosen (as a function of the legal one) and may  in principle overlap substantially with the legal dataset.}).
Probability distributions over model are also expressive enough. For example, given a set $\mathcal{S}$ of ``plausible'' datasets, we could define $\Dmodel^{\mathcal{S}}$ so that sampling from that distribution, i.e., $\model \sample \Dmodel^\mathcal{S}$, essentially encodes ``$\model$ is any model with a specific architecture trained through a specific learning algorithm\footnote{With  specified parameters, such as learning rate and number of steps.} and on any of the datasets in $\mathcal{S}$''.
Having distributions on models and queries also permits \textit{constraining} appropriately what are models and queries where we believe our assumptions required for security are true.
Finally, we point out that we implicitly assume that all models we sample have the same architecture. This is without loss of generality: it is harder for the verifier to distinguish whether the adversary is using a particular model or not for inference when these have the same architecture. Therefore, whatever claim of security holds for the ``same architecture'' case, will hold for the ``heterogeneous architecture'' case.

\paragraph{Distance of Outputs.}
A ``successful'' adversary is intuitively one that is able to convince the client that the output of a model on a specific query is $\tilde{y}$, when instead it is some $y \neq \tilde{y}$. Our techniques will intuitively leverage the fact that the \textit{more distant} an adversarial output is from the honest one, the \textit{easier} it will be to detect it. For this reason we adopt a more general notion of ``successful'' adversary, one where the adversary may be required to provide not only $\tilde{y}$ distinct from the honest output, but one that is \textit{sufficiently far} from it (above a certain threshold).

We point out that the security against ``far enough'' outputs is still of substantial applicability. An application may be robust with respect to ``small deviations'' from the honest output and thus not require security in that particular case.
At the same time, there are intuitive ways in which supporting large deviations can be a useful security notion: depending on the distance definition, large deviations may for example capture the case of the adversary claiming an output in a completely different class (and this may be the type of false claims the application protects against). As an example, consider classification loss functions; a loss function using a soft target distribution (e.g., from word embeddings) penalizes mistakes less in cases where the predictions are of semantically similar class, while cross-entropy loss with one-hot encoding treats any misclassification as a large error.

\section{Our Protocol in the Refereed Model}
\label{sec:ref}

\subsection{Background}

An interesting development in the verifiable computation is the refereed model of computation~\cite{canetti2013refereed}, which addresses the problem of verifying a computation $y=F(x)$ in a setting where two parties, Prover $P$ and Challenger $V$, make conflicting claims, $y_1=F(x)$ and $y_2=F(x)$, respectively. The model operates under the assumption that at least one party is honest. The objective is to enable an efficient, sublinear-time verification process by a third party, the Referee $R$, to determine which claim is valid.

The protocol leverages a challenge-response mechanism akin to a bisection on the computation's execution trace: $P$ provides an intermediate state S of the machine's execution trace. If the state $S$ is inconsistent with the prover's claim, a discrepancy exists. $V$, assumed to be honest, identifies the segment of the computation—either from the initial state to $S$ or from $S$ to the final state—that is inconsistent with the correct execution. $V$ then challenges $P$ to reveal an intermediate state within the erroneous segment. This iterative bisection process continues until a single, incorrect computational step is isolated.

The protocol terminates after $O(\log T)$ rounds, where $T$ is the number of steps in the computation of $F$. The final round requires the Referee to verify only a single, one-step computation between two adjacent states. The simplicity and high efficiency of this protocol have made it a cornerstone of optimistic rollup architectures (e.g., Arbitrum \cite{kalodner2018arbitrum}, Optimism \cite{armstrong2021ethereum}), which accept off-chain computations as valid by default and only initiate a fraud-proof process—governed by this refereed model—in response to a challenge.

The application of the refereed model to verify the outputs of large-scale AI systems, such as deep neural networks, was initially proposed in~\cite{opt-ai}. While that work correctly identifies the inherent suitability of the refereed computation model for this domain—given the sublinear verification complexity and the often-extensive computational traces of large AI models—it does not provide empirical results or an experimental validation of this hypothesis.

\subsection{Construction}

In the refereed model we have two parties $P_1,P_2$ who make competing assertions about the output of a model $\model$ on input $\qry$. That is, $P_i$ claims $\model(\qry) = y_i$ with $y_1 \neq y_2$. We assume that both $P_1,P_2$ know the full description of $\model$ and that the vector of weights defining $\model$ was honestly committed to via a vector commitment $C_{\model}$ available to all parties (both $P_1$ and $P_2$ know how to open $C_{\model}$ in order to provide proof of correctness for individual weights). As in our previous protocol, $C_{\model}$ serves as the ``ground truth'' about the model we are trying to verify.

Let $n$ be the total number of nodes in model $\model$, and assume that the nodes are linearly ordered in a canonical way via topological sort. Without loss of generality we assume that the output of the model is the activation of the last node and the total number of nodes $n$ is a power of 2 $n=2^m$. Let $H$ be a collision resistant hash function. The parties  engage on the following protocol:
\begin{itemize}
    \item $P_i$ produces a commitment to a claimed hashed trace $C_i={\sf VC}(\htrace_i)$. If $P_i$ is honest it computes
    \begin{itemize}
        \item $\trace_i=\Eval(\model,\qry)=[a_{i,0},\ldots,a_{i,n}]$ the vector of activation values;
        \item $A_{i,k}=[a_{i,k},H(a_{i,0},\ldots,a_{i,k-1})]$ the pair of activation value $a_{i,k}$ and the hash of all the previous activation values;
        \item $\htrace_i=[A_{i,0},\ldots,A_{i,n}]$
    \end{itemize} Note that $A_{1,n} \neq A_{2,n}$ since the parties make competing assertions. Also note that there exists an index $k_{\sf in}$ such that if $k \leq k_{\sf in}$, $a_{i,k}$ is an input node and therefore $A_{i,k}$ should match the input $\qry$.
    \item Set $k=0$ and $\ell = n$
    \item While $\ell > k+1$ do the following
        \begin{itemize}
            \item set $u=(k+\ell)/2$
            \item $P_i$ opens $A_{i,k}, A_{i,\ell}, A_{i,u}$ w.r.t. $C_i$
            \item if $k \leq k_{in}$ and $A_{i,k}$ does not match $\qry$ reject $P_i$'s claim and stop
            \item if $A_{1,u} \neq A_{2,u}$ then $\ell \gets u$ else $k \gets u$
        \end{itemize}
    \item If not stopped before, the loop ends with $a_{1,k}=a_{2,k}$ and $a_{i,\ell} \neq a_{2,\ell}$ with $\ell=k+1$. The party $P_i$ open $a_{i,j}$ and $w_{j,\ell}$ for $j \in G_{\ell}$, i.e., the activations and the weights of all the nodes that feed in node $\ell$. Accept the claim of the party $P_i$ for which $a_{i,\ell} = \sum_{j \in G_{\ell}} \phi(w_{j,\ell} a_{i,j})$.
\end{itemize}

\begin{theorem}
If at least one of the two parties is honest the protocol always terminate by accepting the correct claim.
\end{theorem}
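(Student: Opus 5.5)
We argue by a loop invariant on the bisection, combined with the binding properties of the vector commitments $C_1,C_2,C_\model$ and collision resistance of $H$. The statement holds except with negligible probability: the adversary could break binding or find a collision in $H$, and we account for both events at the end. Let $P_h$ be the honest party and $P_c$ the other, and write $\trace=\Eval(\model,\qry)=[a_0,\ldots,a_n]$ for the correct trace. Then $\htrace_h$ is the correctly formed hashed trace and, by binding of ${\sf VC}$, every value $P_h$ opens equals the corresponding entry of $\htrace_h$.

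\textbf{Invariant.} At the start of every iteration, $A_{1,k}=A_{2,k}$ (or $k$ is still $0$ and lies in the input region) and $A_{1,\ell}\neq A_{2,\ell}$.

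Initially this holds at $\ell=n$, because the parties make competing claims, and at $k=0$, because the input check pins $A_{i,0}$ to $\qry$. Otherwise $P_c$ is rejected immediately, which is the correct outcome, since the honest opening always passes. The update rule clearly preserves the invariant. Because $\ell-k$ halves in each iteration, the loop terminates after $m=\log n$ rounds with $\ell=k+1$. Note that in each round $P_c$ must open values consistent with its single commitment $C_c$, so it cannot adapt earlier answers.

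\textbf{Final step (the main obstacle).} At termination we have $A_{1,k}=A_{2,k}$ and $A_{1,k+1}\neq A_{2,k+1}$. The second component of $A_{i,k+1}$ is $H(a_{i,0},\ldots,a_{i,k})$. Equality of $A_{1,k}$ and $A_{2,k}$ gives equal hashes of the prefixes up to $k-1$ and equal $a_{\cdot,k}$. By collision resistance this forces equal prefixes $a_{1,0..k-1}=a_{2,0..k-1}$, hence $H(a_{1,0..k})=H(a_{2,0..k})$. Therefore the disagreement at position $k+1$ must lie in the activation value: $a_{1,k+1}\neq a_{2,k+1}$.

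Moreover, the honest party's prefix is the true prefix $a_0,\ldots,a_k$. Since the nodes are topologically sorted, every parent $j\in G_{k+1}$ satisfies $j\le k$. So the parent activations of node $k+1$ agree with the true ones for both parties, and the weights are fixed by the binding of $C_\model$. The local check therefore evaluates the true function of the true parent values. $P_h$ passes it because $a_{h,k+1}$ is the true activation, and $P_c$ fails it because its value differs from that same determined quantity. (The paper's formula places $\phi$ inside the sum; we use whatever local rule $\Eval$ defines, consistently for both parties.)

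The delicate point is exactly this hash-chaining argument. It is what lets the referee conclude that the parents were agreed upon, and hence correct, without opening them against the prefix.

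There is one gap to close. The parties open the parent activations $a_{i,j}$ at positions $j\in G_{k+1}$ of their own commitments, and these openings must equal the committed prefix entries. Binding of $C_i$ ensures this. For $P_c$, the prefix it committed hashes to the agreed value, so by collision resistance it coincides with the honest prefix. Finally, collecting the negligible failure probabilities from the VC binding and from $H$ (a union bound over $O(\log n)$ rounds) yields the theorem up to negligible error.
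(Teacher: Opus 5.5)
\textbf{The hash-chaining step does not go through.} Your skeleton matches the paper's three-point sketch: the input check keeps the honest party alive, the bisection invariant is $A_{1,k}=A_{2,k}$, $A_{1,\ell}\neq A_{2,\ell}$, and the final local check is justified by the topological order. The failure is at the step you flag as delicate. The identity ``the second component of $A_{i,k+1}$ is $H(a_{i,0},\ldots,a_{i,k})$'' holds only for the honest party. A cheating $P_c$ commits to an arbitrary vector of pairs $[\alpha_t,h_t]$, and the protocol never checks $h_t=H(\alpha_0,\ldots,\alpha_{t-1})$. The referee never sees a prefix, and a flat hash of the whole prefix cannot be checked locally.

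So collision resistance says nothing about $P_c$'s trace. Your sentence ``the prefix it committed hashes to the agreed value'' assumes exactly what must be proved. Both conclusions you draw from it are false for the protocol as written:
\begin{itemize}
\item \emph{Garbage hashes.} Let $P_c$ commit to the true activations and true hash components up to some position $p$ beyond the input region, true activations with garbage hash components afterwards, and its false output at position $n$. The bisection stops at $\ell=p+1$. There both parties open the true activation and the true parents, so both pass the final check and the referee has no basis for choosing.
\item \emph{Copied hashes.} Let $P_c$ copy every honest hash component, change one activation $a_{j_0}$ in the last hidden layer, and recompute the output from it. The disagreement set $\{j_0,n\}$ is not upward closed. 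If $j_0$ is not among the midpoints $n-n/2^i$, the bisection stops at $\ell=n$. Then $P_c$'s opened parent value at $j_0$ is not the true one, yet its local check passes.
\end{itemize}

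\textbf{How this relates to the paper, and a fix.} The paper's sketch (``finds the first node in which the committed traces disagree \ldots due to the collision resistance of $H$'') makes the same unstated well-formedness assumption. So your argument faithfully reproduces it, but a correct proof needs a change to the protocol, not just more care in the analysis. The standard repair:
\begin{itemize}
\item Let the state at position $k$ be a vector commitment $S_k$ (e.g.\ a Merkle root) to the prefix $a_0,\ldots,a_{k-1}$, and bisect on the pairs $[a_k,S_k]$.
\item Have the referee check the one-step update $S_k\to S_{k+1}$ locally. Agreement at $k$ then forces the disagreement at $k+1$ into the activation value.
\item Open the parents of node $k+1$ against the \emph{agreed} $S_k$ (position $k$ being the agreed $a_k$), not against each party's own $C_i$. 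Binding of the common commitment then forces both parties to use the true parent values, rather than an appeal to a hash the cheater never had to compute.
\end{itemize}
With that modification, your invariant and topological-order argument goes through essentially verbatim, including the union bound over binding and collision failures.
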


\begin{proof}[Proof sketch]
The claim is due to the following facts:
\begin{itemize}
\item The honest prover will never be caught inside the while loop;
\item The protocol bisects through the internal nodes until it finds the \emph{first} node (in the canonical order) in which the committed traces disagree (this is due to the collision resistance of the function $H$)
\item Therefore in the last step only one of the two provers can survive the test (since $a_{1,j}=a_{2,j}$ for all the $j \in G_{\ell}$ since the topological sort guarantees that $j<\ell$.
\end{itemize}
\end{proof}

\section{Inverse Transform Attack} \label{ann_inverse_transform_attack}

\subsection{Attack Design}

Since the functions used in network computations are pseudo invertible, we utilize an inverse transform to explore the possibility that an adversary can adjust activations in a way that they will align with the claimed output. We used three different ways to invert matrices: The Moore-Penrose pseudo-inverse $\mathbf{W}^+ = (\mathbf{W}^T\mathbf{W})^{-1}\mathbf{W}^T$ for general least-squares solutions, the SVD-based inverse  $\mathbf{W}^{-1} = \mathbf{V}\mathbf{S}^{-1}\mathbf{U}^T$ for numerical stability with ill-conditioned matrices, and the regularized inverse $\mathbf{W}^{-1} = (\mathbf{W}^T\mathbf{W} + \lambda\mathbf{I})^{-1}\mathbf{W}^T$ with $\lambda=10^{-4}$ to deal with possible singularities.

With white-box access to model parameters, we apply matrix inversions one at a time, starting with the output layer and working our way back: First, we compute $\mathbf{a}^{(2)} = (\mathbf{W}^{(3)})^{-1}\mathbf{y}$, then $\mathbf{a}^{(1)} = (\mathbf{W}^{(2)})^{-1}\mathbf{a}^{2}$, and finally $\hat{\mathbf{x}} = (\mathbf{W}^{(1)})^{-1}\mathbf{a}^{1}$. On the ReLU with $h = \max(0, h)$ this attack was not successful; however for sigmoid-activated layers with $h \in [0, 1]$ it was. We study this attack on a simple ANN in \cref{ann_inverse_transform_attack} and note that this reconstruction attack approach for LLMs would be even more infeasible, due to its highly complex non-linearities (e.g., SiLU activations, skip-connection layers).

%We saw that it was not possible to recover activations even in a simpler ANN architecture with ReLU and so we did not run the same test on LLMs, which have nonlinearities such as SiLU and a more complex architecture which would make it harder to recover activations.

\subsection{Experiment Setup}

To evaluate the inverse transform attack, we utilize the same simple ANN architecture described in \cref{sec:A1:setup}. We tested 120 input samples over 3 different inversion algorithms and compared reconstructed inputs and intermediate activations against ground truth values using absolute difference metrics (minimum, maximum, and mean) at each layer. The accuracy of the reconstruction is mostly limited by the difficulty of inverting certain activation functions and numerical precision limitation, and not because of computational intensity.

\begin{table}
\centering
\caption{Percentage of Samples Passing Soundness Threshold by Method and Layer}
\label{tab:soundness_thresholds:2}
\begin{tabular}{l|l|cccc}
\hline
 &  & \multicolumn{4}{c}{Threshold} \\
\cline{3-6}
Method & Layers & $10^{-4}$ & 0.001 & 0.01 & 0.1 \\
\hline
Pseudo Inverse & $L_1$ & 0.8\% & 9.2\% & 60.8\% & 100.0\% \\
 & $L_2$ & 0.0\% & 1.7\% & 17.5\% & 73.3\% \\
 & $L_3$ & 100.0\% & 100.0\% & 100.0\% & 100.0\% \\
 & \textbf{All Layers} & \textbf{0.0\%} & \textbf{0.0\%} & \textbf{11.7\%} & \textbf{73.3\%} \\
\hline
SVD & $L_1$ & 0.8\% & 9.2\% & 60.8\% & 100.0\% \\
 & $L_2$ & 0.0\% & 1.7\% & 17.5\% & 73.3\% \\
 & $L_3$ & 100.0\% & 100.0\% & 100.0\% & 100.0\% \\
 & \textbf{All Layers} & \textbf{0.0\%} & \textbf{0.0\%} & \textbf{11.7\%} & \textbf{73.3\%} \\
\hline
Regularized & $L_1$ & 0.0\% & 5.0\% & 62.5\% & 100.0\% \\
 & $L_2$ & 0.0\% & 1.7\% & 17.5\% & 73.3\% \\
 & $L_3$ & 100.0\% & 100.0\% & 100.0\% & 100.0\% \\
 & \textbf{All Layers} & \textbf{0.0\%} & \textbf{0.0\%} & \textbf{12.5\%} & \textbf{73.3\%} \\
\hline
\end{tabular}
\end{table}

\subsection{Results}

Table \ref{tab:soundness_thresholds:2} shows that the attack was unsuccessful. The output layer ($L_3$) was able to perfectly reconstruct all $120$ test samples ($100\%$ passing even the strictest $10^{-4}$ threshold). However, the quality of the reconstruction got worse at deeper layers because of the cumulative effects of inverting activation functions. The most strict soundness criteria showed the attack's flaws: at the $10^{-4}$ threshold, no samples were able to reconstruct all layers at the same time for any inversion method. Only $0.8\%$ of samples passed at $L_1$ and $0\%$ at $L_1$. However, relaxing the threshold to weaker levels create a more successful adversary:  at the $ 0.1$ threshold, $73.3\%$ of samples achieved successful reconstruction across all layers simultaneously. In practice, note that, theoretically, our threshold could be $0$, but to account for potential completeness concerns, we can set a reasonable $10^{-4}$ threshold---and even then, the adversary is unsuccessful.

The success of the reconstruction is heavily reliant on both the characteristics of the activation function and the structure of the network. For example, while invertible activation functions like sigmoid theoretically allow for inverse computation, our experiments utilized ReLU activations that cause irreversible information loss at negative values, which accounts for the observed degradation in reconstruction.
Additionally, the dimensional relationships between layers---especially when a layer has more output neurons than input neurons---make systems that are overdetermined, which makes reconstruction more accurate even though these relationships are not linear. All three inversion methods yield remarkably consistent results, with only minor variations in the regularized approach at $L_1$. This confirms that reconstruction quality is fundamentally limited by information loss at activation boundaries, numerical precision, and model architecture, rather than by the choice of inversion technique.
The fact that reconstruction fidelity gets worse as depth increases---from perfect at the output layer to progressively worse---supports our analysis.

\section{Swap Attack on LLMs} \label{swap_attack}

\subsection{Attack Design}
\begin{comment}
Similar to gradient descent used during training to adjust activations, we investigate whether an adversary can manipulate activations to produce malicious outputs. Unlike the activation reconstruction attack (\cref{sec:fs-exps}), the swap attack employs a logit injection strategy where the adversary first swaps the logit values of the highest and lowest probability tokens to create a malicious target output, then optimizes intermediate activations to achieve this adversarial target while preserving the malicious logits as the desired output.
\end{comment}
The previous attacks test strong other-model soundness (\cref{sec:strong-om-snd}), where the adversary's output comes from a specific substitute model. We now consider a more aggressive attack that is closer to full soundness: the adversary takes the true model's output and directly perturbs it (by swapping the highest and lowest probability tokens), then tries to adjust intermediate activations to match. Unlike the previous attacks, no explicit substitute model is involved in this attack. The adversary constructs both the malicious output and the forged trace. Specifically, the swap attack employs a logit injection strategy in which the adversary first swaps the logit values of the highest- and lowest probability tokens to create a malicious target output and then optimizes intermediate activations to achieve this adversarial target while preserving the malicious logits as the desired output.

Specifically, we first compute the original model output: $\outfn \leftarrow \model(\qry)$, and create malicious target logits $\mathbf{l}_{\advmodel}$ by swapping the values at indices $\textit{argmax}(\outfn)$ and $\textit{argmin}(\outfn)$. We then initialize an optimizable activation tensor $a_{\advmodel}$ from the output of the first block's input normalization and inject it into the forward pass after the embedding layer. The adversary uses gradient descent to minimize:
$$L = (\outfn - \outfn_{\advmodel})^2 + \lambda \|a^{(2)}\|^2$$

where $a^{(2)}_{\advmodel}$ is updated as:

$$a^{(2)}_{\advmodel,i} = a^{(2)}_{\advmodel,{i-1}} - \alpha \frac{\partial L}{\partial a^{(2)}_{\advmodel,{i-1}}}$$

with $\alpha$ being the learning rate and $\lambda = 0.001$ for L2 regularization. During optimization, only the last token's activations are updated (gradients for other positions are zeroed), and we iterate until convergence or up to 10,000 steps. This approach tests whether manipulating post-embedding, pre-transformer activations can successfully induce malicious model behavior.
\subsection{Experiment Setup}
We used the same experimental setup as described in \cref{sec:A2}, with 290 forward hooks across all 32 transformer blocks capturing intermediate activations for 25 text prompts. However, unlike the activation reconstruction attack, the swap attack modifies the optimization objective and injection point.

For each input prompt, we first generated malicious target logits by swapping the highest and lowest probability output tokens. We then performed 10 independent reconstruction attempts, optimizing activations at the first transformer block's input normalization layer (rather than at the embedding layer) using the same Adam optimizer configuration (learning rate 0.01, L2 regularization $\lambda = 0.001$) with up to 5,000 iterations or until convergence ($\textit{loss} \leq 10^{-4}$).

Each reconstruction was evaluated on 5,001 paths (5,000 random + 1 optimized), yielding approximately 3.75M path evaluations (25 inputs $\times$ 30 reconstructions $\times$ 5,001 paths) to measure separation values between original and adversarially modified activations.

\subsection{Results}

\begin{figure}[t]
    \centering
    \includegraphics[width=0.7\linewidth]{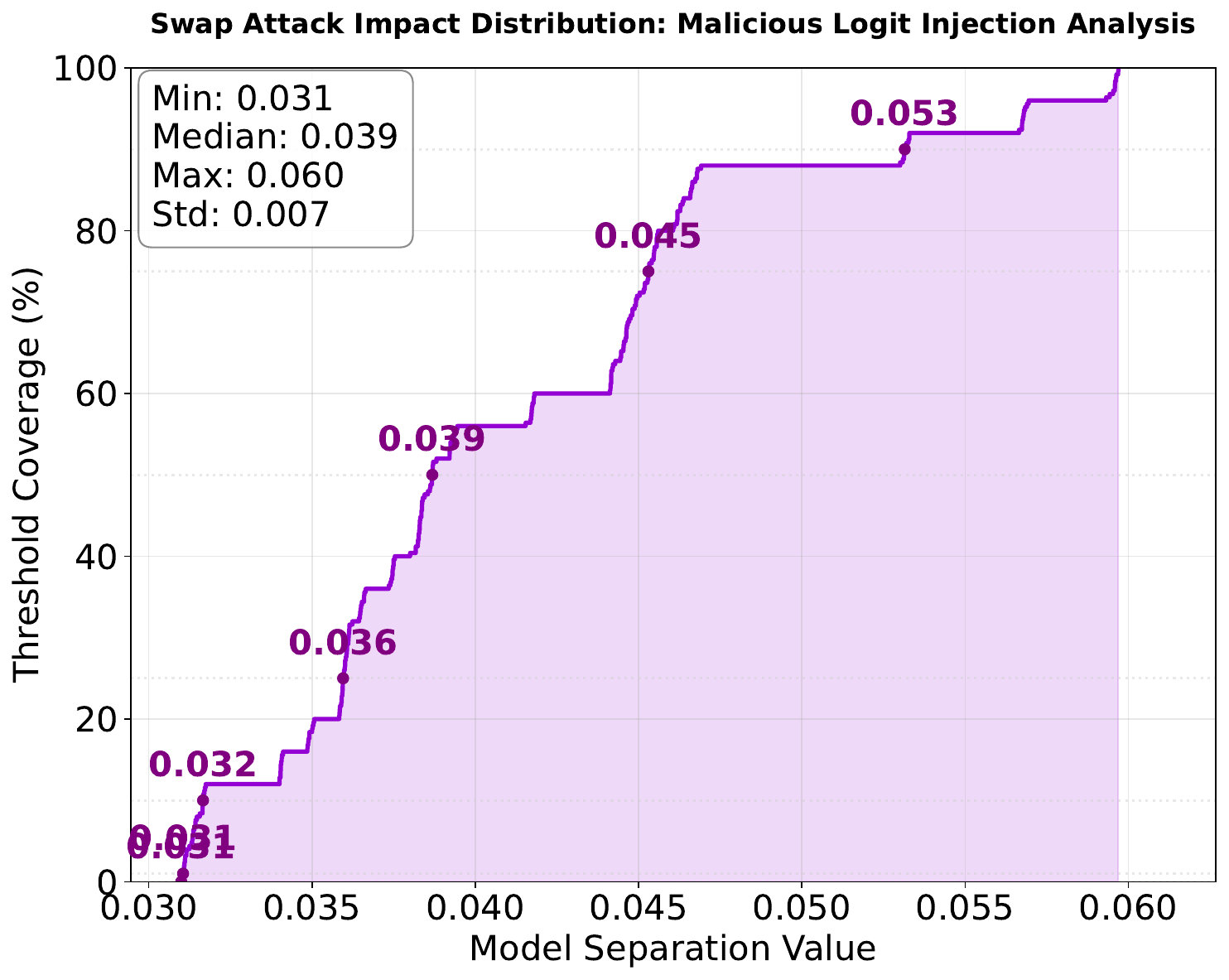}
    \caption{Model separation values (\cref{eqn:mformula}) for Llama-2-7b-hf-chat at last token position, with neurons randomly selected per layer and values aggregated by mean per path. Red dots mark key percentile points (x: separation value, y: percentage of tests below that threshold).}
    \label{fig:swap_attack_soundness}
\end{figure}

Our results, \cref{fig:swap_attack_soundness}, show the model separation values between adversarially-modified activations and the original activations under the logit swap attack.

We find that across evaluations, the swap attack achieved significantly lower separation values compared to direct activation reconstruction. The minimum observed separation value was 0.031, with a cumulative distribution showing a mean of 0.041, standard deviation of 0.010, and median of 0.039. Despite the adversary's ability to directly manipulate intermediate activations and target specific malicious outputs through logit swapping, the separation values remained bounded.

Key percentiles illustrate the attack's limited impact: the 1st percentile is 0.031, the 10th percentile is 0.032, the 25th percentile is 0.036, the 75th percentile is 0.045, and the 90th percentile reached 0.053. While these separation values are lower than those observed in direct activation reconstruction (\cref{sec:A2}), they still demonstrate that the model maintains detectable differences between original and adversarially-manipulated activation paths. Even when the adversary can optimize activations at intermediate layers with full model access and relaxed computational constraints, the high-dimensional nature of transformer activations prevents perfect replication of the original forward pass.

These findings show that while manipulating post-embedding activations is more effective than reconstructing inputs from scratch, 100\% of swap attack attempts still produced measurable activation deviations, with the best-case maintaining a separation value of 0.031~-- an order of magnitude smaller than activation reconstruction but still indicating detectable differences in the activation space.

\section{Additional Model Separation Experiments} 
\label{sec:addl-sep}
\begin{figure}[t]
	\centering
	\includegraphics[width=0.70\linewidth]{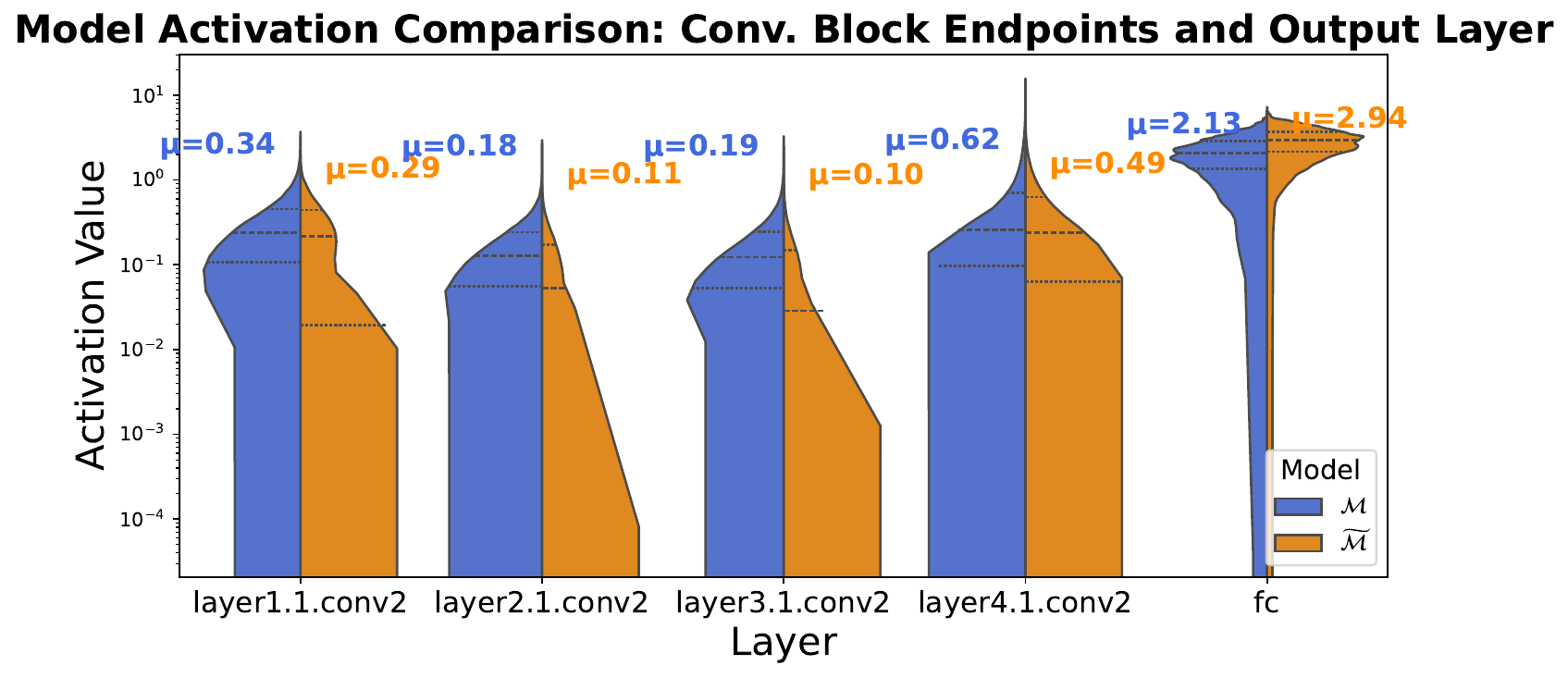}
		\caption{Activation distributions for path-selected neurons. Split violin plots show the probability density for neurons traversed during path selection across models trained on identical classification but different dog image subsets.}
	\label{fig:violin-v2}
\end{figure}

\begin{figure}[t]
	\centering
	\includegraphics[width=0.70\linewidth]{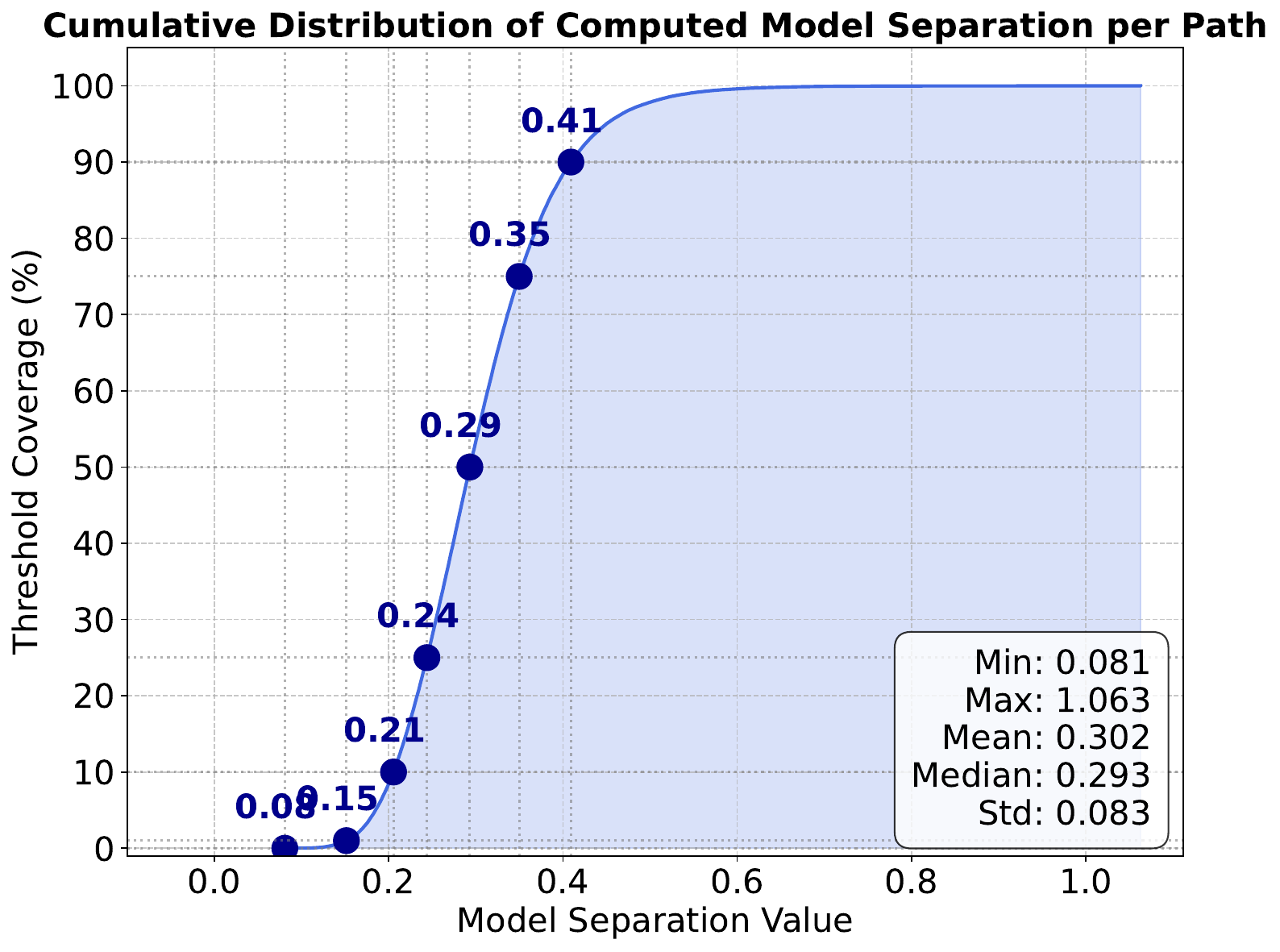}
	\caption{Cumulative distribution of model separation values. Shows separation magnitudes for path-selected neurons when both models perform dog vs cat classification with different dog training subsets.}
	\label{fig:meanneuron-v2}
\end{figure}

\begin{figure}[t]
	\centering
	\includegraphics[width=0.65\linewidth]{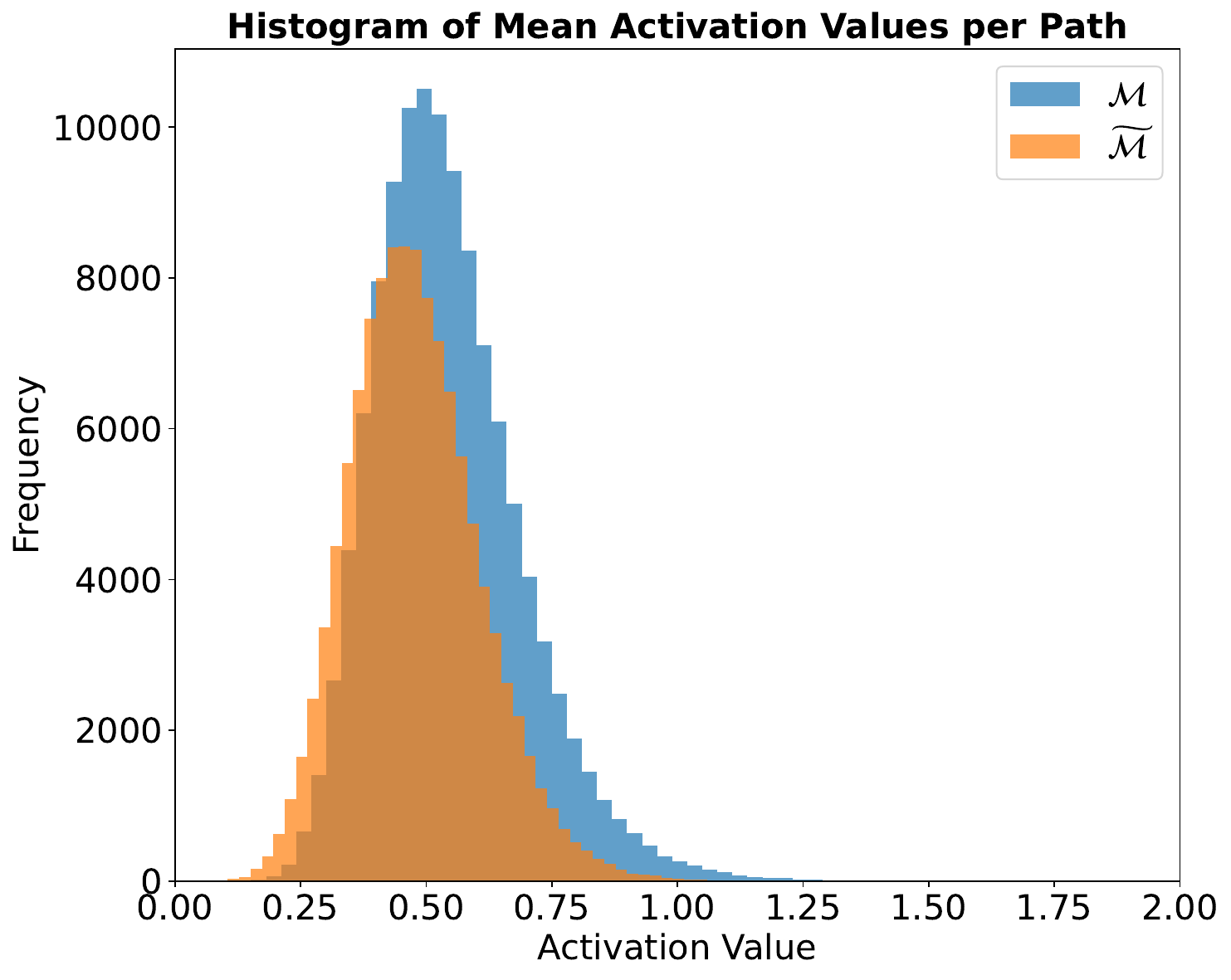}
	\caption{Mean activation values per path for dog vs cat classification with different dog training subsets. Each point represents the average activation across layers for one image-round pair.}
	\label{fig:histogram-v2}
\end{figure}
To ensure the robustness and generalization of our trace separation findings, we conducted two additional sets of experiments that study if model separation holds for classifiers trained on the same task and for classifiers trained at a higher granularity.

\subsection{Model Separation in Classifiers for the Same Task}
\label{sec:addl-sep:same-task}

Our first extension is to see if we could measure trace separation on models trained on the same classification task. We replicated the experimental setup in \cref{sec:resnet-sep} with two classifiers, but changed it as follows: we partitioned the dog dataset into disjoint subsets $\textrm{dog}_A$ and $\textrm{dog}_B$ (each with 1,500 images). $\model$ was given the entire cat dataset plus $\textrm{dog}_A$ as its fine-tuning set, and $\advmodel$ was given the same cat dataset but with $\textrm{dog}_B$ instead. So, both $\model$ and $\advmodel$ have the same classification task~-- distinguishing between cats and dogs~-- with a shared class but different subsets of data for the other task. 

We reserved a pool of 2,509 validation images for trace separation tests. The minimum separation value, computed using \cref{eqn:mformula}, is 0.081, with a mean of 0.302 and median of 0.293 across all paths. Notably, this is a higher degree of separation than shown in our initial experiment in \cref{sec:resnet-sep}.

The results demonstrated in Figure~\ref{fig:violin-v2}, the activation distributions for path-selected neurons, show similar patterns to our initial experiment in \cref{sec:resnet-sep}. The initial feature extraction layers (conv1) show virtually no difference (JS: 0), as both models learned comparable low-level features. However, the distributions expand substantially in deeper layers, with peak divergence occurring at layer3.1.conv2 (JS: 0.269). The final classifier layer (fc) maintains significant divergence (JS: 0.268), indicating that each model developed distinct decision boundaries despite solving identical tasks.

The cumulative distribution analysis (Figure~\ref{fig:meanneuron-v2}) reveals that 25\% of paths exhibit separation values below 0.244, while 75\% remain below 0.350, demonstrating consistent measurable differences across the vast majority of inference paths. The mean activation histogram (Figure~\ref{fig:histogram-v2}) confirms that these differences are clearly separable across the entire model. Thus, the results from this experiment demonstrate that even when models solve the same classification problem and share one class's training data, differences in the other class's training images induce measurable and consistent trace separation. Our results are in line with related results in understanding neural networks~\cite{zhang2022neural}. Indeed, our results show that trace separation fundamentally captures dataset-level variations, instead of merely distinguishing classification objectives.

\subsection{Model Separation in Classifiers for Granular Tasks}
\label{sec:addl-sep:granular}

\begin{figure}[t]
	\centering
	\includegraphics[width=0.70\linewidth]{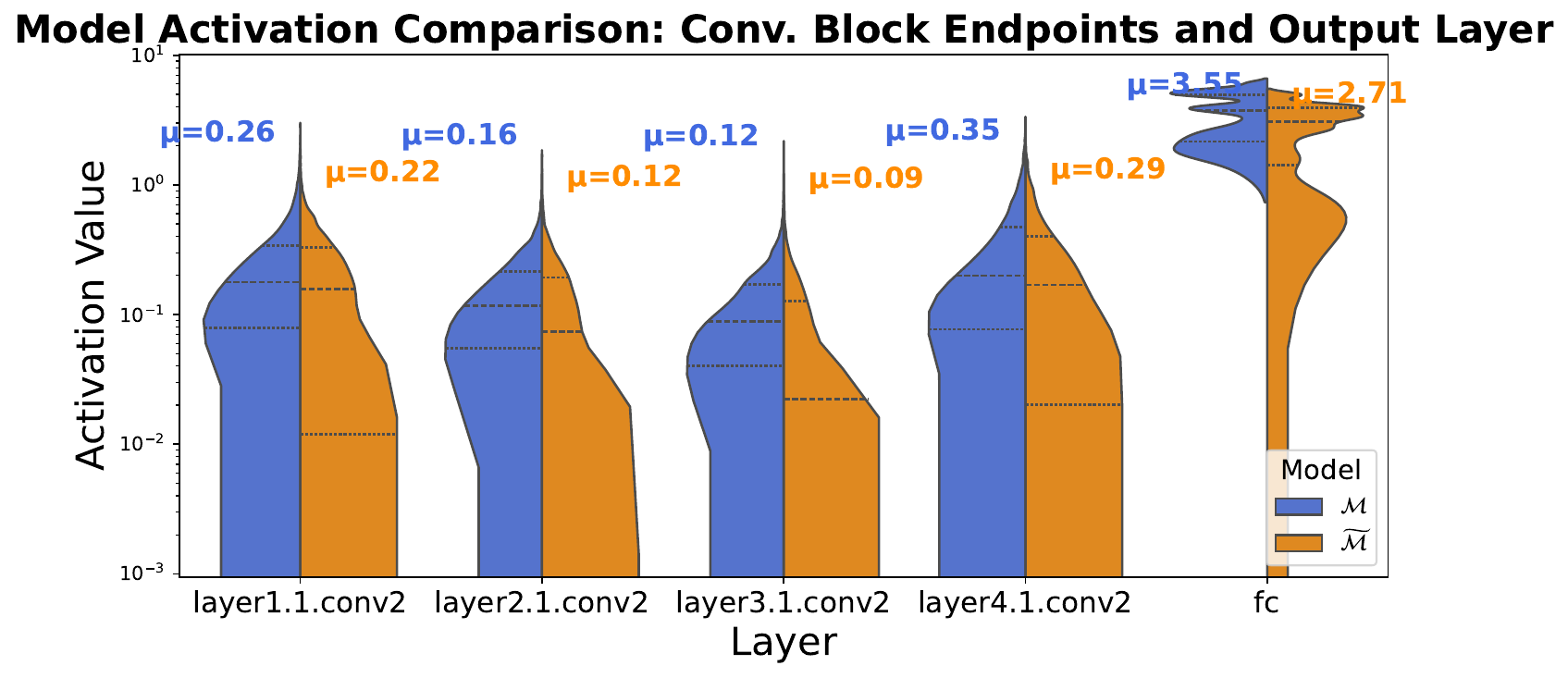}
	\caption{Activation distributions for path-selected neurons. Split violin plots show the probability density for neurons traversed during path selection across fine-grained dog breed classification models.}
	\label{fig:violin-v3}
\end{figure}

\begin{figure}[t]
	\centering
	\includegraphics[width=0.70\linewidth]{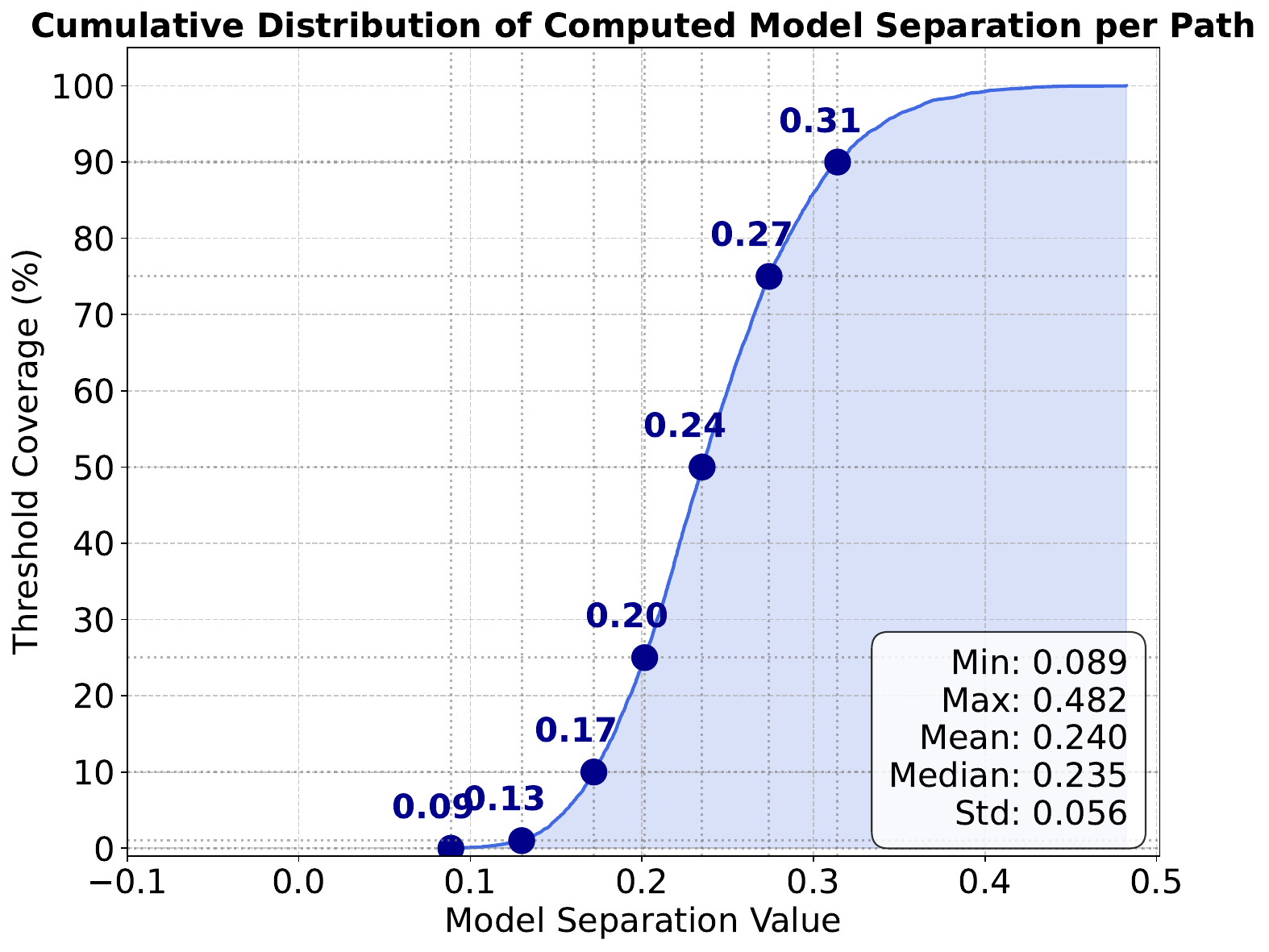}
	\caption{Cumulative distribution of model separation values. Shows separation magnitudes for path-selected neurons in fine-grained breed classification with Scottish Deerhound as the shared class.}
	\label{fig:meanneuron-v3}
\end{figure}

\begin{figure}[t]
	\centering
	\includegraphics[width=0.65\linewidth]{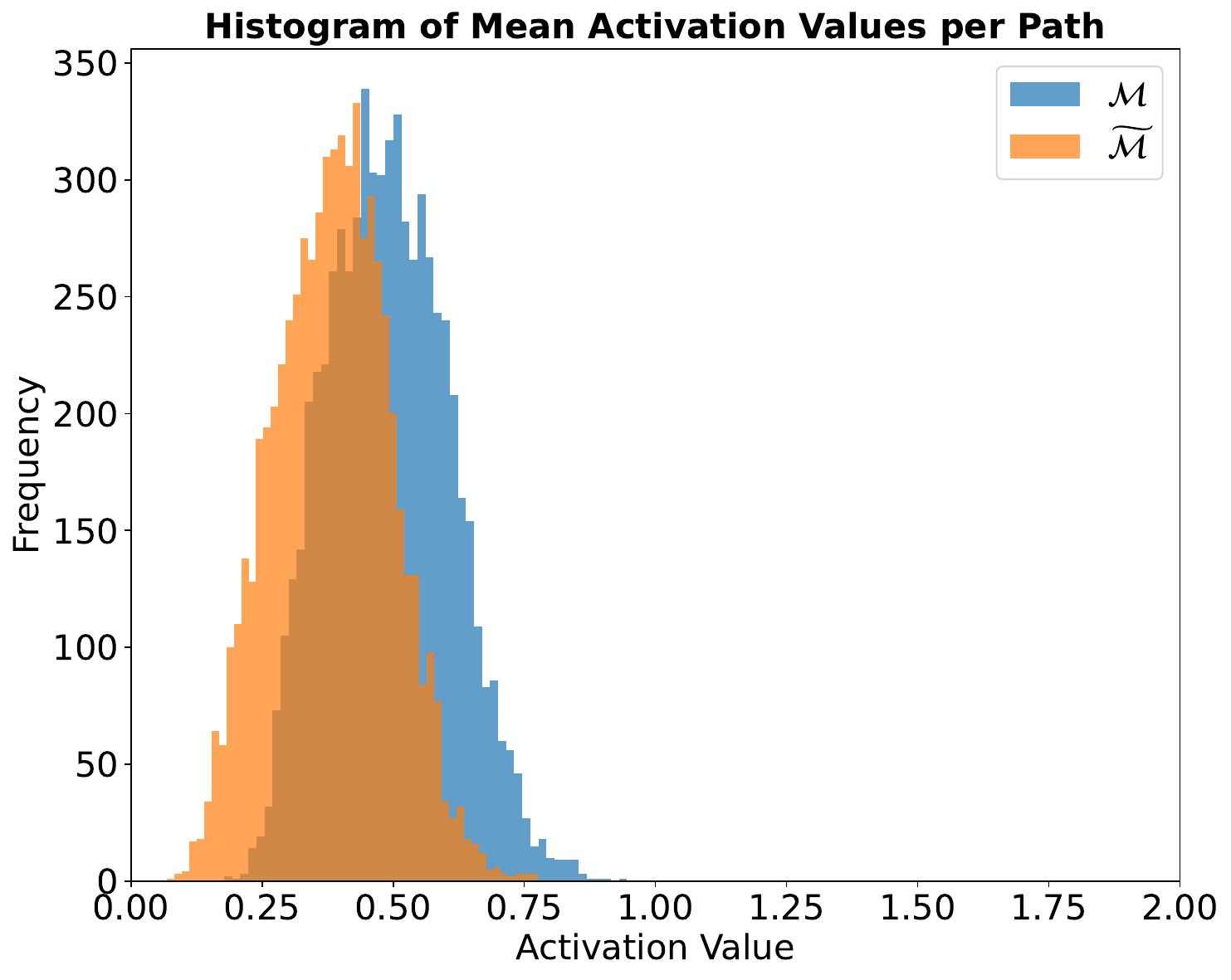}
	\caption{Mean activation values per path for fine-grained dog breed classification models. Each point represents the average activation across layers for one image-round pair in the breed classification experiment.}
	\label{fig:histogram-v3}
\end{figure}

Our second experiment investigates if we can see trace separation for models trained on finer-grained characteristics than our dog-cat-squirrel experimental setup. So, instead of animal classification, we conducted dog breed classification using images from the Stanford Dogs Dataset.\footnote{\url{http://vision.stanford.edu/aditya86/ImageNetDogs/}} Specifically, $\model$ distinguished between Entlebucher and Scottish Deerhound breeds, while $\advmodel$ classified Bernese Mountain Dog versus Scottish Deerhound breeds. The shared Scottish Deerhound class enables comparison analogous to our trace separation setup in \cref{sec:resnet-sep}. Despite the visual similarity between these dog breeds, we observed robust trace separation across 26 test images evaluated over 250 rounds each (6,500 paths total), with minimum separation of 0.089, mean of 0.240, and median of 0.235. 

As shown in Figure~\ref{fig:violin-v3}, the activation distributions reveal minimal divergence in early layers (conv1: 0.000) but substantial separation in deeper layers, with peak divergence at layer3.1.conv2 (JS 0.287). Remarkably, the final classifier exhibited highest JS (0.719), indicating that even fine-grained classification tasks induce strong divergence in decision boundaries. The cumulative distribution (Figure~\ref{fig:meanneuron-v3}) and mean activation analysis (Figure~\ref{fig:histogram-v3}) confirm consistent separation patterns across all inference paths~-- even greater than the ones in \cref{sec:resnet-sep}. The fine-grained tasks clearly amplify separation, likely due the models developing highly structured decision boundaries. Our results in this experiment establish trace separation as a general property detectable even on finer-grained classification tasks.

\section{Strong Other-Model Soundness}
\label{sec:strong-om-snd}

In this section we define \emph{strong other-model soundness}, a security notion that sits between other-model soundness and full soundness. Like other-model soundness, it restricts the adversary's \emph{output} to that of a specific alternative model $\advmodel$. However, unlike other-model soundness---where the adversary must commit to the honest evaluation trace of $\advmodel$---strong other-model soundness allows the adversary to construct the trace \emph{arbitrarily}, as in the full soundness setting.

This notion captures a natural threat scenario: an adversary who runs a different model to produce the output (see \cref{sec:motivation-other-model-snd}), but who may attempt to craft a fake trace that passes the verification test (rather than simply submitting the honest trace of $\advmodel$). It is strictly stronger than other-model soundness (since the adversary has more freedom in choosing the trace) and strictly weaker than full soundness (since the adversary's claimed output is constrained to $\advmodel(\qry)$ rather than being arbitrary).

\subsection{Idealized Definition}

We extend \cref{def:idealized} with the following additional soundness notion.

\paragraph{Strong other-model soundness.} Let $\sndeps: \NN \to \RR$ be a function and let $\left( \Dmodel, \Dqry, \Delta \right)$ be as in \cref{def:idealized} and $\advDmodel$ be a model distribution (potentially distinct from $\Dmodel$). We say that the scheme satisfies $(\delta,\sndeps)$-strong-other-model-soundness w.r.t.\ $\left( \Dmodel, \Dqry, \Delta \right)$ and $\advDmodel$ if for any adversary $\adv$ and for any security parameter $\lambda \in \NN$, we have:
\begin{align*}
& \multilinePr{
	& \Delta[\model(\qry),\advmodel(\qry)]>\delta \;  \wedge \\
	& \outfn(\advtrace) = \advmodel(\qry) \; \wedge \\
	& \testTrace^{\model, \advtrace}\left( 1^\lambda, \qry \right) = 1
}{
	& \model \sample \Dmodel\\
	& \advmodel \sample \advDmodel\\
	& \qry \sample \Dqry\\
	& \advtrace  \gets \adv\left( 1^\lambda, \model, \qry, \advmodel \right)
} \leq \sndeps(\lambda)
\end{align*}

Informally, strong other-model soundness states that, except with $\sndeps$ probability, no adversary can fool the test into accepting a trace whose output matches that of a different model $\advmodel$, even when the adversary is free to construct the trace in any way it chooses. The key difference from other-model soundness is in the last line of the sampling: the trace $\advtrace$ is produced by an adversary $\adv$ (with access to $\model$, $\qry$, and $\advmodel$) rather than being the honest evaluation $\Eval(\advmodel, \qry)$. The only constraint we impose for the adversary's output is that the trace it produces should be consistent with the output of the ``wrong'' model $\advmodel$.

\begin{remark}[Relation to other notions]
\label{rem:strong-om-relations}
The following implications hold:
\begin{itemize}
	\item Full soundness implies strong other-model soundness: any adversary in the strong other-model soundness game can be simulated by a full soundness adversary that ignores the constraint on the output. More precisely, a full soundness adversary can sample $\advmodel \sample \advDmodel$ on its own and run the strong-other-model-soundness adversary internally.
	\item Strong other-model soundness implies other-model soundness: the other-model soundness adversary is a special case where the trace is restricted to $\advtrace = \Eval(\advmodel, \qry)$.
\end{itemize}
\end{remark}

\subsection{Cryptographic Definition}

We similarly extend \cref{def:verifiable-proper}.

\paranoindent{Strong other-model soundness.} Let $\sndeps: \NN \to \RR$ be a function and let $\left( \Dmodel, \Dqry, \Delta \right)$ be as in \cref{def:verifiable-proper} and $\advDmodel$ be a model distribution (potentially distinct from $\Dmodel$). We say that the scheme satisfies $(\delta,\sndeps)$-strong-other-model-soundness w.r.t.\ $\left( \Dmodel, \Dqry, \Delta \right)$ and $\advDmodel$ if for any efficient adversary $\adv = \left( \adv_1, \adv_2 \right)$ and for any security parameter $\lambda \in \NN$, we have:
\begin{align*}
	& \multilinePr{
		& \Delta[\model(\qry),\advmodel(\qry)]>\delta \  \wedge \\
	&\ \verify(\pp, \cm, \qry, \tilde{y},\!\\
	& \ \qquad \left(  \pi_1, \pi_2 \right), \rho ) \!=\! 1\!
	}{
		& \model \sample \Dmodel\\
		&\pp \gets \genparams(1^\lambda)\\
		&  \cm \gets \modelcommit\left( \pp, \model \right)\\
		& \advmodel \sample \advDmodel\\
		& \qry \sample \Dqry\\
		& \tilde{y} := \advmodel\left( \qry \right)\\
		&  \left( \pi_1, \mathsf{state} \right)  \gets \adv_1\left( \pp, \model, \qry, \advmodel \right)\\
		& \rho \sample \chalSpace\\
		& \pi_2 \gets \adv_2(\mathsf{state}, \rho)
	} \leq \sndeps(\lambda)
\end{align*}

Note that, compared to the other-model soundness definition of \cref{def:verifiable-proper}, the proof messages $(\pi_1, \pi_2)$ are now produced by an adversary $\adv$ rather than by the honest $\prove$ algorithm run on $\advmodel$. However, the claimed output $\tilde{y}$ remains fixed to $\advmodel(\qry)$.
\section{Estimating thresholds and probabilities for other-model soundness}
\label{appx:methods}

In this section, we describe a method for estimating the threshold values used in Definitions 2 and 3. The procedure estimates the \textit{Trace Separation} parameters ($\delta_{sep}^{out}, \delta_{sep}^{trace}$) and validates the \textit{Test Soundness} ($\teps$). The approach we describe here works for both classification and LLM tasks.

Here, we present the simplified case where there is a single honest model $\mathcal{M}$ and a single adversarial model $\tilde{\mathcal{M}}$. In practice, the adversarial model is sampled from a distribution $\advDmodel$ (e.g., quantized models, pruned models, or specific adversarial fine-tunings)\footnote{See also discussion in \cref{sec:motivation-other-model-snd}.}. To ensure robust guarantees, the estimation procedure described below should be repeated on a diverse set of adversarial models $\{\tilde{\mathcal{M}}_1, \dots, \tilde{\mathcal{M}}_K\}$ sampled from $\advDmodel$. The final threshold parameters $(\delta_{sep}^{out}, \delta_{sep}^{trace})$ must be selected as the minimum observed separation values across all $K$ \footnote{K represents the number of distinct variations in the adversarial model. It's set to a small number (e.g., 5-10).} instances to ensure that the \texttt{RandPathTest} is robust against worst case models.

\subsection{Preliminaries and statistical layer selection}

We start the estimation procedure by sampling a large set of $N$ queries $\{qry_j\}_{j=1,\dots,N}$ and to generate the logit vectors and activation traces for both the honest model $\mathcal{M}$ and adversarial model $\tilde{\mathcal{M}}$.

\begin{enumerate}
    \item \textbf{Statistical Layer Filtering:} We first compute the Jensen-Shannon (JS) divergence between the activation distributions of $\mathcal{M}$ and $\tilde{\mathcal{M}}$ for each layer. We define the set of valid layers $\mathcal{L}_{valid}$ as those where $JS > 0.05$, excluding noisy input layers. The JS threshold is chosen based on empirical observations where significant layers consistently exhibited divergence values well above this floor (e.g., $>0.09$ in ResNet experiments, as detailed in \cref{sec:resnet-sep}).
    
    \item \textbf{Metric 1: Output Difference ($dOut$).} Euclidean distance between logit vectors:
    \begin{equation*}
        dOut(j) := ||\mathbf{y}_j - \tilde{\mathbf{y}}_j||_2
    \end{equation*}
    
    \item \textbf{Metric 2: Trace Difference ($dTrc$).} Mean absolute difference over $\mathcal{L}_{valid}$. Let $D_l$ be the dimension of layer $l$:
    \begin{equation*}
        dTrc(j) := \frac{1}{|\mathcal{L}_{valid}|} \sum_{l \in \mathcal{L}_{valid}} \frac{1}{D_l} ||trc_{j,l} - \tilde{trc}_{j,l}||_1
    \end{equation*}
\end{enumerate}

We end up with a set of $N$ tuples
$$\mathcal{D} = \{(dOut(j), dTrc(j))\}_{j=1,\dots,N}$$

\subsection{Generating candidate thresholds}

We first identify candidate pairs $\mathcal{C} \subseteq \mathcal{D}$ that satisfy the separation property with high confidence ($1-\epsilon_{sep}$). Algorithm ~\ref{alg:sep_estimation} takes as input the dataset we prepared from the queries and returns a subset of candidate threshold values for which the trace separation condition holds.

\begin{algorithm}
\caption{Generating Candidates given target $\epsilon_{sep}$}
\label{alg:sep_estimation}
\begin{algorithmic}[1]
\State \textbf{Input:} Dataset $ \mathcal{D} = \{(dOut(j), dTrc(j))\}_{j=1}^N$, Target $\epsilon_{sep}$ (e.g., 0.01), Noise Floor $\delta_{noise}$ ($10^{-4}$)
\State Sort dataset by $dOut(j)$.
\State candidates $\leftarrow []$
\For{each unique value $x$ in sorted $dOut$}
    \State Let $S_x$ be the subset of samples where $dOut(j) \ge x$
    \If{$|S_x|$ is too small} \textbf{break} \EndIf
    \State Let $q$ be the $\epsilon_{sep}$-quantile of $dTrc$ values in $S_x$
    \If{$q > \delta_{noise}$}
        \State Append pair $(x, q)$ to candidates
    \EndIf
\EndFor
\State \textbf{Output:} List of \texttt{candidates}
\end{algorithmic}
\end{algorithm}

\subsection{Estimating path test soundness}

We now describe a procedure that given a specific trace threshold $\delta$ estimates the probability that the path test fails to catch an incorrect trace that differs by at least $\delta$. Algorithm ~\ref{alg:tst_estimation} takes as input the dataset $\mathcal{D}$, a given trace separation value $\delta$, and returns an estimate of the probability of failure of the procedure $\sf RandPathTest$.

\begin{algorithm}
\caption{Estimating $\epsilon_{tst}$ for a fixed $\delta$}
\label{alg:tst_estimation}
\begin{algorithmic}[1]
\State \textbf{Input:} Dataset $\mathcal{D}$, Threshold $\delta$, Repetitions $T$ (e.g., 50)
\State $SumFailProb \leftarrow 0$, $CountValid \leftarrow 0$
\For{$j = 1, \dots, N$}
    \If{$dTrc(j) \ge \delta$} \Comment{Consider only traces with a difference $\ge \delta$}
        \State $CountValid \leftarrow CountValid + 1$
        \State $FailCount \leftarrow 0$
        \For{$k = 1, \dots, T$}
            \State Run \texttt{RandPathTest} on $(\mathcal{M}, \tilde{trc}_j)$
            \If{Test \textbf{Accepts} (False Negative)}
                \State $FailCount \leftarrow FailCount + 1$
            \EndIf
        \EndFor
        \State $ProbFail_j \leftarrow FailCount / T$ \Comment{Estimate for sample $j$}
        \State $SumFailProb \leftarrow SumFailProb + ProbFail_j$
    \EndIf
\EndFor
\State $\hat{\epsilon}_{tst} \leftarrow SumFailProb / CountValid$ \Comment{Overall estimate}
\State \textbf{Output:} $\hat{\epsilon}_{tst}$
\end{algorithmic}
\end{algorithm}

\subsection{Final parameter selection procedure}
We now connect \cref{alg:sep_estimation,alg:tst_estimation} to select the optimal parameters. We seek the tightest Output Difference ($\delta_{out}$) that enforces a Trace Difference ($\delta_{trace}$) large enough for our $\sf RandPathTest$ test to detect reliably (low $\epsilon_{tst}$).

\begin{enumerate}
    \item Fix the maximum acceptable test error $\epsilon_{target}$ (e.g., 0.05).
    \item Run \textbf{Algorithm 1} to get the list $\mathcal{C}$ of candidates.
    \item Iterate through $\mathcal{C}$:
    \begin{itemize}
        \item Let the current pair be $(\delta_{out}, \delta_{trace})$.
        \item Run \textbf{Algorithm 2} using $\delta = \delta_{trace}$.
        \item If the resulting $\hat{\epsilon}_{tst} \le \epsilon_{target}$, select this pair and \textbf{Stop}.
    \end{itemize}
    \item \textbf{Result:} The selected $(\delta_{out}, \delta_{trace})$ are the final parameters. The protocol guarantees a soundness error of approximately $(\epsilon_{sep} + \epsilon_{tst})$.
\end{enumerate}

\end{document}